\newif\ifsubsections
\newlength{\defbaselineskip}
	\definecolor{linkred}{rgb}{0.0,0.0,0.0}
	\definecolor{linkblue}{rgb}{0,0.0,0.0}
\theoremstyle{plain}
\theoremstyle{definition}
\newtheorem{definition}[equation]{Definition}
\newtheorem{Remark}[equation]{Remark}{\rmfamily}
\newtheorem{lemma}[equation]{Lemma}
\newtheorem{theorem}[equation]{Theorem}
\newcommand{\Probab}[1]{\mbox{}{\bf{Pr}}\left[#1\right]} 
\newcommand{\Expect}[1]{\mbox{}{\bf{E}}\left[#1\right]} 
\newcommand{\Var}[1]{\mbox{}{\bf{Var}}\left[#1\right]} 
\newcommand{\Trace }[1]{\mbox{}{\bf{Tr}}\left(#1\right)} 
\newcommand{\ONorm }[1]{\mbox{}\|#1\|_{1}} 
\newcommand{\FNorm }[1]{\mbox{}\|#1\|_F  } 
\newcommand{\FNormS}[1]{\mbox{}\|#1\|_F^2} 
\newcommand{\TNorm}[1]{\mbox{}\|#1\|_2} 
\newcommand{\TNormS}[1]{\mbox{}\|#1\|_2^2} 
\newcommand{\PNorm }[1]{\mbox{}\|#1\|_p  } 
\newcommand{\VINorm }[1]{\mbox{}\|#1\|_{\infty}  } 
\newcommand{\abs }[1]{\left|#1\right|} 
\newcommand{\bigO}[1]{O(#1)} 
\newcommand{\rank}[1]{\mbox{rank}(#1)} 
\newcommand{\range}[1]{\mbox{range}(#1)} 
\newcommand{\rnull}[1]{\mbox{null}(#1)} 
\newcommand{\nnz}[1]{\mbox{nnz}(#1)} 
\newcommand{\bA}{\mathbf{A}}
\newcommand{\bD}{\boldsymbol{D}}
\newcommand{\bd}{\boldsymbol{d}}
\newcommand{\bV}{\boldsymbol{V}}
\newcommand{\bv}{\boldsymbol{v}}
\newcommand{\be}{\boldsymbol{e}}
\newcommand{\bU}{\boldsymbol{U}}
\newcommand{\bu}{\boldsymbol{u}}
\newcommand{\bQ}{\boldsymbol{Q}}
\newcommand{\bW}{\boldsymbol{W}}
\newcommand{\bP}{\boldsymbol{P}}
\newcommand{\bX}{\boldsymbol{X}}
\newcommand{\bx}{\boldsymbol{x}}
\newcommand{\bY}{\boldsymbol{Y}}
\newcommand{\by}{\boldsymbol{y}}
\newcommand{\bZ}{\boldsymbol{Z}}
\newcommand{\bz}{\boldsymbol{z}}
\newcommand{\bI}{\boldsymbol{I}}
\newcommand{\bB}{\boldsymbol{B}}
\newcommand{\bvb}{\boldsymbol{b}}
\newcommand{\bC}{\boldsymbol{C}}
\newcommand{\bH}{\boldsymbol{H}}
\newcommand{\bR}{\boldsymbol{R}}
\newcommand{\bS}{\boldsymbol{S}}
\newcommand{\bSigma}{\boldsymbol{\Sigma}}
\newcommand{\bPhi}{\boldsymbol{\Phi}}
\newcommand{\bzero}{\boldsymbol{0}}
\newcommand{\bone}{\boldsymbol{1}}
\newcommand{\Rs}[1]{\mathbb{R}^{#1}} 
\begin{document}
%
%
%
%
%
%
\title{Lectures on Randomized Numerical Linear Algebra%
\footnote{This chapter is based on lectures from the 2016 Park City Mathematics Institute summer school on \emph{The Mathematics of Data}, and it appears as a chapter~\cite{pcmi-chapter-randnla} in the edited volume of lectures from that summer school.}
}
%
%
\author{
Petros Drineas
\thanks{
Purdue University, Computer Science Department, 305 N University Street, West Lafayette, IN 47906.
Email: \texttt{pdrineas@purdue.edu}.
}
\and
Michael W. Mahoney
\thanks{
University of California at Berkeley, ICSI and Department of Statistics, 367 Evans Hall, Berkeley, CA 94720.
Email: \texttt{mmahoney@stat.berkeley.edu}.
}
}
%
%
%
%
\date{}
\maketitle
%
%
\tableofcontents

\section{Introduction}\label{sxn:intro}

Matrices are ubiquitous in computer science, statistics, and applied mathematics. An $m \times n$ matrix can encode information about $m$ objects (each described by $n$ features), or the behavior of a discretized differential operator on a finite element mesh; an $n \times n$ positive-definite matrix can encode the correlations between all pairs of $n$ objects, or the edge-connectivity between all pairs of $n$ nodes in a social network; and so on. Motivated largely by technological developments that generate extremely large scientific and Internet data sets, recent years have witnessed exciting developments in the theory and practice of matrix algorithms. Particularly remarkable is the use of \textit{randomization}---typically assumed to be a property of the input data due to, e.g., noise in the data generation mechanisms---as an algorithmic or computational resource for the development of improved algorithms for fundamental matrix problems such as matrix multiplication, least-squares (LS) approximation, low-rank matrix approximation,~etc.

Randomized Numerical Linear Algebra (RandNLA) is an interdisciplinary research area that exploits randomization as a computational resource to develop improved algorithms for large-scale linear algebra problems. From a foundational perspective, RandNLA has its roots in theoretical
computer science (TCS), with deep connections to mathematics (convex analysis, probability theory, metric embedding theory) and applied mathematics (scientific computing, signal processing, numerical linear algebra). From an applied perspective, RandNLA is a vital new tool for
machine learning, statistics, and data analysis. Well-engineered implementations have already outperformed highly-optimized software libraries for ubiquitous problems such as least-squares regression, with good scalability in parallel and distributed environments. Moreover, RandNLA promises a sound algorithmic and statistical foundation for modern large-scale data analysis.

This chapter serves as a self-contained, gentle introduction to three fundamental RandNLA algorithms: randomized matrix multiplication, randomized least-squares solvers, and a randomized algorithm to compute a low-rank approximation to a matrix.
As such, this chapter has strong connections with many areas of applied mathematics, and in particular it has strong connections with several other chapters in this volume. Most notably, this includes that of G.~Martinsson, who uses these methods to develop improved low-rank matrix approximation solvers~\cite{pcmi-chapter-martinsson}; R.~Vershynin, who develops probabilistic tools that are used in the analysis of RandNLA algorithms~\cite{pcmi-chapter-vershynin}; J.~Duchi, who uses stochastic and randomized methods in a complementary manner for more general optimization problems~\cite{pcmi-chapter-duchi}; and M.~Maggioni, who uses these methods as building blocks for more complex multiscale methods~\cite{pcmi-chapter-maggioni}.

We start this chapter with a review of basic linear algebraic facts in Section~\ref{sxn:introla}; we review basic facts from discrete probability in Section~\ref{sxn:dp}; we present a randomized algorithm for matrix multiplication in Section~\ref{chapter:MM}; we present a randomized algorithm for least-squares regression problems in Section~\ref{sxn:main:regression}; and finally we present a randomized algorithm for low-rank approximation in Section~\ref{sxn:main:lowrank}.
We conclude this introduction by noting that~\cite{Drineas2016,Mah-mat-rev_BOOK} might also be of interest to a reader who wants to go through other introductory texts on RandNLA.

\section{Linear Algebra}\label{sxn:introla}
 
In this section, we present a brief overview of basic linear algebraic facts and notation that will be useful in this chapter. We assume basic familiarity with linear algebra (e.g., inner/outer products of vectors, basic matrix operations such as addition, scalar multiplication, transposition, upper/lower triangular matrices, matrix-vector products, matrix multiplication, matrix trace, etc.).

\subsection{Basics.}\label{sxn:labasics}
We will entirely focus on matrices and vectors over the \textit{reals}. We will use the notation $\bx \in \Rs{n}$ to denote an $n$-dimensional vector: notice the use of bold latin \textit{lowercase} letters for vectors. Vectors will always be assumed to be column vectors, unless explicitly noted otherwise. The vector of all zeros will be denoted as $\bzero$, while the vector of all ones will be denoted as $\bone$; dimensions will be implied from context or explicitly included as a subscript.

We will use bold latin \textit{uppercase} letters for matrices, e.g., $\bA \in \mathbb{R}^{m \times n}$ denotes an $m \times n$ matrix $\bA$. We will use the notation $\bA_{i*}$ to denote the $i$-th row of $\bA$ as a row vector and $\bA_{*i}$ to denote the $i$-th column of $\bA$ as a column vector. The (square) identity matrix will be denoted as $\bI_n$ where $n$ denotes the number of rows and columns. Finally, we use $\be_i$ to denote the $i$-th column of $\bI_n$, i.e., the $i$-th \textit{canonical} vector.

\noindent\textbf{Matrix Inverse.} A matrix $\bA\in\Rs{n\times n}$ is nonsingular or invertible if there exists a matrix $\bA^{-1} \in \Rs{n \times n}$ such that
\[\bA\bA^{-1}=\bI_{n\times n}=\bA^{-1}\bA.\]
The inverse exists when all the columns (or all the rows) of $\bA$ are linearly independent. 
In other words, there does not exist a non-zero vector  $\bx \in \Rs{n}$ such that $\bA\bx = \bzero$. 
Standard properties of the inverse include: $(\bA^{-1})^\top = (\bA^{\top})^{-1} = \bA^{-\top}$ and $(\bA\bB)^{-1} = \bB^{-1}\bA^{-1}$.

\noindent \textbf{Orthogonal matrix.} A matrix $\bA\in\Rs{n\times n}$ is orthogonal if $\bA^\top = \bA^{-1}$.
Equivalently, for all $i$ and $j$ between one and $n$,
\[ \bA_{*i}^\top\bA_{*j} =
\begin{cases}
0,       &  \text{if }i\neq j\\
1,       &  \text{if } i = j
\end{cases}.\]
The same property holds for the rows of $\bA$. In words, the columns (rows) of $\bA$ are pairwise orthogonal and normal vectors.

\noindent\textbf{QR Decomposition.}
Any matrix $\bA \in \Rs{n\times n}$ can be decomposed into the product of an orthogonal matrix and an upper triangular matrix as:
\[\bA = \bQ\bR,\]
where $\bQ \in \Rs{n\times n}$ is an orthogonal matrix and $\bR \in \Rs{n\times n}$ is an upper triangular matrix.
The QR decomposition is useful in solving systems of linear equations, has computational complexity $\bigO{n^3}$, and is numerically stable.
To solve the linear system $\bA \bx = \bvb$ using the QR decomposition we first premultiply both sides by $\bQ^\top$, thus getting $\bQ^\top\bQ\bR\bx =\bR\bx= \bQ^\top\bvb$.
Then, we solve $\bR\bx = \bQ^\top\bvb$ using backward substitution~\cite{GVL96}.

\subsection{Norms.}
Norms are used to measure the size or mass of a matrix or, relatedly, the length of a vector. They are functions that map an object from $\Rs{m \times n}$ (or $\Rs{n}$) to $\Rs{}$. Formally:
\begin{definition}Any function, $\|\cdot\|$: $\Rs{m\times n} \rightarrow \Rs{}$ that satisfies the following properties is called a {\bf norm}:
\begin{enumerate}
	\item Non-negativity: $\|\bA\|\geq 0$; $\|\bA\|= 0$ if and only if $\bA = \bzero$.
	\item Triangle inequality: $\|\bA+\bB\| \leq \|\bA\|+\|\bB\|$.
	\item Scalar multiplication: $\|\alpha \bA\| = |\alpha|\|\bA\|$, for all $\alpha\in\Rs{}$.
\end{enumerate}\end{definition}
\noindent The following properties are easy to prove for any norm: $\|-\bA\| = \|\bA\|$ and \[|\|\bA\|-\|\bB\||\leq \|\bA-\bB\|.\] The latter property is known as the reverse triangle inequality.

\subsection{Vector norms.} Given $\bx \in \Rs{n}$ and an integer $p\geq 1$, we define the vector $p$-norm as:
\[\PNorm{\bx} = \left(\sum_{i=1}^n|x_i|^p\right)^{1/p}.\]
The most common vector $p$-norms are:
\begin{itemize}
	\item One norm: $\ONorm{\bx} = \sum_{i=1}^n|x_i|$.
	\item Euclidean (two) norm: $\TNorm{\bx} = \sqrt{\sum_{i=1}^n|x_i|^2} = \sqrt{\bx^\top\bx}$.
	\item Infinity (max) norm: $\VINorm{\bx} = \max_{1\leq i \leq n}|x_i|$.
\end{itemize}
Given $\bx,\by \in \Rs{n}$ we can bound the inner product $\bx^\top\by = \sum_{i=1}^n x_iy_i$ using $p$-norms. The Cauchy-Schwartz inequality states that:
\[|\bx^\top\by| \leq\TNorm{\bx}\TNorm{\by}.\]
In words, it gives an upper bound for the inner product of two vectors in terms of the Euclidean norm of the two vectors. H\"{o}lder's inequality states that
\[|\bx^\top\by|\leq\ONorm{\bx}\VINorm{\by}\quad \text{and}\quad |\bx^\top\by|\leq\VINorm{\bx}\ONorm{\by}.\]
The following inequalities between common vector $p$-norms are easy to prove:
\begin{align*}
\VINorm{\bx}\leq   \ONorm{\bx} &\leq n\VINorm{\bx},\\
\TNorm{\bx}\leq         \ONorm{\bx} &\leq \sqrt{n}\TNorm{\bx},\\
\VINorm{\bx}\leq    \TNorm{\bx} &\leq \sqrt{n}\VINorm{\bx}.
\end{align*}
Also, $\TNorm{\bx}^2 = \bx^T\bx$.
We can now define the notion of orthogonality for a pair of vectors and state the Pythagorean theorem.
\begin{theorem}\label{thm:pythagoras}
Two vectors $\bx,\by\in \Rs{n}$ are orthogonal, i.e., $\bx^\top\by = 0$, if and only~if
\[\TNormS{\bx\pm\by} =\TNormS{\bx}+\TNormS{\by}.\]
\end{theorem}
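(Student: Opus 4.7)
The plan is to expand the squared Euclidean norm $\TNormS{\bx\pm\by}$ directly using the identity $\TNormS{\bz} = \bz^\top\bz$ recorded just above the theorem statement. Writing $\bz = \bx\pm\by$ and distributing the transpose product, I would obtain
\[
\TNormS{\bx\pm\by} = (\bx\pm\by)^\top(\bx\pm\by) = \bx^\top\bx \pm \bx^\top\by \pm \by^\top\bx + \by^\top\by.
\]
Since the inner product on $\Rs{n}$ is symmetric, $\bx^\top\by = \by^\top\bx$, so this simplifies to
\[
\TNormS{\bx\pm\by} = \TNormS{\bx} + \TNormS{\by} \pm 2\,\bx^\top\by.
\]
This single identity is the engine for both directions of the iff.

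For the forward direction, I would assume $\bx^\top\by = 0$ and read off $\TNormS{\bx\pm\by} = \TNormS{\bx} + \TNormS{\by}$ immediately from the displayed identity. For the reverse direction, I would assume $\TNormS{\bx\pm\by} = \TNormS{\bx} + \TNormS{\by}$ and subtract $\TNormS{\bx} + \TNormS{\by}$ from both sides to conclude $\pm 2\,\bx^\top\by = 0$, hence $\bx^\top\by = 0$. Because the statement uses $\pm$, I would note that either sign choice suffices: whichever of ``$+$'' or ``$-$'' is assumed in the norm equality, the corresponding sign appears in front of $2\bx^\top\by$, and dividing by the nonzero constant $\pm 2$ yields orthogonality.

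There is no real obstacle here; the only minor bookkeeping is being explicit about what ``$\pm$'' means in the statement (namely, that the equivalence holds for each choice of sign separately, and in fact the two sign choices are equivalent to each other once either is known). I would flag this briefly so the reader does not misread ``$\pm$'' as asserting two separate hypotheses. Everything else is one line of algebra on top of the definition $\TNormS{\bz} = \bz^\top\bz$ and the symmetry of the Euclidean inner product.
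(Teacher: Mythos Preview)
Your proof is correct and is the standard argument. The paper itself does not supply a proof of this theorem; it is stated as a basic review fact in the linear algebra preliminaries, so there is nothing further to compare against.
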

Theorem~\ref{thm:pythagoras} is also known as the Pythagorean Theorem. Another interesting property of the Euclidean norm is that it does not change after pre(post)-multiplication by a matrix with orthonormal columns (rows).
\begin{theorem}
	Given a vector $\bx\in\Rs{n}$ and a matrix $\bV \in \Rs{m\times n}$ with $m \ge n$ and $\bV^\top\bV = \bI_n$:
	\[\TNorm{\bV\bx} =\TNorm{\bx}\quad \mbox{and} \quad \TNorm{\bx^T\bV^T} =\TNorm{\bx}.\]
\end{theorem}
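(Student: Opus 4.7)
The plan is to reduce both equalities to a single algebraic identity by squaring the Euclidean norm, so that the orthonormality hypothesis $\bV^\top \bV = \bI_n$ can be applied directly. Since the Euclidean norm is non-negative, proving equality of squared norms is equivalent to proving equality of norms, and squaring converts the problem into manipulating inner products of the form $\by^\top \by$, which interact well with matrix transposition and multiplication.

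For the first equality, I would begin by writing $\TNormS{\bV\bx} = (\bV\bx)^\top(\bV\bx)$ using the identity $\TNormS{\by} = \by^\top \by$ noted just before the theorem. Then I would use the standard transpose rule $(\bV\bx)^\top = \bx^\top \bV^\top$ together with associativity of matrix multiplication to obtain
\[
\TNormS{\bV\bx} = \bx^\top \bV^\top \bV \bx.
\]
At this point the hypothesis $\bV^\top \bV = \bI_n$ collapses the middle factor, giving $\bx^\top \bI_n \bx = \bx^\top \bx = \TNormS{\bx}$. Taking (non-negative) square roots yields $\TNorm{\bV\bx} = \TNorm{\bx}$.

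For the second equality, I would observe that $\bx^\top \bV^\top = (\bV\bx)^\top$, i.e., it is simply the row-vector form of the column vector $\bV\bx$. Since the Euclidean norm of a vector depends only on the sums of squares of its entries and is therefore insensitive to whether we view it as a row or column, we have $\TNorm{(\bV\bx)^\top} = \TNorm{\bV\bx}$, and the first part then finishes the argument.

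The main obstacle, if one can call it that, is purely notational: one must be careful that $\bV^\top \bV = \bI_n$ is being used in the correct order, since $m \ge n$ does not allow us to assume $\bV \bV^\top = \bI_m$ (in general $\bV \bV^\top$ is only an orthogonal projector onto the column space of $\bV$). Keeping the factors in the order $\bV^\top \bV$ that appears in the hypothesis avoids this pitfall, and there is no further subtlety in the argument.
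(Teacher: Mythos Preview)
Your argument is correct. The paper states this theorem as a background fact in its linear algebra review and does not supply a proof; the computation you outline, expanding $\TNormS{\bV\bx}=\bx^\top\bV^\top\bV\bx=\bx^\top\bx$, is exactly the standard justification and is consistent with the identity $\TNormS{\bx}=\bx^\top\bx$ the paper records just before the statement.
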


\subsection{Induced matrix norms.} Given a matrix $\bA\in \Rs{m\times n}$ and an integer $p\geq 1$ we define the matrix $p$-norm as:
	\[\PNorm{\bA} = \max_{\bx\neq 0}{\frac{\PNorm{\bA\bx}}{\PNorm {\bx}}} = \max_{\PNorm {\bx} = 1}{\PNorm{\bA\bx}}.\]
	The most frequent matrix $p$-norms are:
	\begin{itemize}
		\item One norm: the maximum absolute column sum,
		\[\ONorm{\bA} = \max_{1\leq j\leq n}\sum_{i=1}^m{|\bA_{ij}|} = \max_{1\leq j \leq n}\ONorm{\bA \be_j}.\]
		\item Infinity norm: the maximum absolute row sum,
			\[\VINorm{\bA} = \max_{1\leq i\leq m}\sum_{j=1}^n{|\bA_{ij}|} = \max_{1\leq j \leq m}\ONorm{\bA^\top \be_i}.\]
			\item Two (or spectral) norm :
			\[\TNorm{\bA} = \max_{\TNorm{\bx} = 1}\|\bA\bx\|_2 =  \max_{\TNorm{\bx} = 1} \sqrt{\bx^\top\bA^\top\bA\bx } .\]
	\end{itemize}
This family of norms is named ``induced'' because they are realized by a non-zero vector $\bx$ that varies depending on $\bA$ and $p$. Thus, there exists a unit norm vector (unit norm in the $p$-norm) $\bx$ such that
$\PNorm{\bA} = \PNorm{\bA\bx}.$
The induced matrix $p$-norms follow the submultiplicativity laws:
\[\PNorm{\bA\bx} \leq \PNorm{\bA}\PNorm{\bx}\qquad \mbox{and} \qquad \PNorm{\bA\bB}\leq \PNorm{\bA}\PNorm{\bB}.\]
%
Furthermore, matrix $p$-norms are invariant to permutations:
$\PNorm{\bP\bA\bQ} = \PNorm{\bA},$
where $\bP$ and $\bQ$ are permutation matrices of appropriate dimensions. Also, if we consider the matrix with permuted rows and columns
\[\bP\bA\bQ = \begin{pmatrix}\bB & \bA_{12}\\ \bA_{21} & \bA_{22} \end{pmatrix}  ,  \]
then the norm of the submatrix is related to the norm of the full unpermuted matrix as follows:
$\PNorm{\bB}\leq \PNorm{\bA}.$
The following relationships between matrix $p$-norms are relatively easy to prove. Given a matrix $\bA\in \Rs{m\times n}$,
\begin{align*}
\frac{1}{\sqrt{n}}\VINorm{\bA}  \leq  \TNorm{\bA}  \leq\sqrt{m}\VINorm{\bA}, \\
\frac{1}{\sqrt{m}}\ONorm{\bA}  \leq  \TNorm{\bA}  \leq\sqrt{n}\ONorm{\bA}.
\end{align*}
It is also the case that
$\ONorm{\bA^\top} =  \VINorm{\bA}$ and $\VINorm{\bA^\top}  =\ONorm{\bA}.$
While transposition affects the infinity and one norm of a matrix, it does not affect the two norm, i.e.,
$\TNorm{\bA^\top} = \TNorm{\bA}$.
Also, the matrix two-norm is not affected by pre(post)- multiplication with matrices whose columns (rows) are orthonormal vectors:
$\TNorm{\bU\bA\bV^\top} = \TNorm{\bA},$
where $\bU$ and $\bV$ are orthonormal matrices ($\bU^T\bU=\bI$ and $\bV^T\bV=\bI$) of appropriate dimensions.

\subsection{The Frobenius norm.}\label{sxn:review:FNorm}
The Frobenius norm is not an induced norm, as it belongs to the family of Schatten norms (to be discussed in Section~\ref{sxn:schatten}). 
\begin{definition}
	Given a matrix $\bA \in \Rs{m\times n}$, we define the Frobenius norm as:
	\[\FNorm{\bA} = \sqrt{\sum_{j=1}^n\sum_{i=1}^m{\bA_{ij}^2}} = \sqrt{\Trace{\bA^\top\bA}},\]
	where $\Trace{\cdot}$ denotes the matrix trace (where, recall, the trace of a square matrix is  defined to be the sum of the elements on the main diagonal).
\end{definition}
Informally, the Frobenius norm measures the variance or variability (which can be given an interpretation of size or mass) of a matrix.
Given a vector $\bx \in \Rs{n}$, its Frobenius norm is equal to its Euclidean norm, i.e.,
$\FNorm{\bx} = \TNorm{\bx}$. Transposition of a matrix $\bA\in \Rs{m\times n}$ does not affect its Frobenius norm, i.e., $\FNorm{\bA}=\FNorm{\bA^\top}$.
 Similar to the two norm, the Frobenius norm does not change under permutations or pre(post)- multiplication with a matrix with orthonormal columns (rows):
 \[\FNorm{\bU\bA\bV^T}= \FNorm{\bA},\]
where $\bU$ and $\bV$ are orthonormal matrices ($\bU^T\bU=\bI$ and $\bV^T\bV=\bI$) of appropriate dimensions.
The two and the Frobenius norm can be related by:
\[\TNorm{\bA} \leq \FNorm{\bA} \leq \sqrt{\rm{rank}(\bA)}\TNorm{\bA}\leq\sqrt{\min{\{m,n\}}}\TNorm{\bA}.\]
The Frobenius norm satisfies the so-called strong sub-multiplicativity property, namely:
\[\FNorm{\bA\bB}\leq \TNorm{\bA}\FNorm{\bB} \quad \mbox{and} \quad \FNorm{\bA\bB}\leq \FNorm{\bA}\TNorm{\bB}.\]
Given $\bx \in \Rs{m}$ and $\by \in \Rs{n}$, the Frobenius norm of their outer product is equal to the product of the Euclidean norms of the two vectors forming the outer product:
\[\FNorm{\bx\by^\top} =\TNorm{\bx}\TNorm{\by}.\]
Finally, we state a matrix version of the Pythagorean theorem.
\begin{lemma}[Matrix Pythagoras]\label{l_pyth}
Let $\bA,\bB\in\mathbb{R}^{m\times n}$. If $\bA^T\bB=\bzero$ then
\[\|\bA+\bB\|_F^2=\|\bA\|_F^2+\|\bB\|_F^2.\]
\end{lemma}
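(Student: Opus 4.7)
The plan is to use the trace characterization of the Frobenius norm from Section~\ref{sxn:review:FNorm}, namely $\|\bM\|_F^2 = \Trace{\bM^T\bM}$, together with the linearity of the trace. This reduces the identity to algebraic manipulation of a product of block-like expressions.

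First I would expand $(\bA+\bB)^T(\bA+\bB) = \bA^T\bA + \bA^T\bB + \bB^T\bA + \bB^T\bB$, apply the trace, and use its linearity to split this into four terms. Then I would observe that the hypothesis $\bA^T\bB = \bzero$ kills the cross term $\Trace{\bA^T\bB}$, and that taking the transpose of the hypothesis gives $\bB^T\bA = (\bA^T\bB)^T = \bzero$, so the other cross term $\Trace{\bB^T\bA}$ also vanishes. What remains is $\Trace{\bA^T\bA} + \Trace{\bB^T\bB} = \|\bA\|_F^2 + \|\bB\|_F^2$, which is the desired identity.

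There is essentially no obstacle here: the only subtlety is making sure both cross terms are handled, and that is immediate from transposing the hypothesis. I note that this mirrors the vector Pythagorean theorem (Theorem~\ref{thm:pythagoras}) stated earlier, with inner products of vectors replaced by the $\bA^T\bB$ product of matrices and squared Euclidean norms replaced by squared Frobenius norms via the trace formula.
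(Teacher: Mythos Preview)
Your proof is correct. The paper actually states this lemma without proof, so there is no argument to compare against; your trace-based expansion is the standard and entirely adequate route.
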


\subsection{The Singular Value Decomposition.}\label{sxn:review:SVD}
The Singular Value Decomposition (SVD) is the most important matrix decomposition and exists for every matrix.
\begin{definition}
	Given a matrix $\bA\in\Rs{m\times n}$, we define its full SVD as:
	\[\bA = \bU \bSigma \bV^T
	= \sum_{i=1}^{\min\{m,n\}}\sigma_i\bu_i\bv_i^\top
	,\]
	where $\bU\in\Rs{m\times m}$ and $\bV\in\Rs{n\times n}$ are orthogonal matrices that contain the left and right singular vectors of $\bA$, respectively, and $\bSigma \in \Rs{m\times n}$ is a diagonal matrix, with the singular values of $\bA$ in decreasing order on the diagonal.
\end{definition}
 We will often use $\bu_i$ (respectively, $\bv_j$), $i=1,\ldots, m$ (respectively, $j=1,\ldots, n$) to denote the columns of the matrix $\bU$ (respectively, $\bV$). Similarly, we will use $\sigma_i$, $i=1,\ldots, \min\{m,n\}$ to denote the singular values:
\[\sigma_1\geq \sigma_2\geq\cdots\geq \sigma_{\min\{m,n\}}\geq 0.\]
The singular values of $\bA$ are non-negative and their number is equal to $\min\{m,n\}$. The number of non-zero singular values of $\bA$ is equal to the rank of $\bA$. Due to orthonormal invariance, we get:
\[\bSigma_{\bP\bA\bQ^T} = \bSigma_{\bA},\]
where $\bP$ and $\bQ$ are orthonormal matrices ($\bP^T\bP=\bI$ and $\bQ^T\bQ=\bI$) of appropriate dimensions.
In words, the singular values of $\bP\bA\bQ$ are the same as the singular values of $\bA$.

The following inequalities involving the singular values of the matrices $\bA$ and $\bB$ are important. First, if both $\bA$ and $\bB$ are in $\mathbb{R}^{m \times n}$, for all $i=1,\ldots, \min\{m,n\}$,
\begin{align}
\label{eqn:svineq1}\abs{\sigma_i(\bA)-\sigma_i(\bB)}\leq\TNorm{\bA-\bB}.
\end{align}
Second, if $\bA \in \mathbb{R}^{p \times m}$ and $\bB \in \mathbb{R}^{m \times n}$, for all $i=1,\ldots, \min\{m,n\}$,
\begin{align}
\label{eqn:svineq2}\sigma_i(\bA\bB)\leq \sigma_1(\bA)\sigma_i(\bB),
\end{align}
where, recall, $ \sigma_1(\bA) = \TNorm{\bA} $.
We are often interested in keeping only the non-zero singular values and the corresponding left and right singular vectors of a matrix $\bA$. Given a matrix $\bA\in\Rs{m\times n}$ with $\rank{\bA} = \rho$, its thin SVD can be defined as follows.
\begin{definition}
Given a matrix $\bA\in\Rs{m\times n}$ of rank $\rho \leq \min\{m,n\}$, we define its thin SVD as:
\[\bA = \underbrace{\bU}_{m\times \rho}\underbrace{\bSigma}_{\rho\times \rho}\underbrace{\bV^\top}_{\rho\times n} = \sum_{i=1}^\rho\sigma_i\bu_i\bv_i^\top,\]
where $\bU\in\Rs{m\times \rho}$ and $\bV\in\Rs{n\times \rho}$ are matrices with pairwise orthonormal columns (i.e., $\bU^T\bU=\bI$ and $\bV^T\bV=\bI$)  that contain the left and right singular vectors of $\bA$ corresponding to the non-zero singular values; $\bSigma \in \Rs{\rho\times \rho}$ is a diagonal matrix with the non-zero singular values of $\bA$ in decreasing order on the diagonal.
\end{definition}
 If $\bA$ is a nonsingular matrix, we can compute its inverse using the SVD:
\[\bA^{-1} = (\bU\bSigma\bV^\top)^{-1} = \bV\bSigma^{-1}\bU^\top.\]
(If $\bA$ is nonsingular, then it is square and full rank, in which case the thin SVD is the same as the full SVD.)
The SVD is so important since, as is well-known, the best rank-$k$ approximation to any matrix can be computed via the SVD.
\begin{theorem}
	Let $\bA=\bU{\bSigma}\bV^\top \in \Rs{m\times n}$ be the thin SVD of $\bA$; let $k<\rank{\bA}=\rho$ be an integer; and let
$\bA_k = \sum_{i=1}^k\sigma_i\bu_i\bv_i^\top = \bU_k \bSigma_k \bV_k^T$. Then,
	\[\sigma_{k+1} = \min_{\bB\in\Rs{m\times n},\ \rm{rank}(\bB)=k}\TNorm{\bA-\bB} = \TNorm{\bA-\bA_k}\] and
	\[\sum_{j=k+1}^\rho\sigma_j^2 = \min_{\bB\in\Rs{m\times n},\ \rm{rank}(\bB)=k}\FNormS{\bA-\bB} = \FNormS{\bA-\bA_k}.\]
\end{theorem}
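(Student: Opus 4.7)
The plan is to check achievability (that $\bA_k$ attains the two claimed minima) and then prove matching lower bounds for every competitor $\bB$ of rank $k$.

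For achievability, write
\[
\bA - \bA_k = \sum_{i=k+1}^\rho \sigma_i \bu_i \bv_i^\top,
\]
which is itself a thin SVD-type representation, since the $\bu_i$'s and $\bv_i$'s inherited from the thin SVD of $\bA$ remain orthonormal. Hence the nonzero singular values of $\bA - \bA_k$ are exactly $\sigma_{k+1} \geq \cdots \geq \sigma_\rho$, giving $\TNorm{\bA - \bA_k} = \sigma_{k+1}$ and $\FNormS{\bA - \bA_k} = \sum_{j=k+1}^\rho \sigma_j^2$ from the standard SVD expressions for the two norms.

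For the spectral lower bound, take any $\bB$ with $\rank{\bB} = k$, so $\sigma_{k+1}(\bB) = 0$. Applying inequality~(\ref{eqn:svineq1}) at index $k+1$ gives $\sigma_{k+1}(\bA) = |\sigma_{k+1}(\bA) - \sigma_{k+1}(\bB)| \leq \TNorm{\bA - \bB}$, which matches the achievability value.

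For the Frobenius lower bound, the idea is to project $\bA$ onto $\range{\bB}$. Let $\bP$ be the orthogonal projector onto $\range{\bB}$, so that $\bP\bB = \bB$ and $\rank{\bP} \leq k$. Decompose $\bA - \bB = (\bI - \bP)\bA + \bP(\bA - \bB)$; because $(\bI-\bP)\bP = \bzero$, the two summands satisfy the hypothesis of Matrix Pythagoras (Lemma~\ref{l_pyth}), so $\FNormS{\bA - \bB} \geq \FNormS{(\bI - \bP)\bA}$. Applying Lemma~\ref{l_pyth} once more to the orthogonal decomposition $\bA = \bP\bA + (\bI-\bP)\bA$ gives $\FNormS{(\bI-\bP)\bA} = \FNormS{\bA} - \FNormS{\bP\bA}$, so it suffices to prove $\FNormS{\bP\bA} \leq \sum_{i=1}^k \sigma_i^2$. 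Using the full SVD $\bA = \bU\bSigma\bV^\top$ and setting $\alpha_i = \TNormS{\bP\bu_i} \in [0,1]$, a direct column-wise expansion yields $\FNormS{\bP\bA} = \sum_i \sigma_i^2 \alpha_i$ while $\sum_i \alpha_i = \Trace{\bP\bU\bU^\top} = \Trace{\bP} \leq k$. The main obstacle is this final step: maximizing $\sum_i \sigma_i^2 \alpha_i$ subject to $\alpha_i \in [0,1]$ and $\sum_i \alpha_i \leq k$. Since the weights $\sigma_i^2$ are nonincreasing in $i$, an elementary rearrangement argument (push mass onto the smallest indices) caps the maximum at $\sum_{i=1}^k \sigma_i^2$, closing the proof.
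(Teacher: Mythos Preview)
The paper does not actually prove this theorem; it is presented in Section~\ref{sxn:review:SVD} as a well-known background fact (``The SVD is so important since, as is well-known, the best rank-$k$ approximation to any matrix can be computed via the SVD''), with references to standard texts such as~\cite{GVL96,Bhatia97} for details. So there is no proof in the paper to compare against.

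Your argument is correct and self-contained. The achievability computation is standard. For the spectral lower bound, your use of the perturbation inequality~(\ref{eqn:svineq1}) at index $k+1$ is exactly the cleanest route. For the Frobenius lower bound, your projection argument is sound: the application of Lemma~\ref{l_pyth} is justified because $[(\bI-\bP)\bA]^T[\bP(\bA-\bB)] = \bA^T(\bI-\bP)\bP(\bA-\bB) = \bzero$ for any orthogonal projector $\bP$; the identity $\FNormS{\bP\bA} = \sum_i \sigma_i^2\,\TNormS{\bP\bu_i}$ follows from $\FNormS{\bP\bU\bSigma\bV^T} = \FNormS{\bP\bU\bSigma}$ by orthogonal invariance; and the trace computation $\sum_i \TNormS{\bP\bu_i} = \Trace{\bP\bU\bU^T} = \Trace{\bP} \leq k$ requires the \emph{full} SVD so that $\bU\bU^T = \bI_m$, which you correctly flagged. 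The final linear-programming step (push the $\alpha_i$ mass onto the largest weights) is the standard rearrangement argument. One small remark: since $k < \rho$, all of $\sigma_1,\ldots,\sigma_k$ are strictly positive, so $\bA_k$ genuinely has rank $k$ and is an admissible competitor in the constrained minimum as stated.
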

In words, the above theorem states that if we seek a rank $k$ approximation to a matrix $\bA$ that minimizes the two or the Frobenius norm of the ``error'' matrix, i.e., of the difference between $\bA$ and its approximation, then it suffices to keep the top $k$ singular values of $\bA$ and the corresponding left and right singular vectors.

We will often use the following notation: let $\bU_k \in \mathbb{R}^{m \times k}$ (respectively, $\bV_k \in \mathbb{R}^{n \times k}$) denote the matrix of the top $k$ left (respectively, right) singular vectors of $\bA$; and let $\bSigma_k \in \mathbb{R}^{k \times k}$ denote the diagonal matrix containing the top $k$ singular values of $\bA$. Similarly,
let $\bU_{k,\perp} \in \mathbb{R}^{m \times (\rho-k)}$ (respectively, $\bV_{k,\perp} \in \mathbb{R}^{n \times (\rho-k)}$) denote the matrix of the bottom $\rho-k$ nonzero left (respectively, right) singular vectors of $\bA$; and let $\bSigma_{k,\perp} \in \mathbb{R}^{(\rho-k) \times (\rho-k)}$ denote the diagonal matrix containing the bottom $\rho-k$ singular values of $\bA$. Then,
\begin{equation}\label{eqn:svdnotation}
\bA_k=\bU_k\bSigma_k\bV_k^T \quad \mbox{and} \quad \bA_{k,\perp}=\bA-\bA_k = \bU_{k,\perp}\bSigma_{k,\perp}\bV_{k,\perp}^T.
\end{equation}

\subsection{SVD and Fundamental Matrix Spaces.}
Any matrix $\bA \in \Rs{m\times n}$ defines four fundamental spaces:
\begin{description}
	\item [The Column Space of $\bA$] This space is spanned by the columns of $\bA$:
	\[\rm{range}(\bA) = \{\bvb: \bA\bx = \bvb,\quad \bx \in \Rs{n}\} \subset \Rs{m}.\]
	\item [The Null Space of $\bA$] This space is spanned by all vectors $\bx\in\Rs{n}$ such that $ \bA\bx = \bzero$:
	\[\rm{null}(\bA) = \{\bx: \bA\bx = \bzero\} \subset \Rs{n}.\]
	\item [The Row Space of $\bA$] This space is spanned by the rows of $\bA$:
	\[\rm{range}(\bA^\top) = \{\bd: \bA^\top\by = \bd,\quad \by \in \Rs{m}\} \subset \Rs{n}.\]
	\item [The Left Null Space of $\bA$] This space is spanned by all vectors $\by\in\Rs{m}$ such that $ \bA^\top\by = \bzero$:
	\[\rm{null}(\bA^\top) = \{\by: \bA^\top\by = \bzero\} \subset \Rs{m}.\]
\end{description}
The SVD reveals orthogonal bases for all these spaces. Given a matrix $\bA \in \Rs{m\times n}$, with $\rank{\bA} = \rho$, its SVD can be written as:
\[\bA = \begin{pmatrix}\bU_\rho & \bU_{\rho,\perp}\end{pmatrix}\begin{pmatrix}\bSigma_\rho & \bzero\\\bzero & \bzero\end{pmatrix}\begin{pmatrix}\bV_\rho^\top\\\bV_{\rho,\perp}^\top\end{pmatrix}.\]
It is easy to prove that:
\[ \range{\bA} = \range{\bU_\rho}, \]
\[ \rnull{\bA} = \range{\bV_{\rho,\perp}}, \]
\[ \range{\bA^\top} =\range{\bV_\rho}, \]
\[ \rnull{\bA^\top} = \range{\bU_{\rho,\perp}}.\]

\begin{theorem}[Basic Theorem of Linear Algebra.]
	The column space of $\bA$ is orthogonal to the null space of $\bA^\top$ and their union is $\Rs{m}$. The column space of $\bA^\top$ is orthogonal to the null space of $\bA$ and their union is $\Rs{n}$.
\end{theorem}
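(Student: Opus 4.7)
The plan is to directly leverage the four SVD identities stated immediately before the theorem, namely $\range{\bA}=\range{\bU_\rho}$, $\rnull{\bA^\top}=\range{\bU_{\rho,\perp}}$, $\range{\bA^\top}=\range{\bV_\rho}$, and $\rnull{\bA}=\range{\bV_{\rho,\perp}}$, together with orthogonality of the full SVD factors $\bU\in\Rs{m\times m}$ and $\bV\in\Rs{n\times n}$. Everything the theorem asserts will fall out essentially for free once those identities are in hand; the only subtlety is interpreting ``their union is $\Rs{m}$'' as the statement that the two orthogonal subspaces direct-sum to the ambient space, rather than set-theoretic union.

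First I would prove the $\Rs{m}$ half. Since $\bU=\begin{pmatrix}\bU_\rho & \bU_{\rho,\perp}\end{pmatrix}$ is orthogonal, its $m$ columns form an orthonormal basis of $\Rs{m}$; in particular every column of $\bU_\rho$ is orthogonal to every column of $\bU_{\rho,\perp}$. Any $\bx\in\range{\bA}=\range{\bU_\rho}$ and $\by\in\rnull{\bA^\top}=\range{\bU_{\rho,\perp}}$ are linear combinations of those two disjoint blocks of basis vectors, so bilinearity of the inner product gives $\bx^\top\by=0$. This establishes orthogonality of the two subspaces.

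For the span (``union'') statement, I would note that the resolution of the identity $\bI_m=\bU\bU^\top=\bU_\rho\bU_\rho^\top+\bU_{\rho,\perp}\bU_{\rho,\perp}^\top$ lets us decompose any $\bz\in\Rs{m}$ as
\[
\bz \;=\; \underbrace{\bU_\rho\bU_\rho^\top\bz}_{\in\,\range{\bU_\rho}=\range{\bA}} \;+\; \underbrace{\bU_{\rho,\perp}\bU_{\rho,\perp}^\top\bz}_{\in\,\range{\bU_{\rho,\perp}}=\rnull{\bA^\top}},
\]
so $\range{\bA}+\rnull{\bA^\top}=\Rs{m}$; combined with the orthogonality just established, this is the claimed orthogonal direct-sum decomposition.

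The $\Rs{n}$ half is entirely analogous, with $\bV=\begin{pmatrix}\bV_\rho & \bV_{\rho,\perp}\end{pmatrix}$ playing the role of $\bU$ and the identities $\range{\bA^\top}=\range{\bV_\rho}$, $\rnull{\bA}=\range{\bV_{\rho,\perp}}$ supplying the subspaces. There is no real obstacle here: the substance of the theorem was already absorbed into the SVD characterizations of the four fundamental subspaces, and the present proof is merely the observation that the columns of an orthogonal matrix split cleanly into two mutually orthogonal blocks whose spans exhaust the ambient space.
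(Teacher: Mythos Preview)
Your proof is correct, and it is precisely the argument the paper sets up: the four SVD identities for the fundamental subspaces are stated immediately before the theorem, and the orthogonality of the blocks of $\bU$ and $\bV$ does the rest. The paper itself does not supply a proof of this theorem---it is stated as a standard background fact in the linear algebra review section---so there is nothing to compare against beyond noting that your argument is exactly the one the surrounding exposition invites. Your reading of ``union'' as orthogonal direct sum is also the intended interpretation.
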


\subsection{Matrix Schatten norms.}\label{sxn:schatten}
The matrix Schatten norms are a special family of norms that are defined on the vector containing the singular values of a matrix.
Given a matrix $\bA\in\Rs{m\times n}$ with singular values $\sigma_1\geq\dots\geq\sigma_\rho >0$, we define the Schatten p-norm as:
\[\|\bA\|_p = \left(\sum_{i=1}^\rho\sigma_i^p\right)^{\frac{1}{p}}.\]
Common Schatten norms of a matrix $\bA \in \Rs{m\times n}$ are:
\begin{description}
	\item[Schatten one-norm] The nuclear norm, i.e., the sum of the singular values.
	\item [Schatten two-norm] The Frobenius norm, i.e., the square root of the sum of the squares of the singular values.
	\item [Schatten infinity-norm] The two norm, i.e., the largest singular value.
\end{description}
Schatten norms are orthogonally invariant, submultiplicative, and satisfy H\"{o}lder's inequality.

\subsection{The Moore-Penrose pseudoinverse.}\label{sxn:review:MP}
A generalized notion of the well-known matrix inverse is the Moore-Penrose pseudoinverse.
Formally, given a matrix $\bA\in\mathbb{R}^{m \times n}$, a matrix $\bA^\dagger$ is the Moore Penrose pseudoinverse of $\bA$ if it satisfies the following
properties:
\begin{enumerate}
\item $\bA\bA^\dagger\bA = \bA$.
\item $\bA^\dagger\bA\bA^\dagger = \bA^\dagger$.
\item $(\bA\bA^\dagger)^\top = \bA\bA^\dagger$.
\item $(\bA^\dagger\bA)^\top = \bA^\dagger\bA$.
\end{enumerate}
Given a matrix $\bA\in\Rs{m\times n}$ of rank $\rho$ and its thin SVD
	\[\bA = \sum_{i=1}^\rho \sigma_i\bu_i\bv_i^\top,\]
	its Moore-Penrose pseudoinverse $\bA^{\dagger}$ is
	\[{\bA}^\dagger = \sum_{i=1}^\rho\frac{1}{\sigma_i}\bv_i\bu_i^\top.\]
If a matrix $\bA\in\Rs{n\times n}$ has full rank, then $\bA^{\dagger} = \bA^{-1}$.
If a matrix $\bA\in\Rs{m\times n}$ has full column rank, then
$\bA^{\dagger}\bA = \bI_n$, and $\bA\bA^{\dagger}$ is a projection matrix onto the column span of $\bA$; while if it has full row rank, then
$\bA\bA^{\dagger} = \bI_m$, and $\bA^{\dagger}\bA$ is a projection matrix onto the row span of $\bA$.

A particularly important property regarding the pseudoinverse of the product of two matrices is the following: for matrices $\bY_1\in\mathbb{R}^{m\times p}$ and
$\bY_2\in\mathbb{R}^{p\times n}$,
satisfying $\rank{\bY_1}=\rank{\bY_2}$,~\cite[Theorem 2.2.3]{Bjo15} states that
\begin{align}\label{eqn:pinv}
\left(\bY_1\bY_2\right)^{\dagger} = \bY_2^{\dagger}\bY_1^{\dagger}.
\end{align}
(We emphasize that the condition on the ranks is crucial: while the inverse of the product of two matrices always equals the product of the inverse of those matrices, the analogous statement is not true in full generality for the Moore-Penrose pseudoinverse~\cite{Bjo15}.)
The fundamental spaces of the Moore-Penrose pseudoinverse are connected with those of the actual matrix.
Given a matrix $\bA$ and its Moore-Penrose pseudoinverse $\bA^\dagger$, the column space of $\bA^\dagger$ can be defined~as:
\[\rm{range}(\bA^\dagger) = \rm{range}(\bA^\top\bA) = \rm{range}(\bA^\top) , \]
and it is orthogonal to the null space of $\bA$. The null space of  $\bA^\dagger$ can be defined~as:
\[\rm{null}(\bA^\dagger) = \rm{null}(\bA\bA^\top) = \rm{null}(\bA^\top) , \]
and it is orthogonal to the column space of $\bA$.

\subsection{References.} We refer the interested reader to~\cite{Strang88,GVL96,TrefethenBau97,Bjo15} for additional background on linear algebra and matrix computations, as well as to~\cite{Stewart90,Bhatia97} for additional background on matrix perturbation theory. 

\section{Discrete Probability}\label{sxn:dp}

In this section, we present a brief overview of discrete probability. More advanced results (in particular, Bernstein-type inequalities for real-valued and matrix-valued random variables) will be introduced in the appropriate context later in the chapter. It is worth noting that most of RandNLA builds upon simple, fundamental principles of discrete (instead of continuous) probability.

\subsection{Random experiments: basics.}
A random experiment is any procedure that can be infinitely repeated and has a well-defined set of possible outcomes. Typical examples are the roll of a dice or the toss of a coin. The sample space $\Omega$ of a random experiment is the set of all possible outcomes of the random experiment. If the random experiment only has two possible outcomes (e.g., success and failure) then it is often called a Bernoulli trial. In discrete probability, the sample space $\Omega$ is finite. (We will \textit{not} cover countably or uncountably infinite sample spaces in this chapter.)

An event is any subset of the sample space $\Omega$. Clearly, the set of all possible events is the powerset (the set of all possible subsets) of $\Omega$, often denoted as $2^{\Omega}$. As an example, consider the following random experiment: toss a coin three times. Then, the sample space $\Omega$ is
\[\Omega = \{HHH, HHT, HTH, HTT, THH, THT, TTH, TTT\}\]
and an event ${\mathcal E}$ could be described in words as ``the output of the random experiment was either all heads or all tails''. Then,
${\mathcal E} = \{HHH,TTT\}.$
The \textit{probability measure} or \textit{probability function} maps the (finite) sample space $\Omega$ to the interval $[0,1]$. Formally, let the function $\Probab{\omega}$ for all $\omega \in \Omega$ be a function whose domain is $\Omega$ and whose range is the interval $[0,1]$. This function has the so-called normalization property, namely
\[\sum_{\omega \in \Omega} \Probab{\omega} = 1.\]
If $\mathcal E$ is an event, then
\begin{equation}\label{eqn:setprop1}
\Probab{\mathcal E} = \sum_{\omega \in {\mathcal E}} \Probab{\omega},
\end{equation}
namely the probability of an event is the sum of the probabilities of its elements. It follows that the probability of the empty event (the event ${\mathcal E}$ that corresponds to the empty set) is equal to zero, whereas the probability of the event $\Omega$ (clearly $\Omega$ itself is an event) is equal to one. Finally, the uniform probability function is defined as
$\Probab{\omega} = 1/\abs{\Omega}$,
for all $\omega \in \Omega$.

\subsection{Properties of events.}
Recall that events are sets and thus set operations (union, intersection, complementation) are applicable. Assuming finite sample spaces and using Eqn.~(\ref{eqn:setprop1}), it is easy to prove the following property for the union of two events ${\mathcal E}_1$ and ${\mathcal E}_2$:
\[\Probab{{\mathcal E}_1 \cup {\mathcal E}_2} = \Probab{{\mathcal E}_1} + \Probab{{\mathcal E}_2} -  \Probab{{\mathcal E}_1\cap {\mathcal E}_2}.\]
This property follows from the well-known inclusion-exclusion principle for set union and can be generalized to more than two sets and thus to more than two events. Similarly, one can prove that
$\Probab{\bar{\mathcal E}} = 1-\Probab{{\mathcal E}}.$
In the above, $\bar{\mathcal E}$ denotes the complement of the event $\mathcal E$. Finally, it is trivial to see that if ${\mathcal E}_1$ is a subset of ${\mathcal E}_2$ then
$\Probab{{\mathcal E}_1} \leq \Probab{{\mathcal E}_2}.$

\subsection{The union bound.}
The union bound is a fundamental result in discrete probability and can be used to bound the probability of a union of events without any special assumptions on the relationships between the events. Indeed, let ${\mathcal E}_i$ for all $i=1,\ldots,n$ be events defined over a finite sample space $\Omega$. Then, the union bound states that
\[\Probab{\bigcup_{i=1}^n{\mathcal E}_i} \leq \sum_{i=1}^n \Probab{{\mathcal E}_i}.\]
The proof of the union bound is quite simple and can be done by induction, using the inclusion-exclusion principle for two sets that was discussed in the previous section.

\subsection{Disjoint events and independent events.}
Two events ${\mathcal E}_1$ and ${\mathcal E}_2$ are called \textit{disjoint} or \textit{mutually exclusive} if their intersection is the empty set, i.e., if 
\[  {\mathcal E}_1 \cap {\mathcal E}_2 = \emptyset  . \]
This can be generalized to any number of events by necessitating that the events are all pairwise disjoint.
Two events ${\mathcal E}_1$ and ${\mathcal E}_2$ are called \textit{independent} if the occurrence of one does not affect the probability of the other. Formally, they must satisfy
\[\Probab{{\mathcal E}_1 \cap {\mathcal E}_2} = \Probab{{\mathcal E}_1}\cdot \Probab{{\mathcal E}_2}.\]
Again, this can be generalized to more than two events by necessitating that the events are all pairwise independent.

\subsection{Conditional probability.}
For any two events ${\mathcal E}_1$ and ${\mathcal E}_2$, the conditional probability $\Probab{{\mathcal E}_1|{\mathcal E}_2}$ is the probability that ${\mathcal E}_1$ occurs given that ${\mathcal E}_2$ occurs. Formally,
\[\Probab{{\mathcal E}_1|{\mathcal E}_2} = \frac{\Probab{{\mathcal E}_1\cap{\mathcal E}_2}}{\Probab{{\mathcal E}_2}}.\]
Obviously, the probability of ${\mathcal E}_2$ in the denominator must be non-zero for this to be well-defined.
The well-known Bayes rule states that for any two events ${\mathcal E}_1$ and ${\mathcal E}_2$ such that $\Probab{{\mathcal E}_1}>0$ and $\Probab{{\mathcal E}_2}>0$,
\[\Probab{{\mathcal E}_2|{\mathcal E}_1} = \frac{\Probab{{\mathcal E}_1|{\mathcal E}_2}\Probab{{\mathcal E}_2}}{\Probab{{\mathcal E}_1}}.\]
Using the Bayes rule and the fact that the sample space $\Omega$ can be partitioned as $\Omega = {\mathcal E}_2 \cup \overline{{\mathcal E}_2}$, it follows that
\[\Probab{{\mathcal E}_1} = \Probab{{\mathcal E}_1|{\mathcal E}_2}\Probab{{\mathcal E}_2} + \Probab{{\mathcal E}_1|\overline{{\mathcal E}_2}}\Probab{\overline{{\mathcal E}_2}}.\]
We note that the probabilities of both events ${\mathcal E}_1$ and ${\mathcal E}_2$ must be in the open interval $(0,1)$.
We can now revisit the notion of independent events. Indeed, for any two events ${\mathcal E}_1$ and ${\mathcal E}_2$ such that $\Probab{{\mathcal E}_1}>0$ and $\Probab{{\mathcal E}_2}>0$ the following statements are equivalent:
\begin{enumerate}
\item $\Probab{{\mathcal E}_1|{\mathcal E}_2} = \Probab{{\mathcal E}_1}$,
\item $\Probab{{\mathcal E}_2|{\mathcal E}_1} = \Probab{{\mathcal E}_2}$, and
\item $\Probab{{\mathcal E}_1 \cap {\mathcal E}_2} = \Probab{{\mathcal E}_1}\Probab{{\mathcal E}_2}$.
\end{enumerate}
Recall that the last statement was the definition of independence in the previous section.

\subsection{Random variables.}
Random variables are \textit{functions} mapping the sample space $\Omega$ to the real numbers $\mathbb{R}$. Note that even though they are called variables, in reality they are functions. Let $\Omega$ be the sample space of a random experiment. A formal definition for the random variable $X$ would be as follows: let $\alpha\in\mathbb{R}$ be a real number (not necessarily positive) and note that the function
\[X^{-1}\left(\alpha\right) = \left\{\omega \in \Omega\ :\ X\left(\omega\right)=\alpha\right\}\]
returns a subset of $\Omega$ and thus is an event. Therefore, the function $X^{-1}\left(\alpha\right)$ has a probability. We will abuse notation and write:
\[\Probab{X = \alpha}\]
instead of the more proper notation $\Probab{X^{-1}\left(\alpha\right)}$. This function of $\alpha$ is of great interest and it is easy to generalize as follows:
\[\Probab{X \leq \alpha} = \Probab{X^{-1}\left(\alpha^{\prime}\right): \alpha^{\prime} \in (-\infty,\alpha]} = \Probab{\omega \in \Omega\ :\ X\left(\omega\right)\leq \alpha}.\]
%

\subsection{Probability mass function and cumulative distribution function.}
Two common functions associated with random variables are the probability mass function (PMF) and the cumulative distribution function (CDF). The first measures the probability that a random variable takes a particular value $\alpha\in\mathbb{R}$, and the second measures the probability that a random variable takes any value below $\alpha\in\mathbb{R}$.

\begin{definition}[Probability Mass Function (PMF)]
Given a random variable $X$ and a real number $\alpha$, the \emph{probability mass function (PMF)} is the function
$f(\alpha) = \Probab{X=\alpha}.$
\end{definition}

\begin{definition}[Cumulative Distribution Function (CDF)]
Given a random variable $X$ and a real number $\alpha$, the \emph{cumulative distribution function (CDF)} is the function
$F(\alpha) = \Probab{X\leq\alpha}.$
\end{definition}
It is obvious from the above definitions that
$F(\alpha) =  \sum_{x\leq \alpha} f(x).$

\subsection{Independent random variables.}
Following the notion of independence for events, we can now define the notion of independence for random variables. Indeed, two random variables $X$ and $Y$ are independent if for all reals $a$ and $b$,
\[\Probab{X=a\ \mbox{and}\ Y=b} = \Probab{X=a}\cdot\Probab{Y=b}.\]

\subsection{Expectation of a random variable.}
Given a random variable $X$, its expectation $\Expect{X}$ is defined as
\[\Expect{X} = \sum_{x \in X(\Omega)} x \cdot \Probab{X=x}.\]
In the above, $X(\Omega)$ is the image of the random variable $X$ over the sample space $\Omega$; recall that $X$ is a function. 
That is, the sum is over the range of the random variable $X$.
Alternatively, $\Expect{X}$ can be expressed in terms of a sum over the domain of $X$, i.e., over $\Omega$.
For finite sample spaces $\Omega$, such as those that arise in discrete probability, we get
\[\Expect{X} = \sum_{\omega \in \Omega} X(\omega) \Probab{\omega}.\]
We now discuss fundamental properties of the expectation. The most important property is linearity of expectation: for any random variables $X$ and $Y$ and real number $\lambda$,
\begin{align*}
\Expect{X+Y} &= \Expect{X} +\Expect{Y},\ \mbox{and}\\
\Expect{\lambda X} &= \lambda\Expect{X}.
\end{align*}
The first property generalizes to any finite sum of random variables and does not need any assumptions on the random variables involved in the summation.
If two random variables $X$ and $Y$ are independent then we can manipulate the expectation of their product as follows:
\[\Expect{XY} = \Expect{X}\cdot\Expect{Y}.\]

\subsection{Variance of a random variable.}
Given a random variable $X$, its variance $\Var{X}$ is defined as
\[\Var{X} = \Expect{X-\Expect{X}}^2.\]
In words, the variance measures the average of the (square) of the difference $X - \Expect{X}$.
The standard deviation is the square root of the variance and is often denoted by $\sigma$. It is easy to prove that
\[\Var{X} = \Expect{X^2} - \Expect{X}^2.\]
This obviously implies
\[\Var{X} \leq \Expect{X^2},\]
which is often all we need in order to get an upper bound for the variance. Unlike the expectation, the variance does not have a linearity property, unless the random variables involved are independent. Indeed, if the random variables $X$ and $Y$ are independent, then
\begin{align*}
\Var{X+Y} = \Var{X} +\Var{Y}.
\end{align*}
The above property generalizes to sums of more than two random variables, assuming that all involved random variables are pairwise independent. Also, for any real $\lambda$,
\begin{align*}
\Var{\lambda X} = \lambda^2\Var{X}.
\end{align*}

\subsection{Markov's inequality.}\label{sxn:review:Markov}
Let $X$ be a non-negative random variable; for any $\alpha > 0$,
\[\Probab{X\geq \alpha}\leq \frac{\Expect{X}}{\alpha}.\]
This is a very simple inequality to apply and only needs an upper bound for the expectation of $X$. An equivalent formulation is the following: let $X$ be a non-negative random variable; for any $k > 1$,
\[\Probab{X\geq k\cdot \Expect{X}}\leq \frac{1}{k},\]
or, equivalently,
\[\Probab{X\leq k\cdot \Expect{X}}\geq 1-\frac{1}{k}.\]
In words, the probability that a random variable exceeds $k$ times its expectation is at most $1/k$.
In order to prove Markov's inequality, we will show, \[\Probab{X \geq t} \leq \frac{\Expect{X}}{t}\] assuming \[k = \frac{t}{\Expect{X}},\] for any $t > 0$. In order to prove the above inequality, we define the following function \[f(X) =
\begin{cases}
1,		& \text{if }X \geq t\\
0, 		& \text{otherwise}
\end{cases}\]
with expectation:
\begin{align*}
\Expect{f(X)} = 1 \cdot \Probab{X \geq t} + 0 \cdot \Probab{X < t}=\Probab{X \geq t}.
\end{align*}
Clearly, from the function definition,  $f(X) \leq \frac{X}{t}$. Taking expectation on both sides:
\[\Expect{f(X)} \leq \Expect{\frac{X}{t}} = \frac{\Expect{X}}{t}.\]
Thus, \[\Probab{X \geq t} \leq \frac{\Expect{X}}{t}.\]
Hence, we conclude the proof of Markov's inequality.

\subsection{The Coupon Collector Problem.}\label{sxn:couponcollector}
Suppose there are $m$ types of coupons and we seek to collect them in independent trials, where in each trial the probability of obtaining any one coupon is $1/m$ (uniform). Let $X$ denote the number of trials that we need in order to collect at least one coupon of each type. Then, one can prove that~\cite[Section 3.6]{MotwaniRaghavan95}:
\begin{align*}
\Expect{X} &= m \ln m + \Theta\left(m\right),\ \mbox{and}\\
\Var{X} &= \frac{\pi^2}{6} m^2 + \Theta\left(m \ln m\right).
\end{align*}
%
%
The occurrence of the additional $\ln m$ factor in the expectation is common in sampling-based approaches that attempt to recover $m$ different types of objects using sampling in independent trials. Such factors will appear in many RandNLA sampling-based algorithms.

\subsection{References.} There are numerous texts covering discrete probability; most of the material in this chapter was adapted from~\cite{MotwaniRaghavan95}.

 \section{Randomized Matrix Multiplication}\label{chapter:MM}
Our first randomized algorithm for a numerical linear algebra problem is a simple, sampling-based approach to approximate the product of two matrices $\bA \in \mathbb{R}^{m \times n}$ and $\bB \in \mathbb{R}^{n \times p}$. This randomized matrix multiplication algorithm is at the heart of all of the RandNLA algorithms that we will discuss in this chapter, and indeed all of RandNLA more generally.
 It is of interest both pedagogically and in and of itself, and it is also used in an essential way in the analysis of the least squares approximation and low-rank approximation algorithms discussed below.

We start by noting that the product $\bA\bB$ may be written as the sum of $n$ rank one matrices:
 \begin{equation}
 \bA\bB = \sum_{t=1}^{n} \underbrace{\bA_{*t} \bB_{t*}}_{\in \mathbb{R}^{m \times n}},
 \label{AB_sum_rank_one}
 \end{equation}
 where each of the summands is the \emph{outer product} of a column of $\bA$ and the corresponding row of $\bB$. Recall that the standard definition of matrix multiplication states that the $(i,j)$-th entry of the matrix product $\bA\bB$ is equal to the \emph{inner product} of the $i$-th row of $\bA$ and the $j$-th column of $\bB$, namely
\[(\bA\bB)_{ij} = \bA_{i*} \bB_{*j} \in \mathbb{R}.\]
 It is easy to see that the two definitions are equivalent. However, when matrix multiplication is formulated as in Eqn.~(\ref{AB_sum_rank_one}), a simple randomized algorithm to approximate the product $\bA\bB$ suggests itself: in independent identically distributed (i.i.d.) trials, randomly sample (and appropriately rescale) a few rank-one matrices from the $n$ terms in the summation of Eqn.~(\ref{AB_sum_rank_one}); and then output the sum of the (rescaled) terms as an estimator for $\bA \bB$.
 %
\begin{algorithm}[t] 
\begin{framed}

\textbf{Input:} $\bA \in \Rs{m \times n}$,
$\bB \in \mathbb{R}^{n \times p}$,
integer $c$ ($1 \le c \le n$), and
$\left\{ p_k \right\}_{k=1}^{n}$ s.t. $p_k \ge 0$ and $\sum_{k=1}^{n} p_k = 1$.

{\bf
Output:
}
$\bC \in \mathbb{R}^{m \times c}$ and
$\bR \in \mathbb{R}^{c \times p}$.
\vspace{0.1in}

\begin{enumerate}
\item For $t = 1$ to $c$,
\begin{itemize}
\item Pick $i_t \in \left\{1, \ldots, n \right\}$ with $\Probab{i_t = k} = p_k$, independently and with replacement.
\item Set $\bC_{*t} = \frac{1}{\sqrt{c p_{i_t}}}\bA_{*i_t} $ and $\bR_{t*} = \frac{1}{\sqrt{c p_{i_t}}}\bB_{i_t*}$.
\end{itemize}
\item Return $\bC\bR = \sum_{t=1}^{c} \frac{1}{cp_{i_t}} \bA_{*i_t} \bB_{i_t *}$.
\end{enumerate}
\end{framed}
\caption{ The \textsc{RandMatrixMultiply} algorithm }
\label{fig:BasicMatrixMultiplicationAlgorithm}
\end{algorithm}
%

Consider the \textsc{RandMatrixMultiply} algorithm (Algorithm~\ref{fig:BasicMatrixMultiplicationAlgorithm}), which makes this simple idea precise.
 When this algorithm is given as input two matrices $\bA$ and $\bB$, a probability
 distribution $\left\{ p_k \right\}_{k=1}^{n}$, and a number $c$ of column-row
 pairs to choose, it returns as output an estimator for the product $\bA\bB$ of the form
 \[\sum_{t=1}^{c} \frac{1}{cp_{i_t}} \bA_{*i_t} \bB_{i_t *}.\]
 Equivalently, the above estimator can be thought of as the product of the two matrices
$\bC$ and $\bR$ formed by the \textsc{RandMatrixMultiply} algorithm, where $\bC$ consists of $c$ (rescaled) columns of $\bA$ and $\bR$ consists of the corresponding (rescaled) rows of $\bB$. Observe that
 \[
 \bC\bR = \sum_{t=1}^{c} \bC_{*t} \bR_{t*}
 = \sum_{t=1}^{c} \left(\sqrt{\frac{1}{cp_{i_t}}} \bA_{*i_t}\right)\left( \sqrt{\frac{1}{cp_{i_t}}}\bB_{i_t *}\right)
 = \frac{1}{c} \sum_{t=1}^{c} \frac{1}{p_{i_t}} \bA_{*i_t} \bB_{i_t *}.
 \]
Therefore, the procedure used for sampling and scaling column-row pairs in the \textsc{RandMatrixMultiply} algorithm corresponds to sampling and rescaling terms in Eqn.~(\ref{AB_sum_rank_one}).

\begin{Remark}
 The analysis of RandNLA algorithms has benefited enormously from formulating algorithms using the so-called \emph{sampling-and-rescaling matrix formalism}. Let's define the sampling-and-rescaling matrix
 $\bS \in \Rs{n \times c}$ to be a matrix with
 $\bS_{i_t t} =  1/\sqrt{cp_{i_t}}$ if the $i_t$-th column of $\bA$
 is chosen in the $t$-th trial (all other entries of $\bS$ are set to zero).
Then
 \[
 \bC = \bA \bS \mbox{ and } \bR = \bS^{T} \bB,
 \]
 so that $ \bC\bR = \bA \bS \bS^{T} \bB \approx \bA\bB $.
 Obviously, the matrix $\bS$ is very sparse, having a single non-zero entry per column, for a total of $c$ non-zero entries, and so it is not explicitly constructed and stored by the algorithm.
\end{Remark}

\begin{Remark}
The choice of the sampling probabilities $\left\{ p_k \right\}_{k=1}^{n}$ in the \textsc{RandMatrixMultiply}
algorithm is very important. As we will prove in Lemma~\ref{lem:multexpvar}, the estimator returned by the \textsc{RandMatrixMultiply} algorithm is (in an element-wise sense) unbiased, regardless of our choice of the sampling probabilities. However, a natural notion of the \textit{variance} of our estimator (see Theorem~\ref{lem:basicmult} for a precise definition) is minimized when the sampling probabilities are set to
 \[p_k = \frac{\FNorm{\bA_{*k} \bB_{k*}}}{\sum_{k'=1}^n\FNorm{\bA_{*k'} \bB_{k'*}}}  = \frac{\TNorm{\bA_{*k}} \TNorm{\bB_{k*}}}{\sum_{k'=1}^n\TNorm{\bA_{*k'}} \TNorm{\bB_{k'*}}}.\]
 In words, the best choice when sampling rank-one matrices from the summation of Eqn.~(\ref{AB_sum_rank_one}) is to select rank-one matrices that have larger Frobenius norms with higher probabilities. This is equivalent to selecting column-row pairs that have larger (products of) Euclidean norms with higher probability.
\end{Remark}

\begin{Remark} This approach for approximating matrix multiplication has several advantages. First, it is conceptually simple. Second, since the heart of the algorithm involves matrix multiplication of smaller matrices, it can use any algorithms that exist in the literature for performing the desired matrix multiplication. Third, this approach does not tamper with the sparsity of the input matrices. Finally, the algorithm can be easily implemented in one pass over the input matrices $\bA$ and $\bB$, given the sampling probabilities $\left\{ p_k \right\}_{k=1}^{n}$. See~\cite[Section 4.2]{dkm_matrix1} for a detailed discussion regarding the implementation of the~\textsc{RandMatrixMultiply} algorithm in the pass-efficient and streaming models of computation.
\end{Remark}

\subsection{Analysis of the {\small R{\scriptsize AND}M{\scriptsize ATRIX}M{\scriptsize ULTIPLY}} algorithm.}
 \label{sxn:matmult:analysis}
 This section provides upper bounds for the error matrix $\FNormS{\bA\bB-\bC\bR}$, where $\bC$ and $\bR$ are the outputs of the \textsc{RandMatrixMultiply} algorithm.

 Our first lemma proves that the expectation of the $(i,j)$-th element of the
 estimator $\bC\bR$ is equal to the $(i,j)$-th element of the exact product $\bA \bB$, regardless of the choice of the sampling probabilities. It also
 bounds the variance of the $(i,j)$-th element of the estimator, which does depend on our choice of the sampling probabilities.
 \begin{lemma}
 	\label{lem:multexpvar}
 	Let $\bC$ and $\bR$ be constructed as described in the \textsc{RandMatrixMultiply} algorithm.
 	Then,
 	\[
 	\Expect{(\bC\bR)_{ij}}=(\bA\bB)_{ij}
 	\]
 	and
 	\[
 	\Var{(\bC\bR)_{ij}} \leq \frac{1}{c}\sum_{k=1}^n \frac{\bA_{ik}^2 \bB_{kj}^2}{p_k}.
 	\]
 \end{lemma}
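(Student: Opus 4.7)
The plan is to express $(\bC\bR)_{ij}$ as a sum of $c$ independent identically distributed random variables, one per trial, and then apply linearity of expectation and additivity of variance over independent summands.

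First I would write, for each trial index $t=1,\dots,c$, the scalar random variable
\[ X_t = \frac{1}{c\, p_{i_t}}\, \bA_{i\, i_t}\, \bB_{i_t\, j}, \]
so that by construction $(\bC\bR)_{ij} = \sum_{t=1}^{c} X_t$. Because $i_t$ is drawn from $\{1,\dots,n\}$ with $\Probab{i_t = k} = p_k$ and the trials are i.i.d., each $X_t$ has the same distribution. A direct computation using the definition of expectation gives
\[ \Expect{X_t} = \sum_{k=1}^{n} p_k \cdot \frac{1}{c\, p_k}\, \bA_{ik}\, \bB_{kj} = \frac{1}{c}\sum_{k=1}^{n} \bA_{ik}\bB_{kj} = \frac{1}{c}(\bA\bB)_{ij}, \]
where the factors of $p_k$ cancel cleanly; note that any $k$ with $p_k=0$ contributes zero to the product $\bA_{ik}\bB_{kj}$ in the actual sum (such indices are never sampled), so there is no issue with division. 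Summing over the $c$ trials via linearity of expectation yields $\Expect{(\bC\bR)_{ij}} = (\bA\bB)_{ij}$, which is the first claim and holds for any valid probability distribution.

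For the variance, I would use independence of the trials: since the $X_t$ are mutually independent, $\Var{(\bC\bR)_{ij}} = \sum_{t=1}^{c} \Var{X_t}$. Then I would apply the bound $\Var{X_t} \le \Expect{X_t^2}$ from the section on variance, and compute the second moment directly:
\[ \Expect{X_t^2} = \sum_{k=1}^{n} p_k \cdot \frac{1}{c^2\, p_k^2}\, \bA_{ik}^2\, \bB_{kj}^2 = \frac{1}{c^2}\sum_{k=1}^{n} \frac{\bA_{ik}^2\, \bB_{kj}^2}{p_k}. \]
Multiplying by $c$ (the number of summands) gives the stated upper bound $\frac{1}{c}\sum_{k=1}^{n}\bA_{ik}^2\bB_{kj}^2/p_k$.

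There is no real obstacle here; the proof is a textbook application of linearity of expectation and independence, once one recognizes that $(\bC\bR)_{ij}$ is a sum of $c$ i.i.d.\ scalar random variables parametrized by the sampled indices $i_t$. The only subtlety worth flagging explicitly is that one should handle $k$ with $p_k = 0$ by adopting the convention that such indices contribute $0$ to the sums (they are never selected, and the bound on the variance is understood as $+\infty$ in that case unless $\bA_{ik}\bB_{kj}=0$ as well), which matches the natural reading of the statement.
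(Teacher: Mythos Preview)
Your proof is correct and follows essentially the same approach as the paper: define the per-trial scalar $X_t = \bA_{i i_t}\bB_{i_t j}/(c p_{i_t})$, compute its mean and bound its variance by the second moment, then invoke linearity of expectation and additivity of variance over the $c$ independent trials. The only difference is cosmetic---your added remark about the $p_k=0$ case is not in the paper but does no harm.
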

 \begin{proof}
 	Fix some pair $i,j$. For $t=1,\dots,c$, define
 	$
 	X_t = \left( \frac{ \bA_{*i_t}\bB_{i_t *} }{ cp_{i_t} } \right)_{ij}
 	=        \frac{ \bA_{ii_t}\bB_{i_tj}  }{ cp_{i_t} }
 	$.
 	Thus, for any $t$,
 	\[
 	\Expect{X_t}   = \sum_{k=1}^n p_k \frac{ \bA_{ik}\bB_{kj} }{ cp_k }
 	= \frac{1}{c} \sum_{k=1}^{n}\bA_{ik}\bB_{kj}
 	= \frac{1}{c}(\bA\bB)_{ij}.
 	\]
 	Since we have $(\bC\bR)_{ij} = \sum_{t = 1}^{c}X_t$, it follows that
 	\[
 	\Expect{(\bC\bR)_{ij}} = \Expect{\sum_{t = 1}^{c}X_t} = \sum_{t = 1}^{c}\Expect{X_t} = (\bA\bB)_{ij}.
 	\]
 	Hence, $\bC\bR$ is an unbiased estimator of $\bA\bB$, regardless of the choice of the sampling probabilities. Using the fact that $(\bC\bR)_{ij}$ is the sum of $c$ independent random variables, we get
 	\[
 	\Var{(\bC\bR)_{ij}} = \Var{\sum_{t=1}^c X_t} = \sum_{t=1}^c\Var{X_t}.
 	\]
 	Using
 	$
 	\Var{X_t} \leq \Expect{X_t^2}= \sum_{k=1}^n \frac{ \bA_{ik}^2 \bB_{kj}^2 }{ c^2 p_k },
 	$ we get
 	\[
 	\Var{(\bC\bR)_{ij}} = \sum_{t=1}^c\Var{X_t} \leq
 	c \sum_{k=1}^n \frac{ \bA_{ik}^2 \bB_{kj}^2 }{ c^2 p_k } =
 	\frac{1}{c} \sum_{k=1}^n \frac{\bA_{ik}^2 \bB_{kj}^2}{p_k},
 	\]
which concludes the proof of the lemma.
 \end{proof}

Our next result bounds the expectation of the Frobenius norm of the error matrix $\bA\bB-\bC\bR$. Notice that this error metric depends on our choice of the sampling probabilities $\left\{ p_k \right\}_{k=1}^{n}$.

 \begin{theorem}
 	\label{lem:basicmult}
 	Construct $\bC$ and $\bR$ using the \textsc{RandMatrixMultiply} algorithm
 	and let $\bC\bR$ be an approximation to $\bA\bB$.
 	Then,
 	\begin{equation}
 	\Expect{\FNormS{\bA\bB-\bC\bR}}
 	\leq \sum_{k=1}^n \frac{\TNormS{\bA_{*k}}\TNormS{\bB_{k*}}}{c p_k}.
 	\end{equation}
 	Furthermore, if
 	\begin{equation}
 	p_k = \frac{ \TNorm{\bA_{*k}} \TNorm{\bB_{k*}} }{ \sum_{k^\prime=1}^n \TNorm{\bA_{*k^\prime}} \TNorm{\bB_{k^\prime *}} }   ,
 	\label{optimal_probs}
 	\end{equation}
 	for all $k = 1,\ldots,n$,
 	then
 	\begin{equation}
 	\Expect{\FNormS{\bA\bB-\bC\bR}} \leq \frac{1}{c}\left( \sum_{k=1}^n \TNorm{\bA_{*k}}\TNorm{\bB_{k*}} \right)^2.
 	\end{equation}
 	This choice for $\left\{ p_k \right\}_{k=1}^{n}$ minimizes $\Expect{\FNormS{\bA\bB-\bC\bR}}$, among possible choices for the sampling probabilities.
\end{theorem}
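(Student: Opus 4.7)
The plan is to decompose the squared Frobenius error entrywise and then apply Lemma~\ref{lem:multexpvar} coordinate by coordinate, so that the error reduces to a sum of single-entry variances. Concretely, I would first write
\[
\Expect{\FNormS{\bA\bB-\bC\bR}} = \sum_{i=1}^m\sum_{j=1}^p \Expect{\bigl((\bA\bB-\bC\bR)_{ij}\bigr)^2}.
\]
Because Lemma~\ref{lem:multexpvar} gives $\Expect{(\bC\bR)_{ij}}=(\bA\bB)_{ij}$, each summand is exactly $\Var{(\bC\bR)_{ij}}$, and the variance bound from the same lemma lets me upper bound the whole sum by
\[
\frac{1}{c}\sum_{i,j}\sum_{k=1}^n \frac{\bA_{ik}^2\bB_{kj}^2}{p_k} = \frac{1}{c}\sum_{k=1}^n \frac{1}{p_k}\Bigl(\sum_i \bA_{ik}^2\Bigr)\Bigl(\sum_j \bB_{kj}^2\Bigr) = \sum_{k=1}^n \frac{\TNormS{\bA_{*k}}\TNormS{\bB_{k*}}}{c\,p_k},
\]
which is the first claimed inequality. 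The key swap here is interchanging the order of summation so that the $(i,j)$-sums collapse into the Euclidean norms of the $k$-th column of $\bA$ and $k$-th row of $\bB$.

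Substituting the proposed $p_k = \TNorm{\bA_{*k}}\TNorm{\bB_{k*}}/\sum_{k'}\TNorm{\bA_{*k'}}\TNorm{\bB_{k'*}}$ into this bound immediately yields
\[
\Expect{\FNormS{\bA\bB-\bC\bR}} \leq \frac{1}{c}\Bigl(\sum_{k=1}^n \TNorm{\bA_{*k}}\TNorm{\bB_{k*}}\Bigr)^2,
\]
since the $\TNorm{\bA_{*k}}\TNorm{\bB_{k*}}$ in the numerator cancels one factor in $p_k$ and leaves a single power times the normalizing sum.

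For the optimality claim I would set $\alpha_k = \TNorm{\bA_{*k}}\TNorm{\bB_{k*}}$ and minimize $g(p)=\sum_k \alpha_k^2/p_k$ over the simplex $\sum_k p_k = 1$, $p_k\ge 0$. The cleanest route is Cauchy--Schwarz:
\[
\Bigl(\sum_{k=1}^n \alpha_k\Bigr)^2 = \Bigl(\sum_{k=1}^n \frac{\alpha_k}{\sqrt{p_k}}\cdot \sqrt{p_k}\Bigr)^2 \leq \Bigl(\sum_{k=1}^n \frac{\alpha_k^2}{p_k}\Bigr)\Bigl(\sum_{k=1}^n p_k\Bigr) = \sum_{k=1}^n \frac{\alpha_k^2}{p_k},
\]
with equality iff $\sqrt{p_k}\propto \alpha_k/\sqrt{p_k}$, i.e.\ $p_k\propto \alpha_k$, which (after normalization) is exactly Eqn.~(\ref{optimal_probs}). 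A Lagrange-multiplier argument would give the same minimizer, but Cauchy--Schwarz has the virtue of simultaneously producing the sharp lower bound $(\sum_k \alpha_k)^2$ that matches the value at the proposed optimum.

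I do not expect any genuine obstacle: the only mild point of care is pushing the expectation through the Frobenius sum (justified since the sum is finite), and correctly folding the $(i,j)$-summations into column/row norms. The optimality step is a standard exercise that, once framed as ``minimize $\sum \alpha_k^2/p_k$ on the simplex,'' falls out from Cauchy--Schwarz in one line.
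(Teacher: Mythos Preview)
Your argument is correct and, for the first two claims, essentially identical to the paper's: both decompose the expected squared Frobenius error as a sum over $(i,j)$ of $\Var{(\bC\bR)_{ij}}$ via the unbiasedness in Lemma~\ref{lem:multexpvar}, invoke the variance bound from that lemma, and then swap the order of summation to collapse the $(i,j)$-sums into $\TNormS{\bA_{*k}}\TNormS{\bB_{k*}}$.

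The one genuine difference is the optimality step. The paper minimizes $f(p_1,\dots,p_n)=\sum_k \TNormS{\bA_{*k}}\TNormS{\bB_{k*}}/p_k$ subject to $\sum_k p_k=1$ by introducing a Lagrange multiplier, solving $\partial g/\partial p_k=0$ to obtain $p_k\propto \TNorm{\bA_{*k}}\TNorm{\bB_{k*}}$, and then checking $\partial^2 g/\partial p_k^2>0$ to confirm a minimum. Your Cauchy--Schwarz argument, applied to $\sum_k \alpha_k = \sum_k (\alpha_k/\sqrt{p_k})\sqrt{p_k}$, reaches the same minimizer more directly: it simultaneously produces the lower bound $\bigl(\sum_k \alpha_k\bigr)^2$ on $f$, shows this bound is attained at $p_k\propto\alpha_k$, and avoids any second-order check. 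Both routes are standard; yours is a little cleaner in that the minimum value and the minimizer drop out in one inequality rather than requiring a separate verification that the critical point is a minimum.
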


 \begin{proof}
 	First of all, since $\bC\bR$ is an unbiased estimator of $\bA\bB$, $\Expect{(\bA\bB-\bC\bR)_{ij}} = 0$. Thus,
 	\[
 	\Expect{\FNormS{\bA\bB-\bC\bR}} = \sum_{i=1}^m \sum_{j=1}^p \Expect{\left(\bA\bB-\bC\bR\right)_{ij}^2}
 	= \sum_{i=1}^m \sum_{j=1}^p \Var{(\bC\bR)_{ij}}     .
 	\]
 	Using Lemma~\ref{lem:multexpvar}, we get
 	\begin{align*}
 		\Expect{\FNormS{\bA\bB-\bC\bR}}
 		&\leq \frac{1}{c}\sum_{k=1}^n \frac{1}{p_k} \left(\sum_i \bA_{ik}^2\right)\left(\sum_j \bB_{kj}^2\right) \\
 		&= \frac{1}{c}\sum_{k=1}^n \frac{1}{p_k} \TNormS{\bA_{*k}} \TNormS{\bB_{k*}}.
 	\end{align*}
Let $p_k$ be as in Eqn.~(\ref{optimal_probs}); then
 	\begin{align*}
 		\Expect{\FNormS{\bA\bB-\bC\bR}} \leq \frac{1}{c}\left(\sum_{k=1}^n \TNorm{\bA_{*k}} \TNorm{\bB_{k*}}\right)^2.
 	\end{align*}
 	Finally, to prove that the aforementioned choice for the $\left\{ p_k \right\}_{k=1}^{n}$ minimizes the quantity
 	$\Expect{\FNormS{\bA\bB-\bC\bR}}$, define the function
 	\[
 	f(p_1,\dots p_n) = \sum_{k=1}^n \frac{1}{p_k}\TNormS{\bA_{*k}}\TNormS{\bB_{k*}},
 	\]
 	which characterizes the dependence of $\Expect{\FNormS{\bA\bB-\bC\bR}}$ on the $p_k$'s. In order to minimize $f$ subject to $\sum_{k=1}^n p_k =1$, we can introduce the Lagrange multiplier $\lambda$ and define the function
 	\[
 	g(p_1,\dots p_n) = f(p_1,\dots p_n) + \lambda\left(\sum_{k=1}^n p_k-1\right)  .
 	\]
 	We then have the minimum at
 	\[
 	0 = \frac{\partial g}{\partial p_k}
 	= \frac{-1}{p_k^2}\TNormS{\bA_{*k}}\TNormS{\bB_{k*}} + \lambda   .
 	\]
 	Thus,
 	\[
 	p_k = \frac{\TNorm{\bA_{*k}}\TNorm{\bB_{k*}}}{\sqrt{\lambda}}
 	= \frac{\TNorm{\bA_{*k}}\TNorm{\bB_{k*}}}
 	{\sum_{k^\prime=1}^n\TNorm{\bA_{*k^\prime}}\TNorm{\bB_{k^\prime *}}} ,
 	\]
 	where the second equality comes from solving for $\sqrt{\lambda}$ in
 	$\sum_{k=1}^{n} p_k = 1$.
 	These probabilities are minimizers of $f$ because $\frac{\partial^2 g}{{\partial p_k}^2} > 0$ for all
 	$k$.
 \end{proof}

We conclude this section by pointing out that we can apply Markov's inequality on the expectation bound of Theorem~\ref{lem:basicmult} in order to get bounds for the Frobenius norm of the error matrix $\bA\bB-\bC\bR$ that hold with constant probability. We refer the reader to~\cite[Section 4.4]{dkm_matrix1} for a tighter analysis, arguing for a better (in the sense of better dependence on the failure probability than provided by Markov's inequality) concentration of the Frobenius norm of the error matrix around its mean
using a martingale argument.

 \subsection{Analysis of the algorithm for nearly optimal probabilities.}
 \label{sxn:matmult:analysis:probs}

 We now discuss three different choices for the sampling probabilities that are easy to analyze and will be useful in this chapter. We summarize these results in the following list; all three bounds can be easily proven following the proof of Theorem~\ref{lem:basicmult}.
 \begin{description}
 \item[Nearly optimal probabilities, depending on both $\bA$ and $\bB$] Let the $\left\{ p_k \right\}_{k=1}^{n}$ satisfy
 \begin{equation}\label{eqn:appopt1}
 \sum_{k=1}^n p_k=1 \quad \mbox{and} \quad
 p_k \ge \frac{ \beta\TNorm{\bA_{*k}} \TNorm{\bB_{k*}} }{ \sum_{k^\prime=1}^n \TNorm{\bA_{*k^\prime}} \TNorm{\bB_{k^\prime *}} },
 \end{equation}
 for some positive constant $\beta \le 1$. Then,
 \begin{equation}\label{eqn:appopt1result}
 \Expect{\FNormS{\bA\bB-\bC\bR}} \leq \frac{1}{\beta c}\left( \sum_{k=1}^n \TNorm{\bA_{*k}}\TNorm{\bB_{k*}} \right)^2.
 \end{equation}
 \item[Nearly optimal probabilities, depending only on $\bA$] Let the $\left\{ p_k \right\}_{k=1}^{n}$ satisfy
 \begin{equation}\label{eqn:appopt2}
 \sum_{k=1}^n p_k=1 \quad \mbox{and} \quad
 p_k \ge \frac{ \beta\TNormS{\bA_{*k}}}{\FNormS{\bA}},
 \end{equation}
 for some positive constant $\beta \le 1$. Then,
 \begin{equation}\label{eqn:appopt2result}
 	\Expect{\FNormS{\bA\bB-\bC\bR}} \leq \frac{1}{\beta c}\FNormS{\bA}\FNormS{\bB}.
 \end{equation}
 \item[Nearly optimal probabilities, depending only on $\bB$] Let  the $\left\{ p_k \right\}_{k=1}^{n}$  satisfy
 \begin{equation}\label{eqn:appopt3}
 \sum_{k=1}^n p_k=1 \quad \mbox{and} \quad
 p_k \ge \frac{ \beta\TNormS{\bB_{k*}}}{\FNormS{\bB}},
 \end{equation}
 for some positive constant $\beta \le 1$. Then,
 \begin{equation}\label{eqn:appopt3result}
 	\Expect{\FNormS{\bA\bB-\bC\bR}} \leq \frac{1}{\beta c}\FNormS{\bA}\FNormS{\bB}.
 \end{equation}
\end{description}
We note that, from the Cauchy-Schwartz inequality,
\[\left( \sum_{k=1}^n \TNorm{\bA_{*k}}\TNorm{\bB_{k*}} \right)^2 \leq \FNormS{\bA}\FNormS{\bB},\]
and thus the bound of Eqn.~(\ref{eqn:appopt1result}) is generally better than the bounds of Eqns.~(\ref{eqn:appopt2result}) and~(\ref{eqn:appopt3result}). See~\cite[Section 4.3, Table 1]{dkm_matrix1} for other sampling probabilities and respective error bounds that might be of interest.

\subsection{Bounding the two norm.}\label{sxn:chapter1:spectral}
In both applications of the \textsc{RandMatrixMultiply} algorithm that we will discuss in this chapter (see least-squares approximation and low-rank matrix approximation in Sections~\ref{sxn:main:regression} and~\ref{sxn:main:lowrank}, respectively), we will be particularly interested in approximating the product $\bU^T \bU$, where $\bU$ is a tall-and-thin matrix, by sampling (and rescaling) a few rows of $\bU$. 
(The matrix $\bU$ will be a matrix spanning the column space or the ``important'' part of the column space of some other matrix of interest.)
It turns out that, without loss of generality, 
we can focus on the special case where $\bU \in \mathbb{R}^{n \times d}$ ($n \gg d$) is a matrix with orthonormal columns (i.e., $\bU^T\bU = \bI_d$). 
Then, if we let $\bR \in \mathbb{R}^{c \times d}$ be a sample of $c$ (rescaled) rows of $\bU$ constructed using the \textsc{RandMatrixMultiply} algorithm, and note that the corresponding $c$ (rescaled) columns of $\bU^T$ form the matrix $\bR^T$, then Theorem~\ref{lem:basicmult} implies that
 \begin{equation}\label{eqn:UUT1}
   \Expect{\FNormS{\bU^T\bU-\bR^T\bR}}=\Expect{\FNormS{\bI_d-\bR^T\bR}} \leq \frac{d^2}{\beta c}.
 \end{equation}
 In the above, we used the fact that $\FNormS{\bU}=d$. For the above bound to hold, it suffices to use sampling probabilities $p_k$ ($k=1,\ldots, n$) that satisfy
 \begin{equation}\label{eqn:appopt4}
 \sum_{k=1}^n p_k=1 \quad \mbox{and} \quad
 p_k \ge \frac{ \beta\TNormS{\bU_{k*}}}{d}.
 \end{equation}
 (The quantities $\TNormS{\bU_{k*}}$ are known as leverage scores~\cite{Mah-mat-rev_BOOK}; and the probabilities given by Eqn.~(\ref{eqn:appopt4}) are nearly-optimal, in the sense of Eqn.~(\ref{eqn:appopt1}), i.e., in the sense that they approximate the optimal probabilities for approximating the matrix product shown in Eqn~(\ref{eqn:UUT1}), up to a $\beta$ factor.)
 Applying Markov's inequality to the bound of Eqn.~(\ref{eqn:UUT1}) and setting
 \begin{equation}\label{eqn:cval1}
  c = \frac{10d^2}{\beta\epsilon^2},
 \end{equation}
 we get that, with probability at least 9/10,
 \begin{equation}\label{eqn:UUT2}
   \FNorm{\bU^T\bU-\bR^T\bR}=\FNorm{\bI_d-\bR^T\bR}\leq \epsilon.
 \end{equation}
 Clearly, the above equation also implies a two-norm bound.
 Indeed, with probability at least 9/10, 
 $$
 \TNorm{\bU^T\bU-\bR^T\bR} = \TNorm{\bI_d-\bR^T\bR} \leq \epsilon
 $$ 
 by setting $c$ to the value of Eqn.~(\ref{eqn:cval1}). 

In the remainder of this section, we will state and prove a theorem that also guarantees $\TNorm{\bU^T\bU-\bR^T\bR}\leq\epsilon$, while setting $c$ to a value that is \textit{smaller} than the one in Eqn.~(\ref{eqn:cval1}).
For related concentration techniques, see the chapter by Vershynin in this volume~\cite{pcmi-chapter-vershynin}.

 \begin{theorem}\label{thm:theorem7correct}
 	Let $\bU \in \Rs{n \times d}$ ($n \gg d$) satisfy $\bU^T \bU = \bI_d$. Construct $\bR$ using the \textsc{RandMatrixMultiply} algorithm and let the sampling probabilities $\left\{ p_k \right\}_{k=1}^{n}$ satisfy the conditions of Eqn.~(\ref{eqn:appopt4}), for all $k=1,\ldots, n$ and some constant $\beta \in (0,1]$. Let $\epsilon \in (0,1)$ be an accuracy parameter and let
 	\begin{equation}\label{eqn:CboundAppendix}
 	c \geq \frac{96d}{\beta \epsilon^2}\ln \left(\frac{96d}{\beta \epsilon^2 \sqrt{\delta}}\right).
 	\end{equation}
 	Then, with probability at least $1-\delta$,
 	\[\TNorm{\bU^T\bU-\bR^T\bR}=\TNorm{\bI_d-\bR^T\bR}\leq \epsilon.\]
 \end{theorem}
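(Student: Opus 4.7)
I would express $\bR^T\bR$ as a sum of i.i.d.\ rank-one random matrices and then apply a matrix concentration inequality (matrix Bernstein, or equivalently the Ahlswede--Winter/Tropp-style matrix Chernoff bound). Concretely, for each $t=1,\ldots,c$, define
\[
\bX_t \;=\; \frac{1}{c\,p_{i_t}}\,\bU_{i_t *}^{\top}\bU_{i_t *}\in\Rs{d\times d},
\]
so that $\bR^T\bR=\sum_{t=1}^{c}\bX_t$. Each $\bX_t$ is symmetric, positive semidefinite, and has rank one. A direct computation using $\bU^T\bU=\bI_d$ shows
\[
\Expect{\bX_t}\;=\;\sum_{k=1}^n p_k\,\frac{1}{c p_k}\,\bU_{k*}^{\top}\bU_{k*}\;=\;\frac{1}{c}\,\bI_d,
\]
so $\Expect{\bR^T\bR}=\bI_d$ and the problem reduces to controlling the spectral norm of the zero-mean sum $\bR^T\bR-\bI_d=\sum_{t}(\bX_t-\Expect{\bX_t})$.

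Next I would obtain the two ingredients required by matrix Bernstein, using the leverage-score lower bound $p_k\ge\beta\,\TNormS{\bU_{k*}}/d$ from Eqn.~(\ref{eqn:appopt4}). The \emph{uniform bound}: since $\bX_t$ is rank-one,
\[
\TNorm{\bX_t}\;=\;\frac{\TNormS{\bU_{i_t*}}}{c\,p_{i_t}}\;\le\;\frac{d}{\beta c},
\]
and hence $\TNorm{\bX_t-\Expect{\bX_t}}\le d/(\beta c)+1/c\le 2d/(\beta c)$ (absorbing the $1/c$ harmlessly). The \emph{variance bound}:
\[
\Expect{\bX_t^{2}}\;=\;\sum_{k=1}^{n}\frac{\TNormS{\bU_{k*}}}{c^{2}p_k}\,\bU_{k*}^{\top}\bU_{k*}\;\preceq\;\frac{d}{\beta c^{2}}\sum_{k=1}^{n}\bU_{k*}^{\top}\bU_{k*}\;=\;\frac{d}{\beta c^{2}}\,\bI_d,
\]
so the matrix variance parameter satisfies $\bigl\lVert\sum_{t}\Expect{(\bX_t-\Expect{\bX_t})^{2}}\bigr\rVert_2\le d/(\beta c)$.

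With $R=2d/(\beta c)$ and $\sigma^{2}=d/(\beta c)$, the matrix Bernstein inequality (applied to the $d\times d$ symmetric sum) yields
\[
\Probab{\TNorm{\bR^T\bR-\bI_d}\ge\epsilon}\;\le\;2d\,\exp\!\left(\frac{-\epsilon^{2}/2}{\sigma^{2}+R\epsilon/3}\right),
\]
and substituting the bounds above gives an exponent of order $-\beta c\,\epsilon^{2}/d$ for $\epsilon\in(0,1)$. Solving the resulting inequality for $c$ so that the tail probability is at most $\delta$ produces exactly a bound of the form $c\gtrsim(d/(\beta\epsilon^{2}))\log(d/(\beta\epsilon^{2}\delta^{1/2}))$, matching Eqn.~(\ref{eqn:CboundAppendix}) up to the explicit constant $96$.

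The only genuinely nontrivial step is the matrix concentration inequality itself; once that is invoked as a black box, the rest is book-keeping. The main care point will be tracking constants so that the bound on $c$ exactly matches the stated $96$ (which likely comes from a specific user-friendly statement of matrix Bernstein, e.g.\ the Ahlswede--Winter-type inequality where the exponent is $-3\epsilon^{2}/(8\sigma^{2})$ in the range $\epsilon\le\sigma^{2}/R$ and the logarithm absorbs a $\sqrt{\cdot}$ from a union bound over both tails), as opposed to sharper Tropp-style constants. A secondary subtlety is that the factor $\sqrt{\delta}$ inside the logarithm (rather than $\delta$) is characteristic of deriving the tail bound from a bound on $\Expect{\TNorm{\cdot}}$ via Markov on the squared norm, so I would likely first prove an expectation bound of the form $\Expect{\TNormS{\bI_d-\bR^T\bR}}\le\epsilon^{2}\delta$ and then invoke Markov's inequality, which naturally introduces the $\delta^{1/2}$ in the denominator.
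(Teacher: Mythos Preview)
Your high-level strategy---write $\bR^T\bR$ as a sum of i.i.d.\ rank-one PSD matrices, bound the uniform-norm and variance parameters via the leverage-score condition $p_k\ge\beta\TNormS{\bU_{k*}}/d$, and invoke a matrix concentration inequality---is exactly the paper's, and your parameter computations ($\TNorm{\bX_t}\le d/(\beta c)$, variance $\preceq (d/(\beta c^2))\bI_d$) are correct.

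The difference is in the specific concentration tool. The paper does not use the Tropp-style matrix Bernstein you invoke; it uses an inequality of Oliveira (stated as Lemma~\ref{lem:oliveira}) whose failure probability has the form $(2c)^2\exp\bigl(-c\epsilon^2/(16M^2+8M^2\epsilon)\bigr)$ with $M^2\le d/\beta$. The key feature is that the prefactor is polynomial in $c$, not in $d$. Requiring this to be at most $\delta$ leads to
\[
\frac{c}{\ln(2c/\sqrt\delta)}\;\ge\;\frac{48d}{\beta\epsilon^2},
\]
and the paper then solves this implicit inequality via the elementary fact ``if $x\ge 2\eta\ln\eta$ then $x/\ln x\ge\eta$,'' which is what manufactures both the $\sqrt\delta$ and the extra $\beta\epsilon^2$ inside the logarithm in Eqn.~(\ref{eqn:CboundAppendix}). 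So your conjecture that the $\sqrt\delta$ arises from Markov's inequality applied to $\Expect{\TNormS{\cdot}}$ is off the mark: it is purely an artifact of the $c^2$ prefactor in Oliveira's bound.

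Your route with the standard $2d$-prefactor matrix Bernstein is perfectly valid and would yield $c\gtrsim(d/(\beta\epsilon^2))\ln(d/\delta)$---a cleaner and in fact slightly tighter bound---but it will not reproduce the precise form of Eqn.~(\ref{eqn:CboundAppendix}) or the constant $96$. Either argument proves the theorem; just don't expect the constants and the logarithmic argument to line up with the stated version.
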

Prior to proving the above theorem, we state a matrix-Bernstein inequality that is due to Oliveira~\cite[Lemma 1]{Oli10}.
\begin{lemma}\label{lem:oliveira}
Let $\bx^1,\bx^2,\ldots,\bx^c$ be independent identically distributed copies of a $d$-dimensional random vector $\bx$ with
\[\TNorm{\bx}\leq M \qquad \mbox{and} \qquad \TNorm{\Expect{\bx\bx^T}}\leq 1.\]
Then, for any $\alpha > 0$,
\[\TNorm{\frac{1}{c}\sum_{i=1}^c \bx^i {\bx^i}^T-\Expect{\bx\bx^T}}\leq \alpha\]
%
holds with probability at least
\[1-\left(2c^2\right)\exp\left(-\frac{c\alpha^2}{16M^2+8M^2\alpha}\right).\]
\end{lemma}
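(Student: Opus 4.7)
The plan is to treat Lemma~\ref{lem:oliveira} as a matrix Bernstein inequality for a sum of i.i.d.\ zero-mean Hermitian matrices. Define $\bY_i = \bx^i{\bx^i}^T - \E[\bx\bx^T]$; the $\bY_i$ are i.i.d., Hermitian, and mean zero, and the event in the conclusion is $\|\tfrac{1}{c}\sum_{i=1}^c \bY_i\| \leq \alpha$. The standard ingredients needed are (i) a uniform operator-norm bound on each $\bY_i$, (ii) a bound on the matrix variance $\|\E[\bY_i^2]\|$, and (iii) a matrix concentration tool, for which I would use the matrix Laplace transform method combined with Lieb's concavity inequality.

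For (i), since $\bx^i{\bx^i}^T$ is rank-one PSD with norm $\|\bx^i\|^2 \leq M^2$ and $\|\E[\bx\bx^T]\| \leq 1 \leq M^2$ in the nontrivial regime, the triangle inequality yields $\|\bY_i\| \leq 2M^2$, which matches the $M^2$ factors in the denominator of the tail bound. For (ii), the key identity $(\bx\bx^T)^2 = \|\bx\|^2\,\bx\bx^T \preceq M^2\,\bx\bx^T$ implies, after taking expectation, $\E[\bY_i^2] = \E[(\bx\bx^T)^2] - (\E[\bx\bx^T])^2 \preceq M^2\,\E[\bx\bx^T]$ in the PSD ordering, hence $\|\E[\bY_i^2]\| \leq M^2$ and the aggregate variance satisfies $\|\sum_i \E[\bY_i^2]\| \leq cM^2$.

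For (iii), I would write, for any $\theta>0$,
\[
\Pr\bigl[\lambda_{\max}(\textstyle\sum_i \bY_i) \geq c\alpha\bigr] \leq e^{-c\theta\alpha}\,\mathrm{tr}\,\E\bigl[\exp(\theta\textstyle\sum_i \bY_i)\bigr],
\]
apply Lieb's concavity inequality to dominate the matrix MGF by $\prod_i \E[e^{\theta \bY_i}]$, and use the scalar estimate $\E[e^{\theta \bY_i}] \preceq \exp\bigl(\theta^2\,\E[\bY_i^2]/(2-2\theta R/3)\bigr)$, valid for $0<\theta R<3$ with $R = 2M^2$. Plugging in the variance and norm bounds and optimizing $\theta$ yields a Bernstein-type exponent of the shape $c\alpha^2/(\mathrm{const}\cdot M^2 + \mathrm{const}\cdot M^2 \alpha)$; a symmetric argument handles the lower tail, and careful tracking of the scalar constants should produce the stated denominator $16M^2+8M^2\alpha$.

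The main obstacle is the nonstandard prefactor $2c^2$, which replaces the dimension factor $2d$ that a naive application of matrix Bernstein would give via $\mathrm{tr}(\exp(\cdot)) \leq d\,\|\exp(\cdot)\|$. The point of Oliveira's refinement is to avoid paying the ambient dimension $d$: only the random vectors $\bx^1,\ldots,\bx^c$ actually appear in the sum, so the effective dimensionality should be governed by $c$ rather than $d$. One route is to reduce the operator-norm bound to controlling the quadratic form $\bu^T\bigl(\sum_i \bY_i\bigr)\bu$ on a $\tfrac12$-net of the unit sphere in the low-dimensional span relevant to the $\bx^i$'s, exploit the rank-one structure of each summand to refine the net to effective cardinality $O(c^2)$, and then apply scalar Bernstein with a union bound over the net. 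Once this dimension-free step is in place, the remainder of the proof is routine bookkeeping of the absolute constants $16$ and $8$ in the denominator.
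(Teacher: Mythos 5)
First, a point of orientation: the paper does not prove this lemma at all --- it is imported verbatim from Oliveira~\cite{Oli10} and used as a black box in the proof of Theorem~\ref{thm:theorem7correct} --- so there is no internal argument to compare yours against, and your sketch has to stand on its own. The routine parts of it are correct: centering via $\bY_i=\bx^i{\bx^i}^T-\Expect{\bx\bx^T}$, the uniform bound $\TNorm{\bY_i}\le 2M^2$ (using $\TNorm{\Expect{\bx\bx^T}}\le\Expect{\TNormS{\bx}}\le M^2$), the variance estimate $\Expect{\bY_i^2}\preceq M^2\,\Expect{\bx\bx^T}$, and the Laplace-transform/Lieb machinery would indeed yield a Bernstein tail of the stated shape --- but with prefactor proportional to the ambient dimension $d$.

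The genuine gap is exactly the step you defer: replacing the prefactor $d$ by $2c^2$ is the entire point of the lemma (it is why the paper cites Oliveira rather than a textbook matrix Bernstein inequality), and your proposed route to it does not work as described. Concretely: (a) a $\tfrac12$-net of the unit sphere of a $c$-dimensional subspace has cardinality exponential in $c$ (of order $5^c$), not $O(c^2)$; no mechanism is given for ``refining the net to effective cardinality $O(c^2)$,'' and a union bound over exponentially many net points would destroy the claimed prefactor. (b) More structurally, the deviation matrix is not supported on $\mathrm{span}\{\bx^1,\ldots,\bx^c\}$: the population matrix $\Expect{\bx\bx^T}$ may carry spectral mass on the orthogonal complement of the sample span, where the empirical average $\frac1c\sum_i\bx^i{\bx^i}^T$ vanishes identically, so controlling the quadratic form only on the sample span does not control $\TNorm{\frac1c\sum_i\bx^i{\bx^i}^T-\Expect{\bx\bx^T}}$. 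Any dimension-free argument must also account for that exterior block, and your sketch is silent on it. As written, the proposal establishes only the $d$-dependent bound, not the stated lemma; the dimension-free step needs to be supplied (or the lemma simply cited, as the paper does).
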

This inequality essentially bounds the probability that the matrix $\frac{1}{c}\sum_{i=1}^c \bx^i {\bx^i}^T$ deviates significantly from its expectation. This deviation is measured with respect to the two norm (namely the largest singular value) of the error matrix.

 \begin{proof}(of Theorem~\ref{thm:theorem7correct})
 	Define the random \textit{row} vector $\by \in \Rs{d}$ as
 	\begin{equation*}
 	\Probab{\by = \frac{1}{\sqrt{p_k}}\bU_{k*}} = p_k \ge \frac{\beta\TNormS{\bU_{k*}}}{d},
 	\end{equation*}
 	for $k=1 ,\ldots, n$. In words, $\by$ is set to be the (rescaled) $k$-th row of $\bU$ with probability $p_k$. Thus, the matrix $\bR$
  has rows $\frac{1}{\sqrt{c}}\by^1,\frac{1}{\sqrt{c}}\by^2,\ldots,\frac{1}{\sqrt{c}}\by^c$, where $\by^1,\by^2,\ldots,\by^c$ are $c$ independent copies of $\by$. Using this notation, it follows that
 	\begin{equation}\label{eqn:expectyyt}
 	\Expect{\by^T\by} = \sum_{k=1}^{n} p_k (\frac{1}{\sqrt{p_k}}\bU_{k*}^T)(\frac{1}{\sqrt{p_k}}\bU_{k*})
 	= \bU^T\bU = \bI_d.
 	\end{equation}
 	Also,
 	\begin{equation*}
 	\bR^T\bR = \frac{1}{c}\sum_{t=1}^c \underbrace{{\by^t}^T \by^t}_{\mathbb{R}^{d \times d}}.
 	\end{equation*}
 	For this vector $\by$, let
 	\begin{equation}\label{eqn:defM}
 	M \ge\TNorm{\by} = \frac{1}{\sqrt{p_k}}\TNorm{\bU_{k*}}.
 	\end{equation}
 	Notice that from Eqn.~(\ref{eqn:expectyyt}) we immediately get $\TNorm{\Expect{\by^T\by}} = \TNorm{\bI_d} = 1$. Applying Lemma~\ref{lem:oliveira} (with $\bx = \by^T$), we get
 	\begin{equation}\label{eqn:ExpectBound}
 	\TNorm{\bR^T\bR-\bU^T\bU} < \epsilon,
 	\end{equation}
 	with probability at least $1-\left(2c\right)^2 \exp\left(-\frac{c\epsilon^2}{16M^2 + 8M^2 \epsilon}\right)$. Let $\delta$ be the failure probability of Theorem~\ref{thm:theorem7correct}; we seek an appropriate value of $c$ in order to guarantee $\left(2c\right)^2 \exp\left(-\frac{c\epsilon^2}{16M^2 + 8M^2 \epsilon}\right) \leq \delta$. Equivalently, we need to satisfy
 	\[\frac{c}{\ln \left(2c/\sqrt{\delta}\right)} \geq \frac{2}{\epsilon^2}\left(16M^2 + 8M^2\epsilon\right).\]
 	Combine Eqns.~(\ref{eqn:defM}) and~(\ref{eqn:appopt4}) to get $M^2 \leq \FNormS{\bU}/\beta=d/\beta$. Recall that $\epsilon < 1$ to conclude that it suffices to choose a value of $c$ such that
 	\[\frac{c}{\ln \left(2c/\sqrt{\delta}\right)} \geq \frac{48d}{\beta\epsilon^2},\]
 	or, equivalently,
 	\[\frac{2c/\sqrt{\delta}}{\ln \left(2c/\sqrt{\delta}\right)} \geq \frac{96d}{\beta\epsilon^2\sqrt{\delta}}.\]
 	We now use the fact that for any $\eta \geq 4$, if $x \geq 2\eta \ln \eta$, then $x/\ln x \geq \eta$. Let $x = 2c/\sqrt{\delta}$, let $\eta = 96d /\left(\beta \epsilon^2\sqrt{\delta}\right)$, and note that $\eta \geq 4$ since $d\geq 1$ and $\beta$, $\epsilon$, and $\delta$ are at most one. Thus, it suffices to set
 	\[\frac{2c}{\sqrt{\delta}} \geq 2 \frac{96 d}{\beta \epsilon^2\sqrt{\delta}}\ln \left(\frac{96 d}{\beta \epsilon^2\sqrt{\delta}}\right),\]
 	which concludes the proof of the theorem.
 \end{proof}

\begin{Remark} Let $\delta = 1/10$ and let $\epsilon$ and $\beta$ be constants. Then, we can compare the bound of Eqn.~(\ref{eqn:cval1}) with the bound of Eqn.~(\ref{eqn:CboundAppendix}) of Theorem~\ref{thm:theorem7correct}: both values of $c$ guarantee the same accuracy $\epsilon$ and the same success probability (say 9/10). However, asymptotically, the bound of Theorem~\ref{thm:theorem7correct} holds by setting $c = O(d\ln d)$, while the bound of Eqn.~(\ref{eqn:cval1}) holds by setting $c=O(d^2)$. Thus, the bound of Theorem~\ref{thm:theorem7correct} is much better. By the Coupon Collector Problem (see Section~\ref{sxn:couponcollector}), sampling-based approaches necessitate at least $\Omega(d\ln d)$ samples, thus making our algorithm asymptotically optimal. 
We should note, however, that deterministic methods exist (see~\cite{Srivastava2010}) that achieve the same bound with $c=O(d/\epsilon^2)$ samples.
\end{Remark}

\begin{Remark}We made no effort to optimize the constants in the expression for $c$ in Eqn.~(\ref{eqn:CboundAppendix}). Better constants are known, by using tighter matrix-Bernstein inequalities. For a state-of-the-art bound see, for example,~\cite[Theorem 5.1]{Holodnak2015}.
\end{Remark}

\subsection{References.} Our presentation in this chapter follows closely the derivations in~\cite{dkm_matrix1}; see~\cite{dkm_matrix1} for a detailed discussion of prior work on this topic. We also refer the interested reader to~\cite{Holodnak2015} and references therein for more recent work on randomized matrix multiplication.

\section{RandNLA Approaches for Regression Problems}\label{sxn:main:regression}

In this section, we will present a simple randomized algorithm for least-squares regression.
In many applications in mathematics and statistical data analysis, it is of interest to find an approximate solution to a system of linear equations that has no exact solution. For example, let a
matrix $\bA \in \Rs{n \times d}$ and a vector $\bvb \in \Rs{n}$ be given. If $n \gg d$, there will not in general exist a vector $\bx \in \Rs{d}$ such that $\bA \bx=\bvb$, and yet it is
often of interest to find a vector $\bx$ such that $\bA \bx \approx \bvb$ in some precise sense. The method of least squares, whose original formulation is often credited to Gauss and Legendre, accomplishes this by minimizing the sum of squares of the elements of the residual vector, i.e., by solving the optimization problem
\begin{equation}
\label{eqn:orig_ls_prob}
\mathcal{Z} = \min_{\bx \in \Rs{d}}
\TNorm{\bA \bx - \bvb}.
\end{equation}
The minimum $\ell_2$-norm vector among those satisfying Eqn.~(\ref{eqn:orig_ls_prob}) is
\begin{equation}
\label{eqn:xopt_orig_ls_prob}
\bx_{opt} = \bA^{\dagger}\bvb,
\end{equation}
where $\bA^{\dagger}$ denotes the Moore-Penrose generalized inverse of the matrix $\bA$. This solution vector has a very natural statistical interpretation as providing an optimal estimator among all linear unbiased estimators, and it has a very natural geometric interpretation as providing an orthogonal projection of the vector $\bvb$ onto the span of the columns of the matrix $\bA$.

Recall that to minimize the quantity in Eqn.~(\ref{eqn:orig_ls_prob}), we can set the derivative of $\TNormS{\bA \bx-\bvb}=(\bA \bx-\bvb)^T(\bA \bx-\bvb)$ with respect to $\bx$ equal to zero, from which it follows that the minimizing vector $\bx_{opt}$ is a solution of the so-called normal equations
\begin{equation}
\label{eqn:normal_eqn}
\bA^T \bA \bx_{opt}=\bA^T\bvb  .
\end{equation}
Computing $\bA^T\bA$, and thus computing $\bx_{opt}$ in this way, takes $O(nd^2)$ time, assuming $n \ge d$.
Geometrically, Eqn.~(\ref{eqn:normal_eqn}) means that the residual vector $\bvb^{\perp}=\bvb-\bA \bx_{opt}$ is required to be orthogonal to the column space of $\bA$, i.e., ${\bvb^{\perp}}^T\bA=0$. While solving the normal equations squares the condition number of the input matrix (and thus is typically not recommended in practice), direct methods (such as the QR decomposition, see Section~\ref{sxn:labasics}) also solve the problem of Eqn.~(\ref{eqn:orig_ls_prob}) in $O(nd^2)$ time, assuming that $n \geq d$. Finally, an alternative expression for the vector $\bx_{opt}$ of Eqn.~(\ref{eqn:xopt_orig_ls_prob}) emerges by leveraging the SVD of $\bA$. If $\bA = \bU_A\bSigma_A \bV_A^T$ denotes the SVD of $\bA$, then
\begin{equation*}
\bx_{opt}=\bV_A\bSigma_A^{-1}\bU_A^T\bvb=\bA^{\dagger}\bvb.
\end{equation*}
Computing $\bx_{opt}$ in this way also takes $O(nd^2)$ time, again assuming $n \ge d$.
In this section, we will describe a randomized algorithm that will provide accurate relative-error approximations to the minimal $\ell_2$-norm solution vector $\bx_{opt}$ of Eqn.~(\ref{eqn:xopt_orig_ls_prob}) faster than these ``exact'' algorithms for a large class of over-constrained least-squares problems.

\subsection{The Randomized Hadamard Transform.}\label{sxn:RHT}
The Randomized Hadamard Transform was introduced in~\cite{Ailon2009} as one step in the development of a fast version of the Johnson-Lindenstrauss lemma. Recall that the $n \times n$ Hadamard matrix (assuming $n$ is a power of two) $\tilde{\bH}_n$, may be defined recursively as follows:
\[ \tilde{\bH}_n = \left[
\begin{array}{cc}
  \widetilde{\bH}_{n/2} & \widetilde{\bH}_{n/2} \\
  \widetilde{\bH}_{n/2} & -\widetilde{\bH}_{n/2}
\end{array}\right]   ,
\qquad \mbox{with} \qquad
\widetilde{\bH}_2 = \left[
\begin{array}{cc}
  +1 & +1 \\
  +1 & -1
\end{array}\right].
\]
We can now define the \textit{normalized} Hadamard transform $\bH_n$ as $(1/\sqrt{n})\tilde{\bH}_n$; it is easy to see that $\bH_n\bH_n^T=\bH_n^T\bH_n=\bI_n$.
Now consider a diagonal matrix $\bD \in \mathbb{R}^{n \times n}$ such that $\bD_{ii}$ is set to +1 with probability $1/2$ and to $-1$ with probability $1/2$. The product $\bH\bD$ is the \emph{Randomized Hadamard Transform} and has three useful properties.
First, when applied to a vector, it ``spreads out'' the mass/energy of that vector, in the sense of providing a bound for the largest element, or infinity norm, of the transformed vector.
Second, computing the product $\bH\bD\bx$ for any vector $\bx \in \mathbb{R}^n$ takes $O(n\log_2 n)$ time. Even better, if we only need to access, say, $r$ elements in the transformed vector, then those $r$ elements can be computed in $O(n \log_2 r )$ time. We will expand on the latter observation in Section~\ref{sxn:lsruntime}, where we will discuss the running time of the proposed algorithm. Third, the Randomized Hadamard Transform is an orthogonal transformation, since $\bH\bD\bD^T\bH^T=\bH^T\bD^T\bD\bH=\bI_n$.

\subsection{The main algorithm and main theorem.} \label{sxn:sampling:result1}
We are now ready to provide an overview of the \textsc{RandLeastSquares} algorithm (Algorithm~\ref{alg:alg_sample_fast}). Let the matrix product $\bH \bD$ denote the $n \times n$ Randomized Hadamard Transform discussed in the previous section.
(For simplicity, we restrict our discussion to the case that $n$ is a power of two, although this restriction can easily be removed by using variants of the Randomized Hadamard Transform~\cite{Mah-mat-rev_BOOK}.)
Our algorithm is a \emph{preconditioned random sampling algorithm}: after premultiplying $\bA$ and $\bvb$ by $\bH \bD$, our algorithm samples uniformly at random $r$ constraints from the preprocessed problem. (See Eqn.~(\ref{eqn:rvaluefinal}), as well as the remarks after Theorem~\ref{thm:alg_sample_fast} for the precise value of $r$.) Then, this algorithm solves the least squares problem on just those sampled constraints to obtain a vector $\tilde{\bx}_{opt} \in \Rs{d}$ such that Theorem~\ref{thm:alg_sample_fast} is satisfied.

\begin{algorithm}[t] 
\begin{framed}

\textbf{Input:} $\bA \in \Rs{n \times d}$, $\bvb \in
\Rs{n}$, and an error parameter $\epsilon \in (0,1)$.

\vspace{0.05in}

\textbf{Output:} $\tilde{\bx}_{opt} \in \Rs{d}$.

\begin{enumerate}

\item Let $r$ assume the value of Eqn.~(\ref{eqn:rvaluefinal}).

\item Let $\bS$ be an empty matrix.

\item \textbf{For} $t=1,\ldots,r$ (i.i.d. trials with replacement) \textbf{select uniformly at random} an integer from $\left\{1,2,\ldots,n\right\}$.

\begin{itemize}

\item \textbf{If} $i$ is selected, \textbf{then} append the column vector $\left(\sqrt{n/r}\right) \be_i$ to $\bS$, where $\be_i \in \Rs{n}$ is the $i$-th canonical vector.

\end{itemize}

\item Let $\bH \in \Rs{n\times n}$ be the normalized Hadamard transform
matrix.

\item Let $\bD \in \Rs{n \times n}$ be a diagonal matrix with
$$
\bD_{ii}
            = \left\{ \begin{array}{ll}
                         +1 & \mbox{, with probability $1/2$} \\
                         -1 & \mbox{, with probability $1/2$} \\
                      \end{array}
              \right.
$$
\item
Compute and return $\tilde{\bx}_{opt} = \left(\bS^T \bH \bD \bA \right)^\dagger \bS^T \bH \bD \bvb $.
\end{enumerate}

\end{framed}
\caption{The \textsc{RandLeastSquares} algorithm} \label{alg:alg_sample_fast}
\end{algorithm}

Formally, we will let $\bS \in \Rs{n \times r}$ denote a sampling-and-rescaling matrix specifying which of the $n$ (preprocessed) constraints are to be sampled and how they are to be rescaled. This matrix is initially empty and is constructed as described in the \textsc{RandLeastSquares} algorithm.
(We are describing this algorithm in terms of the matrix $\bS$, but as with the \textsc{RandMatrixMultiply} algorithm, we do not need to construct it explicitly in an actual implementation~\cite{AMT10}.)
Then, we can consider the problem
\begin{equation*}
\tilde{\mathcal{Z}}
   = \min_{\bx \in \Rs{d}} \TNorm{\bS^T \bH \bD \bA \bx- \bS^T\bH \bD \bvb } ,
\end{equation*}
which is a least squares approximation problem involving only the $r$ constraints, where the $r$ constraints are uniformly sampled from the matrix $\bA$ after the preprocessing with the Randomized Hadamard Transform. The minimum $\ell_2$-norm vector $\tilde{\bx}_{opt} \in \Rs{d}$ among those that achieve the minimum value $\tilde{\mathcal{Z}}$ in this problem~is
\begin{equation*}
\tilde{\bx}_{opt}
   = \left(\bS^T \bH \bD \bA \right)^{\dagger}\bS^T \bH \bD \bvb     ,
\end{equation*}
which is the output of the \textsc{RandLeastSquares} algorithm.
One can prove (and the proof is provided below) the following theorem about this algorithm.
\begin{theorem}
\label{thm:alg_sample_fast}
Suppose $\bA \in \Rs{n \times d}$ is a matrix of rank $d$, with $n$ being a power of two. Let $\bvb \in \Rs{n}$ and let $\epsilon \in (0,1)$. Run the \textsc{RandLeastSquares} algorithm with
\begin{equation}\label{eqn:rvaluefinal}
r = \max\left\{48^2 d \ln\left(40nd\right)\ln\left(100^2d \ln \left(40nd\right)\right),
40d\ln(40nd)/\epsilon\right\}
\end{equation}
and return $\tilde{\bx}_{opt}$. Then, with probability at least .8, the following two claims hold: first, $\tilde{\bx}_{opt}$ satisfies
\[
\TNorm{\bA \tilde{\bx}_{opt}-\bvb} \le (1+\epsilon) \mathcal{Z} ,
\]
where, recall, that $ \mathcal{Z} $ is given in Eqn.~(\ref{eqn:orig_ls_prob});
and, second, if we assume that $\TNorm{\bU_A  \bU_A^T \bvb} \ge \gamma \TNorm{\bvb}$ for some $\gamma \in (0,1]$, then $\tilde{\bx}_{opt}$ satisfies
\[
\TNorm{\bx_{opt}-\tilde{\bx}_{opt}}
  \leq \sqrt{\epsilon}\left(\kappa(\bA)\sqrt{\gamma^{-2}-1}\right)\TNorm{\bx_{opt}}.
\]
Finally,
\[ n(d+1) + 2n(d+1) \log_2 \left(r + 1\right) + O(rd^2)\]
time suffices to compute the solution $\tilde{\bx}_{opt}$.
\end{theorem}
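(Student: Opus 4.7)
The plan is to reduce Theorem~\ref{thm:alg_sample_fast} to a combination of two structural conditions on the sampling-and-rescaling matrix $\bS$, both of which follow from applying the randomized matrix multiplication results of Section~\ref{chapter:MM} to suitably preprocessed matrices. Write the thin SVD $\bA = \bU_A \bSigma_A \bV_A^T$, and let $\bvb^\perp = \bvb - \bA\bx_{opt}$, so that $\bU_A^T \bvb^\perp = \bzero$ and $\TNormS{\bvb^\perp}=\mathcal{Z}^2$. The preprocessed orthonormal matrix spanning the column space of $\bH\bD\bA$ is $\bH\bD\bU_A \in \mathbb{R}^{n\times d}$, and the goal is to argue that, after preprocessing, \emph{uniform} sampling suffices to preserve both $(\bH\bD\bU_A)^T(\bH\bD\bU_A)$ in spectral norm and the product $(\bH\bD\bU_A)^T(\bH\bD\bvb^\perp)$ in Frobenius norm.

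The first, and main, step is a ``flattening lemma'' for the Randomized Hadamard Transform: I would prove that with probability at least $0.95$,
\[
\max_{1\le i\le n}\TNormS{(\bH\bD\bU_A)_{i*}} \;\le\; \frac{2d\ln(40nd)}{n},
\]
by applying a Hoeffding-type inequality to each coordinate of $(\bH\bD\bU_A)_{i*}$ (a signed sum of the entries of a fixed unit-norm column of $\bU_A$, with Rademacher signs from $\bD$ and magnitudes $1/\sqrt{n}$ from $\bH$) and union-bounding over the $nd$ entries. A parallel concentration bound applied to $\bH\bD\bvb^\perp$ controls its $\ell_\infty$ norm by $O(\sqrt{\ln(n)/n})\TNorm{\bvb^\perp}$. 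This is the main obstacle; once it is in hand, taking $p_k = 1/n$ for all $k$ satisfies the near-optimal sampling condition~(\ref{eqn:appopt4}) for $\bU = \bH\bD\bU_A$ with $\beta = 1/(2\ln(40nd))$, the factor $\ln(40nd)$ being the source of that term in Eqn.~(\ref{eqn:rvaluefinal}).

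Given the flattening, the second step invokes the two matrix-multiplication tools directly. Applying Theorem~\ref{thm:theorem7correct} with $\epsilon_0 = 1-1/\sqrt{2}$ and $\delta = 0.05$ to $\bH\bD\bU_A$, the first branch of Eqn.~(\ref{eqn:rvaluefinal}) guarantees $\TNorm{\bI_d - (\bS^T\bH\bD\bU_A)^T(\bS^T\bH\bD\bU_A)} \le 1-1/\sqrt{2}$, so all singular values of $\bS^T\bH\bD\bU_A$ lie in $[1/\sqrt{2},\sqrt{3/2}]$. Separately, applying Theorem~\ref{lem:basicmult} (with the nearly-optimal probability bound~(\ref{eqn:appopt2result}), using the $\bU$-side flattening so that uniform sampling is $\beta$-near-optimal for $\bU^T$) to the product $(\bH\bD\bU_A)^T(\bH\bD\bvb^\perp)$, together with Markov's inequality, yields
\[
\TNormS{(\bS^T\bH\bD\bU_A)^T\bS^T\bH\bD\bvb^\perp}\;\le\;\tfrac{\epsilon}{2}\,\mathcal{Z}^2
\]
under the second branch of Eqn.~(\ref{eqn:rvaluefinal}) (the $1/\epsilon$ factor enters here). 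A union bound over the three good events yields probability at least $0.8$.

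The third step is the standard least-squares ``geometry'' deduction from these two conditions. Writing $\bA\tilde\bx_{opt} - \bvb = \bA(\tilde\bx_{opt}-\bx_{opt}) - \bvb^\perp$ and applying Matrix Pythagoras (Lemma~\ref{l_pyth}), it suffices to bound $\TNormS{\bA(\tilde\bx_{opt}-\bx_{opt})}$. Expressing $\tilde\bx_{opt}-\bx_{opt}$ via the normal equations for the sampled problem and substituting $\bA = \bU_A\bSigma_A\bV_A^T$, the spectral-norm condition lets us invert $\bS^T\bH\bD\bU_A$ and the matrix-multiplication condition bounds the resulting right-hand side, giving $\TNormS{\bA(\tilde\bx_{opt}-\bx_{opt})}\le \epsilon\mathcal{Z}^2$ and hence the first claim. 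For the $\bx$-bound, use $\TNorm{\bx_{opt}-\tilde\bx_{opt}} \le \sigma_{\min}(\bA)^{-1}\TNorm{\bA(\tilde\bx_{opt}-\bx_{opt})}$, combine with $\TNormS{\bvb^\perp} = (1-\gamma^2)\TNormS{\bvb}\le (\gamma^{-2}-1)\TNormS{\bU_A\bU_A^T\bvb}\le (\gamma^{-2}-1)\sigma_{\max}(\bA)^2\TNormS{\bx_{opt}}$, and take square roots to produce the factor $\kappa(\bA)\sqrt{\gamma^{-2}-1}$. Finally, for the running time, the matrix $\bH\bD\bA$ is not formed in full: since $\bS$ selects only $r$ rows, the $r$ needed entries of each of the $d+1$ columns of $\bH\bD[\bA\;\bvb]$ can be computed via a subsampled Hadamard transform in $O(n\log_2(r+1))$ each, contributing $n(d+1)+2n(d+1)\log_2(r+1)$, after which the $r\times d$ subsampled least-squares problem is solved by QR in $O(rd^2)$.
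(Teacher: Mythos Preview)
Your proposal is correct and follows essentially the same route as the paper: the flattening lemma for $\bH\bD\bU_A$ (the paper's Lemma~\ref{lem:HU}), then Theorem~\ref{thm:theorem7correct} and Eqn.~(\ref{eqn:appopt2result}) to verify the two structural conditions (Lemmas~\ref{lem:sample_lem20pf} and~\ref{lem:sample_lem40pf}), followed by the deterministic deduction (Lemmas~\ref{lem:suff_cond} and~\ref{lem:suff_cond2}) and the subsampled-Hadamard running-time analysis. The one superfluous step in your outline is the $\ell_\infty$ bound on $\bH\bD\bvb^\perp$: since Eqn.~(\ref{eqn:appopt2result}) only requires the uniform probabilities to be $\beta$-near-optimal for the $\bU$-side, no flattening of $\bvb^\perp$ is needed, and the paper does not use one.
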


It is worth noting that the claims of Theorem~\ref{thm:alg_sample_fast} can be made to hold with probability $1-\delta$, for any $\delta>0$, by repeating the algorithm $\left\lceil \ln(1/\delta)/\ln(5)\right\rceil$ times. Also, we note that if $n$ is not a power of two we can pad $\bA$ and $\bvb$ with all-zero rows in order to satisfy the assumption; this process at most doubles the size of the input matrix.

\begin{Remark}
Assuming that $d \leq n \leq e^d$, and using $\max\{a_1,a_2\} \leq a_1 + a_2$, we get that \[r = \mathcal{O} \left( d(\ln d)(\ln n) + \frac{d \ln n}{\epsilon} \right).\] Thus, the running time of the \textsc{RandLeastSquares} algorithm becomes
\[\mathcal{O} \left( nd\ln \frac{d}{\epsilon} + d^3 (\ln d)(\ln n) + \frac{d^3 \ln n}{\epsilon} \right) .\]
Assuming that $n/\ln n = \Omega(d^2)$, the above running time reduces to \[\mathcal{O}\left(nd \ln \frac{d}{\epsilon} + \frac{nd \ln d}{\epsilon}\right).\]
For fixed $\epsilon$, these improve the standard $O(nd^2)$ running time of traditional deterministic algorithms.
It is worth noting that improvements over the standard $O(nd^2)$ time could be derived with weaker assumptions on $n$ and $d$. However, for the sake of clarity of presentation, we only focus on the above setting.
\end{Remark}

\begin{Remark}
The matrix $\bS^T \bH \bD$ can be viewed in one of two equivalent ways:
as a random preprocessing or random preconditioning, which ``uniformizes'' the leverage scores of the input matrix $\bA$ (see Lemma~\ref{lem:HU} for a precise statement), followed by a uniform sampling operation; or as a Johnson-Lindenstrauss style random projection, which preserves the geometry of the entire span of $\bA$, rather than just a discrete set of points (see Lemma~\ref{lem:sample_lem20pf} for a precise statement).
\end{Remark}

\subsection{RandNLA algorithms as preconditioners.}\label{sxn:precond}
Stepping back, recall that the \textsc{RandLeastSquares} algorithm may be viewed as preconditioning the input matrix $\bA$ and the target vector $\bvb$ with a carefully-constructed data-independent random matrix $\bX$. 
(Since the analysis of the \textsc{RandLowRank} algorithm, our main algorithm for low-rank matrix approximation, in Section~\ref{sxn:main:lowrank} below, boils down to very similar ideas as the analysis of the \textsc{RandLeastSquares} algorithm, the ideas underlying the following discussion also apply to the \textsc{RandLowRank} algorithm.)
For our random sampling algorithm, we let $\bX = \bS^T \bH \bD$, where $\bS$ is a matrix that represents the sampling operation and $\bH \bD$ is the Randomized Hadamard Transform. Thus, we replace the least squares approximation problem of Eqn.~(\ref{eqn:orig_ls_prob}) with the least squares approximation problem
\begin{equation}
\label{eqn:orig_ls_prob_Xrotated}
\tilde{\mathcal{Z}} = \min_{\bx \in \Rs{d}} \TNorm{\bX (\bA \bx - \bvb)}.
\end{equation}
We explicitly compute the solution to the above problem using a traditional deterministic algorithm, e.g., by computing the vector
\begin{equation}
\label{eqn:xopt_orig_ls_prob_Xrotated}
\tilde{\bx}_{opt} = \left(\bX \bA\right)^{\dagger}\bX \bvb  .
\end{equation}
Alternatively, one could use standard iterative methods such as the the Conjugate Gradient Normal Residual method, which can produce an $\epsilon$-approximation to the optimal solution of Eqn.~(\ref{eqn:orig_ls_prob_Xrotated}) in $O(\kappa(\bX \bA) rd \ln(1/\epsilon))$ time, where $\kappa(\bX \bA)$ is the condition number of $\bX \bA$ and $r$ is the number of rows of $\bX \bA$. This was indeed the strategy implemented in the popular Blendenpik/LSRN approach~\cite{AMT10}.

We now state and prove a lemma that establishes sufficient conditions on \emph{any} matrix $\bX$ such that the solution vector $\tilde{\bx}_{opt}$ to the least squares problem of Eqn.~(\ref{eqn:orig_ls_prob_Xrotated}) will satisfy the relative-error bounds of Theorem~\ref{thm:alg_sample_fast}. Recall that the SVD of $\bA$ is $\bA=\bU_A \bSigma_A \bV_A^T$. In addition, for notational simplicity, we let $\bvb^{\perp} = \bU_A^{\perp}{\bU_A^{\perp}}^{T}\bvb$ denote the  part of the right hand side vector $\bvb$ lying outside of the column space of $\bA$.

The two conditions that we will require for the matrix $\bX$ are:
\begin{eqnarray}
\label{eqn:lemma1_ass1}
& & \sigma_{min}^2 \left( \bX \bU_A \right) \ge 1/\sqrt{2} \mbox{; and}  \\
\label{eqn:lemma1_ass2} & &
\TNormS{\bU_A^T \bX^T \bX \bvb^{\perp}}
      \le \epsilon \mathcal{Z}^2/2  ,
\end{eqnarray}
for some $\epsilon \in (0,1)$.
Several things should be noted about these conditions.
\begin{itemize}
\item
First, although Condition~(\ref{eqn:lemma1_ass1}) only states that $\sigma_i^2(\bX \bU_A)\geq 1/\sqrt{2}$, for all $i =1,\ldots, d$, our randomized algorithm satisfies $\abs{1-\sigma_i^2(\bX \bU_A)} \le 1-1/\sqrt{2}$, for all $i =1,\ldots, d$.
This is equivalent to $$\TNorm{I-\bU_A^T\bX^T \bX \bU_A } \le 1-1/\sqrt{2}.$$
Thus, one should think of $\bX \bU_A$ as an approximate isometry.
\item
Second, the lemma is a deterministic statement, since it makes no explicit reference to a particular randomized algorithm and since $\bX$ is not assumed to be constructed from a randomized process. Failure probabilities will enter later when we show that our randomized algorithm constructs an $\bX$ that satisfies Conditions~(\ref{eqn:lemma1_ass1}) and~(\ref{eqn:lemma1_ass2}) with some probability.
\item
Third, Conditions~(\ref{eqn:lemma1_ass1}) and~(\ref{eqn:lemma1_ass2}) define what has come to be known as a \emph{subspace embedding}, since it is an embedding that preserves the geometry of the entire subspace of the matrix $\bA$.
Such a subspace embedding can be \emph{oblivious} (meaning that it is constructed without knowledge of the input matrix, as with random projection algorithms) or \emph{non-oblivious} (meaning that it is constructed from information in the input matrix, as with data-dependent nonuniform sampling algorithms).
This style of analysis represented a major advance in RandNLA algorithms, since it premitted much stronger bounds to be obtained than had been possible with previous methods.
See~\cite{DMMS11} for the journal version (which was a combination and extension of two previous conference papers) of the first paper to use this style of analysis.
\item
Fourth, Condition~(\ref{eqn:lemma1_ass2}) simply states that $\bX \bvb^{\perp}=\bX \bU_A^{\perp}{\bU_A^{\perp}}^{T}\bvb$ remains approximately orthogonal to $\bX \bU_A$.  Clearly, before applying $\bX$, it holds that $\bU_A^T\bvb^{\perp}=0$.
\item
Fifth, although Condition~(\ref{eqn:lemma1_ass2}) depends on the right hand side vector $\bvb$, the \textsc{RandLeastSquares} algorithm will satisfy it without using any information from $\bvb$.
(See Lemma~\ref{lem:sample_lem40pf} below.)
\end{itemize}
Given Conditions~(\ref{eqn:lemma1_ass1}) and~(\ref{eqn:lemma1_ass2}), we can establish the following lemma.

\begin{lemma} \label{lem:suff_cond}
Consider the overconstrained least squares approximation problem of Eqn.~(\ref{eqn:orig_ls_prob}) and let the matrix $\bU_A \in \Rs{n \times d}$ contain the top $d$ left singular vectors of $\bA$. Assume that the matrix $\bX$ satisfies Conditions~(\ref{eqn:lemma1_ass1}) and~(\ref{eqn:lemma1_ass2}) above, for some $\epsilon \in (0,1)$. Then, the solution vector $\tilde{\bx}_{opt}$ to the least squares approximation
problem~(\ref{eqn:orig_ls_prob_Xrotated}) satisfies:
\begin{eqnarray}
\label{eqn:lemma1_eq3}
\TNorm{\bA\tilde{\bx}_{opt}-\bvb} &\le& (1+\epsilon) \mathcal{Z}  \mbox{, and}  \\
\label{eqn:lemma1_eq4}
\TNorm{\bx_{opt}-\tilde{\bx}_{opt}}
  &\leq& \frac{1}{\sigma_{min}(\bA)}\sqrt{\epsilon}\mathcal{Z}  .
\end{eqnarray}
\end{lemma}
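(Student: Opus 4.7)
The natural approach is to reparametrize the sketched problem in a basis aligned with the SVD of $\bA$, so that the minimizer takes a closed form involving only $\bX\bU_A$ and $\bX\bvb^{\perp}$. Write $\bA=\bU_A\bSigma_A\bV_A^T$, set $\by=\bx-\bx_{opt}$ and $\bz=\bSigma_A\bV_A^T\by$. Because $\bA$ has rank $d$, the map $\by\mapsto\bz$ is a bijection on $\mathbb{R}^d$, and $\bA\by=\bU_A\bz$. Also, by the normal equations $\bvb-\bA\bx_{opt}=\bvb^{\perp}$. Substituting these into the sketched objective,
\[
\TNormS{\bX(\bA\bx-\bvb)}
=\TNormS{\bX\bU_A\bz-\bX\bvb^{\perp}},
\]
so minimizing over $\bx$ is equivalent to minimizing the right-hand side over $\bz\in\mathbb{R}^d$.

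\textbf{Step 1: closed form for the minimizer.} Condition~(\ref{eqn:lemma1_ass1}) implies $\sigma_{\min}(\bX\bU_A)>0$, so $\bX\bU_A$ has full column rank and the unique minimizer is
\[
\bz_{opt}=(\bX\bU_A)^{\dagger}\bX\bvb^{\perp}
=\bigl((\bX\bU_A)^T(\bX\bU_A)\bigr)^{-1}(\bX\bU_A)^T\bX\bvb^{\perp}.
\]
Inverting the change of variables gives $\tilde{\bx}_{opt}-\bx_{opt}=\bV_A\bSigma_A^{-1}\bz_{opt}$.

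\textbf{Step 2: bound $\TNorm{\bz_{opt}}$.} Using submultiplicativity and the two stated conditions,
\[
\TNorm{\bz_{opt}}\le\frac{1}{\sigma_{\min}^2(\bX\bU_A)}\,\TNorm{\bU_A^T\bX^T\bX\bvb^{\perp}}
\le\sqrt{2}\cdot\sqrt{\epsilon\mathcal{Z}^2/2}=\sqrt{\epsilon}\,\mathcal{Z}.
\]

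\textbf{Step 3: forward error (Eqn.~(\ref{eqn:lemma1_eq3})).} Compute
\[
\TNormS{\bA\tilde{\bx}_{opt}-\bvb}=\TNormS{\bU_A\bz_{opt}-\bvb^{\perp}}.
\]
Since $\bU_A\bz_{opt}\in\range{\bA}$ and $\bvb^{\perp}\perp\range{\bA}$, the matrix Pythagoras relation (Lemma~\ref{l_pyth}, applied elementwise, or directly the vector version Theorem~\ref{thm:pythagoras}) yields
\[
\TNormS{\bU_A\bz_{opt}-\bvb^{\perp}}=\TNormS{\bz_{opt}}+\mathcal{Z}^2
\le(\epsilon+1)\mathcal{Z}^2,
\]
using orthonormality of $\bU_A$ and Step~2. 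Taking square roots and using $\sqrt{1+\epsilon}\le 1+\epsilon$ gives the first claim.

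\textbf{Step 4: solution error (Eqn.~(\ref{eqn:lemma1_eq4})).} From Step~1,
\[
\TNorm{\bx_{opt}-\tilde{\bx}_{opt}}=\TNorm{\bV_A\bSigma_A^{-1}\bz_{opt}}
\le\TNorm{\bSigma_A^{-1}}\,\TNorm{\bz_{opt}}=\frac{\TNorm{\bz_{opt}}}{\sigma_{\min}(\bA)},
\]
and Step~2 gives the desired $\sqrt{\epsilon}\mathcal{Z}/\sigma_{\min}(\bA)$ bound.

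\textbf{Main obstacle.} There is no deep obstruction; the proof is essentially a careful bookkeeping argument. The one subtle point is legitimizing the closed-form expression for $\bz_{opt}$, which relies on $\bX\bU_A$ having full column rank, and that is exactly what Condition~(\ref{eqn:lemma1_ass1}) is designed to guarantee. Once that is in hand, Pythagoras handles the forward bound and the singular values of $\bSigma_A$ handle the backward bound.
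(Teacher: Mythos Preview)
Your proof is correct and follows essentially the same route as the paper: reparametrize the sketched problem via $\bz=\bSigma_A\bV_A^T(\bx-\bx_{opt})$ so that the objective becomes $\TNormS{\bX\bU_A\bz-\bX\bvb^{\perp}}$, use Conditions~(\ref{eqn:lemma1_ass1}) and~(\ref{eqn:lemma1_ass2}) to obtain $\TNorm{\bz_{opt}}\le\sqrt{\epsilon}\,\mathcal{Z}$, and then finish with Pythagoras for the residual bound and the smallest singular value of $\bA$ for the solution bound. The only cosmetic difference is that you write the minimizer via the explicit pseudoinverse $((\bX\bU_A)^T\bX\bU_A)^{-1}(\bX\bU_A)^T\bX\bvb^{\perp}$, whereas the paper characterizes $\bz_{opt}$ through the normal equations and takes norms of both sides; the resulting inequality $\TNormS{\bz_{opt}}\le\epsilon\mathcal{Z}^2$ is identical.
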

\begin{proof}
Let us first rewrite the down-scaled regression problem induced by
$\bX$ as
\begin{eqnarray}
\nonumber \min_{\bx \in \Rs{d}} \TNormS{ \bX \bvb - \bX \bA \bx} &=& \min_{\bx \in \Rs{d}} \TNormS{\bX \bA x-\bX \bvb}\\
\label{eqn:ds1} &=& \min_{\by \in \Rs{d}} \TNormS{
\bX \bA(\bx_{opt}+\by)-\bX( \bA x_{opt}+\bvb^{\perp})}                 \\
\nonumber
  &=& \min_{\by \in \Rs{d}} \TNormS{\bX \bA \by-\bX \bvb^{\perp}}    \\
\label{eqn:ds2}
  &=& \min_{\bz \in \Rs{d}} \TNormS{\bX \bU_A\bz-\bX \bvb^{\perp}}.
\end{eqnarray}
Eqn.~(\ref{eqn:ds1}) follows since $\bvb=\bA \bx_{opt}+\bvb^{\perp}$ and Eqn.~(\ref{eqn:ds2}) follows since the columns of the matrix $\bA$ span the same subspace as the columns of $\bU_A$. Now, let $\bz_{opt} \in \Rs{d}$  be such that
$\bU_A \bz_{opt} = \bA (\tilde{\bx}_{opt}-\bx_{opt})$. Using this value for $\bz_{opt}$, we will prove that $\bz_{opt}$ is minimizer of the above optimization problem, as follows:
\begin{eqnarray}
\nonumber\TNormS{\bX \bU_A \bz_{opt} - \bX \bvb^{\perp}} &=&   \TNormS{\bX \bA (\tilde{\bx}_{opt}-\bx_{opt}) - \bX \bvb^{\perp}}\\
\nonumber&=&   \TNormS{\bX \bA \tilde{\bx}_{opt} - \bX \bA \bx_{opt} - \bX \bvb^{\perp}}\\
\label{eqn:pdch51}&=&   \TNormS{\bX \bA \tilde{\bx}_{opt} - \bX\bvb}\\
\nonumber&=&  \min_{\bx \in \Rs{d}} \TNormS{\bX \bA \bx-\bX \bvb}  \\ 
\nonumber &=& \min_{\bz \in \Rs{d}} \TNormS{\bX \bU_Az-\bX \bvb^{\perp}}.
\end{eqnarray}
Eqn.~(\ref{eqn:pdch51}) follows since $\bvb=\bA \bx_{opt}+\bvb^{\perp}$ and the last equality follows from Eqn.~(\ref{eqn:ds2}). Thus, by the normal equations~(\ref{eqn:normal_eqn}), we have that
\begin{equation*}
\label{eqn:ds-normal} (\bX \bU_A)^T\bX \bU_A \bz_{opt} =
(\bX \bU_A)^T \bX \bvb^{\perp}.
\end{equation*}
Taking the norm of both sides and observing that under Condition~(\ref{eqn:lemma1_ass1}) we have $\sigma_i((\bX \bU_A)^T \bX \bU_A) = \sigma_i^2(\bX \bU_A) \ge 1/\sqrt{2}$, for all $i$, it follows that
\begin{equation}
\label{eqn:z-norm1}
  \TNormS{\bz_{opt}} / 2  \le \TNormS{(\bX \bU_A)^T\bX \bU_Az_{opt}} = \TNormS{
(\bX \bU_A)^T \bX \bvb^{\perp} }.
\end{equation}
Using Condition~(\ref{eqn:lemma1_ass2}) we observe that
\begin{equation}
\label{eqn:z-norm2}
   \TNormS {\bz_{opt}} \le \epsilon\mathcal{Z}^2.
\end{equation}

To establish the first claim of the lemma, let us rewrite
the norm of the residual vector as
\begin{eqnarray}
\TNormS{ \bvb - \bA \tilde{\bx}_{opt} } \nonumber
   &=& \TNormS{ \bvb - \bA \bx_{opt} + \bA \bx_{opt} - \bA \tilde{\bx}_{opt} }  \\
\label{eqn:pfCeq1}
   &=& \TNormS{ \bvb - \bA \bx_{opt} } + \TNormS{ \bA \bx_{opt} - \bA \tilde{\bx}_{opt} } \\
\label{eqn:pfCeq2}
   &=& \mathcal{Z}^{2} + \TNormS{-\bU_A \bz_{opt}} \\
\label{eqn:pfCeq3}
   &\leq& \mathcal{Z}^{2} + \epsilon \mathcal{Z}^{2} ,
\end{eqnarray}
where Eqn.~(\ref{eqn:pfCeq1}) follows by the Pythagorean theorem, since $\bvb - \bA \bx_{opt}
= \bvb^\perp$ is orthogonal to $\bA$ and consequently to
$\bA(\bx_{opt} - \tilde{\bx}_{opt})$; Eqn.~(\ref{eqn:pfCeq2}) follows
by the definition of $\bz_{opt}$ and $\mathcal{Z}$; and Eqn.~(\ref{eqn:pfCeq3})
follows by (\ref{eqn:z-norm2}) and fact that $U_A$ has orthonormal columns. The first claim of the lemma follows since
$\sqrt{1+\epsilon} \le 1+\epsilon$.

To establish the second claim of the lemma, recall that $\bA(\bx_{opt}-\tilde{\bx}_{opt})=\bU_A\bz_{opt}$. If we take the norm of both sides of this expression, we have that
\begin{eqnarray}
\TNormS{ \bx_{opt}-\tilde{\bx}_{opt} } \label{eqn:pfDeq1}
  &\leq& \frac {\TNormS{\bU_Az_{opt}}} {\sigma_{min}^2(\bA)} \\
\label{eqn:pfDeq2}
  &\leq& \frac {\epsilon\mathcal{Z}^2} {\sigma_{min}^2(\bA)},
\end{eqnarray}
where Eqn.~(\ref{eqn:pfDeq1}) follows since $\sigma_{min}(\bA)$ is the
smallest singular value of $\bA$ and since the rank of $\bA$ is
$d$; and Eqn.~(\ref{eqn:pfDeq2}) follows by
Eqn.~(\ref{eqn:z-norm2}) and the orthonormality of the columns of $\bU_A$. Taking the square root, the second claim of the
lemma follows.
\end{proof}

If we make no assumption on $\bvb$, then Eqn.~(\ref{eqn:lemma1_eq4}) from
Lemma~\ref{lem:suff_cond} may provide a weak bound in terms of
$\TNorm{\bx_{opt}}$. If, on the other hand, we make the additional
assumption that a constant fraction of the norm of $\bvb$ lies in
the subspace spanned by the columns of $\bA$,
then Eqn.~(\ref{eqn:lemma1_eq4}) can be strengthened.
Such an assumption is reasonable, since most least-squares problems are
practically interesting if at least some part of $\bvb$ lies in the subspace
spanned by the columns of $\bA$.

\begin{lemma}
\label{lem:suff_cond2}
Using the notation of Lemma~\ref{lem:suff_cond}, and additionally assuming
that $\TNorm{\bU_A \bU_A^T\bvb} \geq \gamma\TNorm{\bvb}$, for some fixed
$\gamma \in (0,1]$, it follows that
\begin{equation}
\TNorm{\bx_{opt}-\tilde{\bx}_{opt}}
  \leq \sqrt{\epsilon}\left(\kappa(\bA)\sqrt{\gamma^{-2}-1}\right)\TNorm{\bx_{opt}} .
\end{equation}
\end{lemma}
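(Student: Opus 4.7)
The plan is to combine the bound from Lemma~\ref{lem:suff_cond} with the new hypothesis by expressing $\mathcal{Z}$ in terms of $\TNorm{\bx_{opt}}$ rather than $\TNorm{\bvb^\perp}$ in absolute terms. Starting from Eqn.~(\ref{eqn:pfDeq2}), we already have
\[\TNormS{\bx_{opt}-\tilde{\bx}_{opt}}\leq\frac{\epsilon\,\mathcal{Z}^{2}}{\sigma_{\min}^{2}(\bA)},\]
so the whole task is to bound $\mathcal{Z}^{2}=\TNormS{\bvb^{\perp}}$ by $(\gamma^{-2}-1)\sigma_{\max}^{2}(\bA)\TNormS{\bx_{opt}}$; taking the square root then immediately produces $\sqrt{\epsilon}\,\kappa(\bA)\sqrt{\gamma^{-2}-1}\,\TNorm{\bx_{opt}}$.

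First I would decompose $\bvb$ orthogonally as $\bvb=\bU_A\bU_A^{T}\bvb+\bvb^{\perp}$. Because $\bvb^{\perp}=\bU_A^{\perp}{\bU_A^{\perp}}^{T}\bvb$ is orthogonal to $\bU_A\bU_A^{T}\bvb$, Theorem~\ref{thm:pythagoras} (Pythagoras) gives
\[\TNormS{\bvb}=\TNormS{\bU_A\bU_A^{T}\bvb}+\TNormS{\bvb^{\perp}},\]
so $\mathcal{Z}^{2}=\TNormS{\bvb^{\perp}}=\TNormS{\bvb}-\TNormS{\bU_A\bU_A^{T}\bvb}$. The assumption $\TNorm{\bU_A\bU_A^{T}\bvb}\geq\gamma\TNorm{\bvb}$ gives $\TNormS{\bvb}\leq\gamma^{-2}\TNormS{\bU_A\bU_A^{T}\bvb}$, hence
\[\mathcal{Z}^{2}\leq(\gamma^{-2}-1)\TNormS{\bU_A\bU_A^{T}\bvb}.\]

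Next I would convert $\TNormS{\bU_A\bU_A^{T}\bvb}$ into a quantity involving $\bx_{opt}$. Since $\bA$ has full column rank, $\bA\bx_{opt}=\bA\bA^{\dagger}\bvb$ is the orthogonal projection of $\bvb$ onto $\range{\bA}=\range{\bU_A}$, so $\bA\bx_{opt}=\bU_A\bU_A^{T}\bvb$. Therefore
\[\TNorm{\bU_A\bU_A^{T}\bvb}=\TNorm{\bA\bx_{opt}}\leq\sigma_{\max}(\bA)\TNorm{\bx_{opt}},\]
which yields $\mathcal{Z}^{2}\leq(\gamma^{-2}-1)\sigma_{\max}^{2}(\bA)\TNormS{\bx_{opt}}$.

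Finally, I would substitute this into the bound from Lemma~\ref{lem:suff_cond}:
\[\TNormS{\bx_{opt}-\tilde{\bx}_{opt}}\leq\frac{\epsilon(\gamma^{-2}-1)\sigma_{\max}^{2}(\bA)}{\sigma_{\min}^{2}(\bA)}\TNormS{\bx_{opt}}=\epsilon\,\kappa^{2}(\bA)(\gamma^{-2}-1)\TNormS{\bx_{opt}},\]
and take square roots. There is no real obstacle here: the argument is a short bookkeeping exercise once one notices the identity $\bU_A\bU_A^{T}\bvb=\bA\bx_{opt}$ and uses Pythagoras to turn the hypothesis on $\TNorm{\bU_A\bU_A^{T}\bvb}/\TNorm{\bvb}$ into an upper bound on $\mathcal{Z}^{2}/\TNormS{\bU_A\bU_A^{T}\bvb}$. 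The only care needed is to keep the squares straight and to invoke $\sigma_{\max}(\bA)=\TNorm{\bA}$ (so that $\sigma_{\max}/\sigma_{\min}$ is precisely $\kappa(\bA)$).
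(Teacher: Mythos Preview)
Your proof is correct and is essentially identical to the paper's own argument: both use Pythagoras to write $\mathcal{Z}^{2}=\TNormS{\bvb}-\TNormS{\bU_A\bU_A^{T}\bvb}\leq(\gamma^{-2}-1)\TNormS{\bU_A\bU_A^{T}\bvb}$, then invoke $\bU_A\bU_A^{T}\bvb=\bA\bx_{opt}$ and $\TNorm{\bA\bx_{opt}}\leq\sigma_{\max}(\bA)\TNorm{\bx_{opt}}$, and finally substitute into Eqn.~(\ref{eqn:lemma1_eq4}). The only difference is that you spell out the Pythagorean decomposition a bit more explicitly than the paper does.
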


\begin{proof}
Since $\TNorm{\bU_A \bU_A^T\bvb} \geq \gamma\TNorm{\bvb}$, it follows that
\begin{eqnarray}
         \mathcal{Z}^2
\nonumber    &=&    \TNormS{\bvb} - \TNormS{\bU_A \bU_A^T \bvb}          \\
\nonumber    &\leq& (\gamma^{-2}-1) \TNormS{\bU_A \bU_A^T \bvb}         \\
\nonumber    &\leq&
{\sigma_{\max}^{2}(\bA)}(\gamma^{-2}-1)\TNormS{\bx_{opt}}  .
\end{eqnarray}
This last inequality follows from $\bU_A \bU_A^T\bvb = \bA \bx_{opt}$, which implies \[ \TNorm{\bU_A \bU_A^T \bvb} = \TNorm{\bA \bx_{opt}} \leq \TNorm{\bA} \TNorm{\bx_{opt}} = \sigma_{\max}\left(\bA\right)\TNorm{\bx_{opt}}. \]
By combining this with Eqn. (\ref{eqn:lemma1_eq4}) of
Lemma~\ref{lem:suff_cond}, the lemma follows.
\end{proof}


\subsection{The proof of Theorem~\ref{thm:alg_sample_fast}.}\label{sxn:ls:thmproof}

To prove Theorem~\ref{thm:alg_sample_fast}, we adopt the following approach: we first show that the Randomized Hadamard Transform has the effect preprocessing or preconditioning the input matrix to make the leverage scores approximately uniform; and we then show that Condition~(\ref{eqn:lemma1_ass1}) and~(\ref{eqn:lemma1_ass2}) can be satisfied by sampling uniformly on the preconditioned input.
The theorem will then follow from Lemma~\ref{lem:suff_cond}.

\subsubsection*{The effect of the Randomized Hadamard Transform.}
We start by stating a lemma that quantifies the manner in which $\bH \bD$ approximately ``uniformizes'' information in the left singular subspace of the matrix $\bA$; this will allow us to sample uniformly and apply our randomized matrix multiplication results from Section~\ref{chapter:MM} in order to analyze the proposed algorithm. We state the lemma for a general $n \times d$ orthogonal matrix $\bU$ such that $\bU^T \bU = \bI_d$.

\begin{lemma} \label{lem:HU}
Let $\bU$ be an $n \times d$ orthogonal matrix and let the product $\bH \bD$ be the $n \times n$ Randomized Hadamard Transform of Section~\ref{sxn:RHT}. Then, with probability at least $.95$,
\begin{eqnarray}
\label{eqn:lem:HU_eqn2} \TNormS{\left(\bH \bD \bU \right)_{i*}}
   &\leq& \frac{2d\ln(40nd)}{n},\qquad
       \text{ for all } i =1,\ldots, n   .
\end{eqnarray}
\end{lemma}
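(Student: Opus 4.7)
\textbf{Proof proposal for Lemma~\ref{lem:HU}.} My plan is to control each entry of $\bH\bD\bU$ individually using a Hoeffding-type tail bound, exploiting the randomness in the diagonal sign matrix $\bD$, and then apply a union bound over all $nd$ entries and sum across the $d$ columns to obtain the row-norm bound.

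First, fix indices $i\in\{1,\dots,n\}$ and $j\in\{1,\dots,d\}$, and expand
\[
(\bH\bD\bU)_{ij} \;=\; \sum_{k=1}^n \bH_{ik}\,\bD_{kk}\,\bU_{kj} \;=\; \sum_{k=1}^n X_k,
\]
where $X_k := \bH_{ik}\,\bD_{kk}\,\bU_{kj}$. Because the $\bD_{kk}$ are independent Rademacher $\pm 1$ variables, each $X_k$ is a zero-mean bounded random variable with $|X_k| \le |\bU_{kj}|/\sqrt{n}$ (recall $|\bH_{ik}|=1/\sqrt{n}$). Since the columns of $\bU$ are orthonormal, $\sum_{k=1}^n \bU_{kj}^2 = 1$, and therefore $\sum_k (2|\bU_{kj}|/\sqrt{n})^2 = 4/n$. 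Hoeffding's inequality then gives
\[
\Probab{\,|(\bH\bD\bU)_{ij}| \,\ge\, t\,} \;\le\; 2\exp\!\bigl(-n t^2/2\bigr).
\]

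Next, I would set $t^2 = 2\ln(40nd)/n$, so that the right-hand side becomes $2/(40nd) = 1/(20nd)$. Applying the union bound over all $nd$ pairs $(i,j)$, with probability at least $1 - nd \cdot 1/(20nd) = 19/20 = .95$ every entry simultaneously satisfies $(\bH\bD\bU)_{ij}^2 \le 2\ln(40nd)/n$. Summing over the $d$ columns yields the stated row bound
\[
\TNormS{(\bH\bD\bU)_{i*}} \;=\; \sum_{j=1}^d (\bH\bD\bU)_{ij}^2 \;\le\; \frac{2d\ln(40nd)}{n}, \quad \text{for all } i=1,\ldots,n.
\]

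The only step that requires any care is the concentration bound: one must verify that Hoeffding applies (the $X_k$ are bounded and independent because $\bD_{kk}$ are independent Rademachers) and that the chosen $t$ makes the per-entry failure probability small enough that the union bound over all $nd$ entries still leaves us with total failure at most $1/20$. No other step poses any real obstacle; the orthonormality $\bU^T\bU=\bI_d$ together with the flatness of $\bH$ (all entries of magnitude $1/\sqrt{n}$) is exactly what makes the Hadamard-times-sign-flip transformation ``spread out'' each column of $\bU$ uniformly across all coordinates.
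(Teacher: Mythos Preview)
Your proposal is correct and matches the paper's proof essentially step for step: the paper likewise fixes $(i,j)$, writes $(\bH\bD\bU)_{ij}$ as a sum of independent Rademacher-weighted terms bounded by $|\bU_{\ell j}|/\sqrt{n}$, applies Hoeffding to get a per-entry tail bound $2\exp(-nt^2/2)$, sets the threshold so that the failure probability is $1/(20nd)$, union-bounds over all $nd$ entries, and then sums the squared entries across each row. There is no meaningful difference in approach.
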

The following well-known inequality~\cite[Theorem 2]{Hoeffding1963} will be useful in the proof.
(See also the chapter by Vershynin in this volume~\cite{pcmi-chapter-vershynin} for related results.)
\begin{lemma}\label{lem:hoef}
Let $X_i$, $i=1,\ldots, n$ be independent random variables with finite first and second moments such that, for all $i$, $a_i \leq X_i \leq b_i$.  Then, for any $t >0$,
\[\Probab{\abs{\sum_{i=1}^n X_i - \sum_{i=1}^n \Expect{X_i}} \geq nt} \leq 2\exp{\left(-\frac{2n^2t^2}{\sum_{i=1}^{n}(a_i - b_i)^2}\right)}.\]
\end{lemma}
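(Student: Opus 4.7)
The plan is the standard Cram\'er--Chernoff route, using moment generating functions and optimizing a free parameter. First I would reduce to centered variables: set $Y_i = X_i - \Expect{X_i}$, so the $Y_i$ are independent with $\Expect{Y_i}=0$ and $Y_i \in [a_i - \Expect{X_i},\, b_i - \Expect{X_i}]$, an interval whose \emph{width} is still $b_i - a_i$. It suffices to prove the one-sided tail bound
\[
\Probab{\sum_{i=1}^n Y_i \geq nt} \leq \exp\!\left(-\frac{2n^2 t^2}{\sum_{i=1}^n (b_i - a_i)^2}\right),
\]
since applying the same bound to $-Y_i$ and taking a union bound picks up the factor of $2$ and the absolute value.

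Next I would apply the exponential Markov (Chernoff) step: for any $s > 0$,
\[
\Probab{\sum_{i=1}^n Y_i \geq nt} \;=\; \Probab{e^{s \sum_i Y_i} \geq e^{snt}} \;\leq\; e^{-snt}\prod_{i=1}^n \Expect{e^{s Y_i}},
\]
using Markov's inequality from Section~\ref{sxn:review:Markov} together with independence to factor the moment generating function. This reduces the task to bounding each $\Expect{e^{sY_i}}$ individually.

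The technical heart is Hoeffding's lemma: if $Y$ is a zero-mean random variable with $c \leq Y \leq d$, then $\Expect{e^{sY}} \leq \exp\!\bigl(s^2(d-c)^2/8\bigr)$. I would prove this by convexity: for each realization of $Y$, write $Y$ as the convex combination $Y = \tfrac{d-Y}{d-c}\,c + \tfrac{Y-c}{d-c}\,d$, so convexity of $x \mapsto e^{sx}$ gives $e^{sY} \leq \tfrac{d-Y}{d-c}e^{sc} + \tfrac{Y-c}{d-c}e^{sd}$. Taking expectations and using $\Expect{Y}=0$ yields an explicit expression in $c,d,s$; setting $p = -c/(d-c)$ and $h = s(d-c)$, the log of the bound becomes $L(h) = -hp + \ln(1-p+pe^h)$. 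A direct calculation shows $L(0)=0$, $L'(0)=0$, and $L''(h) \leq 1/4$ uniformly in $h$ (since $L''(h)$ is the variance of a Bernoulli-like quantity bounded in $[0,1]$), so Taylor's theorem delivers $L(h) \leq h^2/8$. I expect this step to be the main obstacle, as the uniform bound $L''(h)\leq 1/4$ is where all the combinatorial slack enters and it must be checked carefully.

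Finally I would combine the ingredients: substituting Hoeffding's lemma into the product gives
\[
\Probab{\sum_i Y_i \geq nt} \;\leq\; \exp\!\left(-snt + \frac{s^2}{8}\sum_{i=1}^n (b_i - a_i)^2\right).
\]
Minimizing the right-hand side over $s>0$ is a scalar calculus problem solved at $s^* = 4nt/\sum_i (b_i-a_i)^2$, which yields the claimed exponent $-2n^2t^2/\sum_i(b_i-a_i)^2$. Combining the one-sided bound with its mirror image for $-Y_i$ via the union bound completes the proof.
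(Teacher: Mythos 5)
Your proof is correct: the reduction to centered variables, the Chernoff/exponential-Markov step, Hoeffding's lemma via convexity and the second-derivative bound $L''(h)\leq 1/4$, and the optimization at $s^* = 4nt/\sum_i(b_i-a_i)^2$ all check out and yield exactly the stated exponent (note $(a_i-b_i)^2=(b_i-a_i)^2$). The paper itself gives no proof of this lemma---it is quoted directly from Hoeffding's 1963 paper---and your argument is precisely the classical one from that source, so there is nothing to compare beyond saying you have correctly supplied the standard proof the authors chose to omit.
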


Given this lemma, we now provide the proof of Lemma~\ref{lem:HU}.

\begin{proof}(of Lemma~\ref{lem:HU})
Consider $\left(\bH \bD \bU \right)_{ij}$ for some $i$, $j$ (recalling that $i=1,\ldots, n$ and $j = 1,\ldots, d$). Recall that $\bD$ is a diagonal matrix; then,
\begin{equation*}
\left(\bH \bD \bU \right)_{ij} = \sum_{\ell=1}^n \bH_{i\ell}\bD_{\ell \ell} \bU_{\ell j} =
\sum_{\ell=1}^n \bD_{\ell \ell} \left(\bH_{i\ell} \bU_{\ell j}\right)=\sum_{\ell=1}^n X_{\ell}.
\end{equation*}
Let $X_{\ell}=\bD_{\ell \ell} \left(\bH_{i\ell} \bU_{\ell j}\right)$ be our set of $n$ (independent) random variables. By the construction of $\bD$ and $\bH$, it is easy to see that $\Expect{X_{\ell}}=0$; also,
\[\abs{X_{\ell}} = \abs{\bD_{\ell \ell} \left(\bH_{i\ell} \bU_{\ell j}\right)}\leq \frac{1}{\sqrt{n}}\abs{\bU_{\ell j}}.\]
Applying Lemma~\ref{lem:hoef}, we get
\[\Probab{\abs{\left(\bH \bD \bU \right)_{ij}} \geq nt}\leq 2\exp{\left(-\frac{2n^3t^2}{4\sum_{\ell=1}^{n}\bU_{\ell j}^2}\right)}=
2\exp{\left(-n^3t^2/2\right)}.\]
In the last equality we used the fact that $\sum_{\ell=1}^{n}\bU_{\ell j}^2=1$, i.e., that the columns of $\bU$ are unit-length. Let the right-hand side of the above inequality be equal to $\delta$ and solve for $t$ to get
\[\Probab{\abs{\left(\bH \bD \bU \right)_{ij}} \geq \sqrt{\frac{2 \ln(2/\delta)}{n}}}\leq \delta.\]
Let $\delta = 1/(20nd)$ and apply the union bound over all $nd$ possible index pairs $(i,j)$ to get that, 
with probability at least 1-1/20=0.95,
for all $i,j$, 
\[\abs{\left(\bH \bD \bU \right)_{ij}} \leq \sqrt{\frac{2 \ln(40nd)}{n}}.\]
Thus,
\begin{equation}\label{eqn:eqP1}
\TNormS{\left(\bH \bD \bU \right)_{i*}} = \sum_{j=1}^d \left(\bH \bD \bU\right)_{ij}^2 \leq \frac{2d\ln(40nd)}{n}
\end{equation}
for all $i =1,\ldots,  n$, which concludes the proof of the lemma.
\end{proof}

\subsubsection*{Satisfying Condition~(\ref{eqn:lemma1_ass1}).}
We next prove the following lemma, which states that all the singular values of $\bS^T \bH \bD \bU_A$ are close to one, and in particular that Condition~(\ref{eqn:lemma1_ass1}) is satisfied by the \textsc{RandLeastSquares} algorithm.
The proof of this Lemma~\ref{lem:sample_lem20pf} essentially follows from our results in Theorem~\ref{thm:theorem7correct} for the \textsc{RandMatrixMultiply} algorithm (for approximating the product of a matrix and its transpose).

\begin{lemma}
\label{lem:sample_lem20pf}
Assume that Eqn.~(\ref{eqn:lem:HU_eqn2}) holds. If
\begin{equation}\label{eqn:rvalue}
r \geq 48^2 d \ln\left(40nd\right)\ln\left(100^2d \ln \left(40nd\right)\right)  ,
\end{equation}
then, with probability at least .95,
\[\abs{1 - \sigma_i^2\left(\bS^T \bH \bD \bU_A\right)} \leq 1-\frac{1}{\sqrt{2}}\]
holds for all $i =1,\ldots, d$.
\end{lemma}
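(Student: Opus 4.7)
The plan is to reduce this lemma to an application of Theorem~\ref{thm:theorem7correct}, with the matrix $\bU$ in that theorem taken to be $\bH\bD\bU_A$. First I would observe that since $\bH\bD$ is an orthogonal $n\times n$ matrix and $\bU_A$ has orthonormal columns, the product $\bU := \bH\bD\bU_A$ is itself an $n\times d$ matrix with orthonormal columns; hence $\bU^\top\bU = \bI_d$. Moreover, by the way the sampling matrix $\bS$ is constructed in the \textsc{RandLeastSquares} algorithm (uniform sampling with replacement, rescaled by $\sqrt{n/r}$), the product $\bS^\top\bU$ is exactly the matrix produced by running \textsc{RandMatrixMultiply} on $\bU^\top$ and $\bU$ with sampling probabilities $p_k = 1/n$ and $c = r$.

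Next I would verify the key hypothesis of Theorem~\ref{thm:theorem7correct}, namely the nearly-optimal leverage-score condition in Eqn.~(\ref{eqn:appopt4}). Under the assumption that Eqn.~(\ref{eqn:lem:HU_eqn2}) holds (which is exactly Lemma~\ref{lem:HU} applied to $\bU_A$), we have $\TNormS{\bU_{k*}} \le 2d\ln(40nd)/n$ for every $k$. Therefore
\[
p_k = \frac{1}{n} \;\ge\; \frac{1}{2\ln(40nd)}\cdot\frac{\TNormS{\bU_{k*}}}{d},
\]
so Eqn.~(\ref{eqn:appopt4}) is satisfied with $\beta = 1/(2\ln(40nd))$. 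This is the critical use of the Randomized Hadamard Transform: it flattens the leverage scores enough that uniform sampling becomes a valid (nearly optimal) sampling distribution.

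Now I would apply Theorem~\ref{thm:theorem7correct} with the above $\beta$, accuracy parameter $\epsilon = 1-1/\sqrt{2}$, and failure probability $\delta = 1/20$. Plugging in these values, the requirement of Eqn.~(\ref{eqn:CboundAppendix}) becomes
\[
r \;\ge\; \frac{96d\cdot 2\ln(40nd)}{(1-1/\sqrt{2})^2}\,\ln\!\left(\frac{96d\cdot 2\ln(40nd)}{(1-1/\sqrt{2})^2\sqrt{1/20}}\right),
\]
and a direct (purely numerical) estimate of the constants shows that the cleaner bound $r \ge 48^2 d\ln(40nd)\ln(100^2 d\ln(40nd))$ of Eqn.~(\ref{eqn:rvalue}) is sufficient. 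Thus with probability at least $0.95$ we obtain $\TNorm{\bI_d - (\bS^\top\bU)^\top(\bS^\top\bU)} \le 1-1/\sqrt{2}$.

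The final step is to translate this spectral bound on $\bI_d - (\bS^\top \bH\bD\bU_A)^\top(\bS^\top \bH\bD\bU_A)$ into the singular value statement of the lemma. Using $\sigma_i^2(\bS^\top\bH\bD\bU_A) = \lambda_i\bigl((\bS^\top\bH\bD\bU_A)^\top(\bS^\top\bH\bD\bU_A)\bigr)$ together with Weyl's inequality (or equivalently Eqn.~(\ref{eqn:svineq1})), the eigenvalues of a symmetric matrix differ from those of $\bI_d$ by at most its two-norm, yielding $|1-\sigma_i^2(\bS^\top\bH\bD\bU_A)| \le 1-1/\sqrt{2}$ for every $i$. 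The main obstacle is simply the careful bookkeeping of constants in the logarithmic $r$ bound; the structural steps (orthogonal invariance, uniformization via $\bH\bD$, and invocation of the matrix-Bernstein result) are essentially mechanical.
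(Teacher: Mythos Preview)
Your proposal is correct and follows essentially the same approach as the paper: both verify that $\bH\bD\bU_A$ has orthonormal columns, use Eqn.~(\ref{eqn:lem:HU_eqn2}) to establish the leverage-score condition of Eqn.~(\ref{eqn:appopt4}) with $\beta=(2\ln(40nd))^{-1}$, invoke Theorem~\ref{thm:theorem7correct} with $\epsilon=1-1/\sqrt{2}$ and $\delta=1/20$, and translate the resulting two-norm bound into the singular-value statement via Eqn.~(\ref{eqn:svineq1}). The only difference is expository ordering (the paper passes to the two-norm bound before applying the theorem, whereas you do so afterward), which is immaterial.
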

\begin{proof}(of Lemma~\ref{lem:sample_lem20pf})
Note that for all $i =1,\ldots, d$,
\begin{eqnarray}
\abs{1 - \sigma_i^2\left(\bS^T \bH \bD \bU_A\right)} \nonumber
   &=&    \abs{\sigma_i\left(\bU_A^T \bD \bH^T \bH \bD \bU_A\right)
            - \sigma_i\left(\bU_A^T \bD \bH^T \bS \bS^T \bH \bD \bU_A\right)}    \\
\label{eqn:eqX31}
   &\leq& \TNorm{\bU_A^T \bD \bH^T \bH \bD \bU_A - \bU_A^T \bD \bH^T \bS \bS^T \bH \bD \bU_A}.
\end{eqnarray}
In the above, we used the fact that $\bU_A^T \bD \bH^T \bH \bD \bU_A = \bI_d$ and inequality~(\ref{eqn:svineq1}) that was discussed in our Linear Algebra review in Section~\ref{sxn:review:SVD}. We now view $\bU_A^T \bD \bH^T \bS \bS^T\bH \bD \bU_A$ as an approximation to the product of two matrices, namely $\bU_A^T \bD \bH^T=\left(\bH \bD \bU_A\right)^T$ and $\bH \bD \bU_A$, constructed by randomly sampling and rescaling columns of $\left(\bH \bD \bU_A\right)^T$. Thus, we can leverage Theorem~\ref{thm:theorem7correct}.

More specifically, consider the matrix $\left(\bH \bD \bU_A\right)^T$. Obviously, since $\bH$, $\bD$, and $\bU_A$ are orthogonal matrices, $\TNorm{\bH \bD \bU_A}=1$ and $\FNorm{\bH \bD \bU_A}=\FNorm{\bU_A}=\sqrt{d}$. Let $\beta = \left(2\ln(40nd)\right)^{-1}$; since we assumed that Eqn.~(\ref{eqn:lem:HU_eqn2}) holds, we note that the columns of $\left(\bH \bD \bU_A \right)^T$, which correspond to the rows of $\bH \bD \bU_A$, satisfy
\begin{equation}
\label{eqn:unif_prob_OK} \frac{1}{n}
   \ge  \beta \frac{\TNormS{\left(\bH \bD \bU_A\right)_{i*}}}{\FNormS{\bH \bD \bU_A}}, \qquad
       \text{ for all } i =1,\ldots, n  .
\end{equation}
Thus, applying Theorem~\ref{thm:theorem7correct} with $\beta = \left(2\ln(40nd)\right)^{-1}$, $\epsilon = 1 - 1/\sqrt{2}$, and $\delta = 1/20$ implies that
\[\TNorm{\bU_A^T \bD \bH^T \bH \bU_A - \bU_A^T\bD \bH^T \bS \bS^T \bH \bD \bU_A} \leq 1-\frac{1}{\sqrt{2}} \]
holds with probability at least $1-1/20=.95$. For the above bound to hold, we need $r$ to assume the value of Eqn.~(\ref{eqn:rvalue}). Finally, we note that since $\FNormS{\bH \bD \bU_A}=d \geq 1$, the assumption of Theorem~\ref{thm:theorem7correct} on the Frobenius norm of the input matrix is always satisfied. Combining the above with inequality~(\ref{eqn:eqX31}) concludes the proof of the lemma.
\end{proof}

\subsubsection*{Satisfying Condition~(\ref{eqn:lemma1_ass2}).}
We next prove the following lemma, which states that Condition~(\ref{eqn:lemma1_ass2}) is satisfied by the \textsc{RandLeastSquares} algorithm.
The proof of this Lemma~\ref{lem:sample_lem40pf} again essentially follows from our bounds for the \textsc{RandMatrixMultiply} algorithm from Section~\ref{chapter:MM} (except here it is used for approximating the product of a matrix and a vector).

\begin{lemma}
\label{lem:sample_lem40pf}
Assume that Eqn.~(\ref{eqn:lem:HU_eqn2}) holds.  If $r \geq 40d\ln(40nd)/\epsilon$, then, with probability at least .9,
\[\TNormS{
\left(\bS^T \bH \bD \bU_A\right)^{T}\bS^T \bH \bD \bvb^{\perp}
} \leq \epsilon \mathcal{Z}^{2}/2.\]
\end{lemma}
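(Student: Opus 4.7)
The plan is to recognize the left-hand side as an error of approximate matrix multiplication applied to a product whose exact value is zero, and then invoke the variance bound from Section~\ref{sxn:matmult:analysis:probs} together with Markov's inequality.

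First, rewrite the quantity inside the norm as
\[
(\bS^T \bH \bD \bU_A)^T \bS^T \bH \bD \bvb^\perp = (\bH \bD \bU_A)^T (\bS \bS^T) (\bH \bD \bvb^\perp),
\]
which is the \textsc{RandMatrixMultiply} estimator (with uniform sampling probabilities $p_k = 1/n$) for the product $(\bH \bD \bU_A)^T (\bH \bD \bvb^\perp)$. Because $\bH \bD$ is orthogonal and $\bvb^\perp = \bU_A^\perp (\bU_A^\perp)^T \bvb$ lies in the orthogonal complement of the column space of $\bU_A$, the exact product equals $\bU_A^T \bvb^\perp = \bzero$. Consequently the squared Frobenius norm of the error matrix equals the squared two-norm of the vector in question.

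Next, verify that uniform sampling qualifies as a ``nearly optimal'' choice in the sense of Eqn.~(\ref{eqn:appopt2}) for the matrix $(\bH \bD \bU_A)^T$. By the Randomized Hadamard Transform bound in Eqn.~(\ref{eqn:lem:HU_eqn2}), each row of $\bH \bD \bU_A$ satisfies $\TNormS{(\bH \bD \bU_A)_{i*}} \le 2d \ln(40nd)/n$, while $\FNormS{\bH \bD \bU_A} = d$. Hence uniform sampling $p_k = 1/n$ meets Eqn.~(\ref{eqn:appopt2}) with constant $\beta = 1/(2\ln(40nd))$. Applying the bound of Eqn.~(\ref{eqn:appopt2result}) to the product of $(\bH \bD \bU_A)^T$ and the vector $\bH \bD \bvb^\perp$, and using $\FNormS{\bH \bD \bvb^\perp} = \TNormS{\bvb^\perp} = \mathcal{Z}^2$, gives
\[
\Expect{\TNormS{(\bS^T \bH \bD \bU_A)^T \bS^T \bH \bD \bvb^\perp}} \le \frac{1}{\beta r}\, d\, \mathcal{Z}^2 = \frac{2 d \ln(40nd)}{r}\,\mathcal{Z}^2.
\]

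Finally, substitute $r \ge 40 d \ln(40nd)/\epsilon$ to obtain an expectation at most $\epsilon \mathcal{Z}^2/20$, and apply Markov's inequality: the probability that the quantity exceeds $\epsilon \mathcal{Z}^2/2$ is at most $1/10$, yielding the claimed $0.9$ success probability.

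The only subtle step is the probability calibration—one needs to check that the uniform probabilities really do satisfy Eqn.~(\ref{eqn:appopt2}) with the correct $\beta$, since the Frobenius-norm bound from Lemma~\ref{lem:HU} is what supplies the extra $\ln(40nd)$ factor that must ultimately be absorbed into $r$. Everything else is a direct appeal to results already established in Sections~\ref{chapter:MM} and~\ref{sxn:review:Markov}.
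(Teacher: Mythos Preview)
Your proposal is correct and follows essentially the same approach as the paper: both recognize the quantity as a \textsc{RandMatrixMultiply} error for a product whose true value is zero, verify via Eqn.~(\ref{eqn:lem:HU_eqn2}) that uniform sampling satisfies Eqn.~(\ref{eqn:appopt2}) with $\beta = (2\ln(40nd))^{-1}$, apply Eqn.~(\ref{eqn:appopt2result}) to bound the expectation by $d\mathcal{Z}^2/(\beta r)$, and finish with Markov's inequality. The only difference is cosmetic ordering---the paper applies Markov first and then solves for $r$, while you substitute $r$ first and then apply Markov---and the arithmetic matches exactly.
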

\begin{proof}(of Lemma~\ref{lem:sample_lem40pf})
Recall that $\bvb^{\perp} = \bU_A^{\perp}{\bU_A^{\perp}}^{T}\bvb$ and that ${\mathcal Z} = \TNorm{\bvb^{\perp}}$. We start by noting that since
$\TNormS{\bU_A^T \bD \bH^T \bH \bD \bvb^{\perp}}=\TNormS{\bU_A^T \bvb^{\perp}}=0$ it follows that
\[
 \TNormS{ \left(\bS^T \bH \bD \bU_A\right)^{T}\bS^T \bH \bD \bvb^{\perp} }
   = \TNormS{\bU_A^T \bD \bH^T \bS \bS^T \bH \bD \bvb^{\perp}
             - \bU_A^T \bD \bH^T \bH \bD \bvb^{\perp}}    .
\]
Thus, we can view $\left(\bS^T \bH \bD \bU_A\right)^{T}\bS^T \bH \bD \bvb^{\perp}$ as approximating the product of two matrices, $\left(\bH \bD \bU_A\right)^{T}$ and $\bH \bD \bvb^{\perp}$, by randomly sampling columns from $\left(\bH \bD \bU_A\right)^T$ and rows (elements) from $\bH \bD \bvb^{\perp}$. Note that the sampling probabilities are uniform and do not depend on the norms of the columns of $\left(\bH \bD \bU_A\right)^{T}$ or the rows of $\bH \bvb^{\perp}$. We will apply the bounds of Eqn.~(\ref{eqn:appopt2result}), after arguing that the assumptions of Eqn.~(\ref{eqn:appopt2}) are satisfied. Indeed, since we condition on Eqn.~(\ref{eqn:lem:HU_eqn2}) holding, the rows of $\bH \bD \bU_A$ (which of course correspond to columns of $\left(\bH \bD \bU_A\right)^T$) satisfy
\begin{equation}
\frac{1}{n}   \ge  \beta \frac{\TNormS{\left(\bH \bD \bU_A\right)_{i*}}}{\FNormS{\bH \bD \bU_A}}, \qquad       \text{ for all } i =1,\ldots, n,
\end{equation}
for $\beta = \left(2\ln(40nd)\right)^{-1}$. Thus, Eqn.~(\ref{eqn:appopt2result}) implies
\begin{equation*}
\Expect{ \TNormS{\left(\bS^T \bH \bD \bU_A\right)^{T}\bS^T \bH \bD \bvb^{\perp}
} }
   \leq \frac{1}{\beta r}\FNormS{\bH \bD \bU_A}\TNormS{\bH \bD \bvb^{\perp}}
   =     \frac{d{\mathcal Z}^2}{\beta r}.
\end{equation*}
In the above we used $\FNormS{\bH \bD \bU_A} = d$. Markov's inequality now implies that with probability at least .9,
\begin{equation*}
\TNormS{\left(\bS^T \bH \bD \bU_A\right)^{T}\bS^T \bH \bD \bvb^{\perp}
} \leq \frac{10d{\mathcal Z}^2}{\beta r}.
\end{equation*}
Setting $r \geq 20d/(\beta\epsilon)$ and using the value of $\beta$ specified above concludes the proof of the lemma.
\end{proof}

\subsubsection*{Completing the proof of Theorem~\ref{thm:alg_sample_fast}.}
The theorem follows since Lemmas~\ref{lem:sample_lem20pf} and~\ref{lem:sample_lem40pf} establish that the sufficient conditions of Lemma~\ref{lem:suff_cond} hold.
In more detail, we now complete the proof of Theorem~\ref{thm:alg_sample_fast}. First, let ${\mathcal E}_{(\ref{eqn:lem:HU_eqn2})}$ denote the event that Eqn.~(\ref{eqn:lem:HU_eqn2}) holds; clearly, $\Probab{{\mathcal E}_{(\ref{eqn:lem:HU_eqn2})}} \geq .95$. Second, let ${\mathcal E}_{\ref{lem:sample_lem20pf},\ref{lem:sample_lem40pf}|(\ref{eqn:lem:HU_eqn2})}$ denote the event that both Lemmas~\ref{lem:sample_lem20pf} and~\ref{lem:sample_lem40pf} hold conditioned on ${\mathcal E}_{(\ref{eqn:lem:HU_eqn2})}$ holding. Then,
\begin{align*}
{\mathcal E}_{\ref{lem:sample_lem20pf},\ref{lem:sample_lem40pf}|(\ref{eqn:lem:HU_eqn2})} &= 1 - \overline{{\mathcal E}_{\ref{lem:sample_lem20pf},\ref{lem:sample_lem40pf}|(\ref{eqn:lem:HU_eqn2})}}\\
&= 1 - \bf{Pr}\left(\left(\mbox{Lemma \ref{lem:sample_lem20pf} does not hold} | {\mathcal E}_{(\ref{eqn:lem:HU_eqn2})}\right)\right. \\
&\hspace{15mm} \textbf{OR}\left.\left(\mbox{Lemma \ref{lem:sample_lem40pf} does not hold} | {\mathcal E}_{(\ref{eqn:lem:HU_eqn2})}\right)\right)\\
&\ge 1 - \Probab{\left(\mbox{Lemma \ref{lem:sample_lem20pf} does not hold} | {\mathcal E}_{(\ref{eqn:lem:HU_eqn2})}\right)}\\
&\hspace{15mm} -\Probab{\left(\mbox{Lemma \ref{lem:sample_lem40pf} does not hold} | {\mathcal E}_{(\ref{eqn:lem:HU_eqn2})}\right)}\\
&\geq 1 - .05 - .1 = .85.
\end{align*}
In the above, $\overline{\mathcal E}$ denotes the complement of event ${\mathcal E}$. In the first inequality we used the union bound and in the second inequality we leveraged the bounds for the failure probabilities of Lemmas~\ref{lem:sample_lem20pf} and~\ref{lem:sample_lem40pf}, given that Eqn.~(\ref{eqn:lem:HU_eqn2}) holds. We now let ${\mathcal E}$ denote the event that both Lemmas~\ref{lem:sample_lem20pf} and~\ref{lem:sample_lem40pf} hold, without any a priori conditioning on event ${\mathcal E}_{(\ref{eqn:lem:HU_eqn2})}$; we will bound $\Probab{\mathcal E}$ as follows:
\begin{eqnarray*}
\Probab{\mathcal E} &=& \Probab{{\mathcal E} | {\mathcal E}_{(\ref{eqn:lem:HU_eqn2})}}\cdot \Probab{{\mathcal E}_{(\ref{eqn:lem:HU_eqn2})}}
+\Probab{{\mathcal E} | \overline{{\mathcal E}_{(\ref{eqn:lem:HU_eqn2})}}}\cdot \Probab{\overline{{\mathcal E}_{(\ref{eqn:lem:HU_eqn2})}}}\\
&\geq& \Probab{{\mathcal E} | {\mathcal E}_{(\ref{eqn:lem:HU_eqn2})}}\cdot \Probab{{\mathcal E}_{(\ref{eqn:lem:HU_eqn2})}}\\
&=& \Probab{{\mathcal E}_{\ref{lem:sample_lem20pf},\ref{lem:sample_lem40pf}|(\ref{eqn:lem:HU_eqn2})} | {\mathcal E}_{(\ref{eqn:lem:HU_eqn2})}}\cdot \Probab{{\mathcal E}_{(\ref{eqn:lem:HU_eqn2})}}\\
&\geq& .85\cdot .95 \geq .8.
\end{eqnarray*}
In the first inequality we used the fact that all probabilities are positive. The above derivation immediately bounds the success probability of Theorem~\ref{thm:alg_sample_fast}. Combining Lemmas~\ref{lem:sample_lem20pf} and~\ref{lem:sample_lem40pf} with the structural results of Lemma~\ref{lem:suff_cond} and setting $r$ as in Eqn.~(\ref{eqn:rvaluefinal}) concludes the proof of the accuracy guarantees of Theorem~\ref{thm:alg_sample_fast}.

\subsection{The running time of the {\small R{\scriptsize AND}L{\scriptsize EAST}S{\scriptsize QUARES}} algorithm.}\label{sxn:lsruntime}
We now discuss the running time of the \textsc{RandLeastSquares} algorithm. First of all, by the construction of $\bS$, the number of non-zero entries in $\bS$ is $r$. In Step $6$ we need to compute the products $\bS^T \bH \bD \bA$ and $\bS^T \bH \bD \bvb$. Recall that $\bA$ has $d$ columns and thus the running time of computing both products is equal to the time needed to apply $\bS^T \bH \bD$ on $(d+1)$ vectors.
In order to apply $\bD$ on $(d+1)$ vectors in $\Rs{n}$, $n(d+1)$ operations suffice. In order to estimate how many operations are needed to apply $\bS^T \bH$ on $(d+1)$ vectors, we use the following analysis that was first proposed in~\cite[Section 7]{AL09}.

Let $\bx$ be any vector in $\mathbb{R}^n$; multiplying $\bH$ by $\bx$ can be done as follows:
\begin{align*}
\begin{pmatrix} \bH_{n/2} & \bH_{n/2} \\ \bH_{n/2} & -\bH_{n/2} \end{pmatrix} \begin{pmatrix} \bx_1 \\ \bx_2 \end{pmatrix} = \begin{pmatrix} \bH_{n/2}(\bx_1 + \bx_2) \\ \bH_{n/2}(\bx_1-\bx_2) \end{pmatrix}.
\end{align*}
Let $T(n)$ be the number of operations required to perform this operation for $n$-dimensional vectors. Then,
\[T(n) = 2T(n/2) + n,\]
and thus $T(n) = O(n \log n)$. We can now include the sub-sampling matrix $\bS$ to get
\begin{align*}
\begin{pmatrix} \bS_1 & \bS_2 \end{pmatrix} \begin{pmatrix} \bH_{n/2} & \bH_{n/2} \\ \bH_{n/2} & -\bH_{n/2} \end{pmatrix} \begin{pmatrix} \bx_1 \\ \bx_2 \end{pmatrix} = \bS_1 \bH_{n/2}(\bx_1 + \bx_2) + \bS_2 \bH_{n/2} (\bx_1 - \bx_2).
\end{align*}
Let $\nnz{\cdot}$ denote the number of non-zero entries of its argument. Then,
\[T(n,\nnz{\bS}) = T(n/2, \nnz{\bS_1}) + T(n/2,\nnz{\bS_2}) + n. \]
From standard methods in the analysis of recursive algorithms, we can now use the fact that $r=\nnz{\bS}=\nnz{\bS_1}+\nnz{\bS_2}$ to prove that
\[T(n,r) \leq 2n \log_2(r+1).\]
Towards that end, let $r_1=\nnz{\bS_1}$ and let $r_2=\nnz{\bS_2}$. Then,
\begin{align*}
T(n,r) &= T(n/2, r_1) + T(n/2,r_2) + n\\
&\leq 2\frac{n}{2} \log_2(r_1+1)+2\frac{n}{2} \log_2(r_2+1)+n\log_2 2\\
&= n \log_2 (2(r_1+1)(r_2+1))\\
&\leq n \log_2 (r+1)^2\\
&= 2n \log_2 (r+1).
\end{align*}
The last inequality follows from simple algebra using $r=r_1+r_2$.
Thus, at most $2n(d+1)\log_2 \left(r+1\right)$ operations are needed to apply $\bS^T\bH\bD$ on $d+1$ vectors. After this preprocessing, the \textsc{RandLeastSquares} algorithm must compute the pseudoinverse of an $r \times d$ matrix, or, equivalently, solve a least-squares problem on $r$ constraints and $d$ variables. This operation can be performed in $O(rd^2)$ time since $r \geq d$. Thus, the entire algorithm runs in time
\[n(d+1) + 2n(d+1) \log_2 \left(r + 1\right) +\mathcal{O}\left(rd^2 \right).\]

\subsection{References.} Our presentation in this chapter follows closely the derivations in~\cite{DMMS11}; see~\cite{DMMS11} for a detailed discussion of prior work on this topic. We also refer the interested reader to~\cite{AMT10,Woodruff2014} for followup work on randomized solvers for least-squares problems.

\section{A RandNLA Algorithm for Low-rank Matrix Approximation}\label{sxn:main:lowrank}

In this section, we will present a simple randomized matrix algorithm for low-rank matrix approximation.
Algorithms to compute low-rank approximations to matrices have been of paramount importance historically in scientific computing (see, for example,~\cite{Saad2011} for traditional numerical methods based on subspace iteration and Krylov subspaces to compute such approximations) as well as more recently in machine learning and data analysis. RandNLA has pioneered an alternative approach, by applying random sampling and random projection algorithms to construct such low-rank approximations with provable accuracy guarantees; see~\cite{dkm_matrix2} for early work on the topic and~\cite{Mah-mat-rev_BOOK,HMT09_SIREV,Woodruff2014,MD2016} for overviews of more recent approaches. In this section, we will present and analyze a simple algorithm to approximate the top $k$ left singular vectors of a matrix $\bA \in \mathbb{R}^{m \times n}$.
Many RandNLA methods for low-rank approximation boil down to variants of this basic technique; see, e.g., the chapter by Martinsson in this volume~\cite{pcmi-chapter-martinsson}.
Unlike the previous section on RandNLA algorithms for regression problems, no particular assumptions will be imposed on $m$ and $n$; indeed, $\bA$ could be a square matrix.

\subsection{The main algorithm and main theorem.} \label{sxn:sampling:result2}
Our main algorithm is quite simple and again leverages the Randomized Hadamard Tranform of Section~\ref{sxn:RHT}. Indeed, let the matrix product $\bH \bD$ denote the $n \times n$ Randomized Hadamard Transform.
First, we \textit{postmultiply} the input matrix $\bA \in \mathbb{R}^{m \times n}$ by  $\left(\bH \bD\right)^T$, thus forming a new matrix $\bA \bD \bH \in \mathbb{R}^{m \times n}$.%
\footnote{Alternatively, we could \emph{premultiply} $\bA^T$ by $\bH\bD$.  The reader should become comfortable going back and forth with such manipulations.}
Then, we sample (uniformly at random) $c$ columns from the matrix $\bA \bD \bH $, thus forming a \textit{smaller} matrix $\bC \in \mathbb{R}^{m \times c}$.
Finally, we use a Ritz-Rayleigh type procedure to construct approximations $\tilde{\bU}_k \in \mathbb{R}^{m \times k}$ to the top $k$ left singular vectors of $\bA$ from $\bC$; these approximations lie within the column space of $\bC$.
See the \textsc{RandLowRank} algorithm (Algorithm~\ref{alg:alg_randlowrank}) for a detailed description of this procedure, using a sampling-and-rescaling matrix $\bS \in \mathbb{R}^{n \times c}$ to form the matrix $\bC$. Theorem~\ref{thm:relerrLowRank} is our main quality-of-approximation result for the~\textsc{RandLowRank} algorithm.

\begin{algorithm}[t] 
\begin{framed}

\textbf{Input:} $\bA \in \Rs{m \times n}$, a rank parameter $k \ll \min\{m,n\}$, and an error parameter $\epsilon \in (0,1/2)$.

\vspace{0.05in}

\textbf{Output:} $\tilde{\bU}_{k} \in \Rs{m \times k}$.

\begin{enumerate}

\item Let $c$ assume the value of Eqn.~(\ref{eqn:cval5}).

\item Let $\bS$ be an empty matrix.

\item \textbf{For} $t=1,\ldots,c$ (i.i.d. trials with replacement) \textbf{select uniformly at random} an integer from $\left\{1,2,\ldots,n\right\}$.

\begin{itemize}

\item \textbf{If} $i$ is selected, \textbf{then} append the column vector $\left(\sqrt{n/c}\right) \be_i$ to $\bS$, where $\be_i \in \Rs{n}$ is the $i$-th canonical vector.

\end{itemize}

\item Let $\bH \in \Rs{n\times n}$ be the normalized Hadamard transform
matrix.

\item Let $\bD \in \Rs{n \times n}$ be a diagonal matrix with
$$
\bD_{ii}
            = \left\{ \begin{array}{ll}
                         +1 & \mbox{, with probability $1/2$} \\
                         -1 & \mbox{, with probability $1/2$} \\
                      \end{array}
              \right.
$$
\item
Compute $\bC = \bA \bD \bH \bS \in \mathbb{R}^{m \times c}$.
\item
Compute $\bU_C$, a basis for the column space of $\bC$.
\item
Compute $\bW = \bU_C^T\bA$ and (assuming that its rank is at least $k$), compute its top $k$ left singular vectors $\bU_{W,k}$.
\item
Return $\tilde{\bU}_k=\bU_C\bU_{W,k} \in \mathbb{R}^{m \times k}$.
\end{enumerate}

\end{framed}
\caption{The \textsc{RandLowRank} algorithm} \label{alg:alg_randlowrank}
\end{algorithm}

\begin{theorem}
\label{thm:relerrLowRank}
Let $\bA \in \mathbb{R}^{m \times n}$, let $k$ be a rank parameter, and let $\epsilon \in (0,1/2]$.
If we set
\begin{equation}\label{eqn:cval5}
c \geq c_0\frac{k\ln n}{\epsilon^2} \left(\ln\frac{k}{\epsilon^2}+\ln\ln n\right),
\end{equation}
(for a fixed constant $c_0$) then, with probability at least .85, the \textsc{RandLowRank} algorithm returns a matrix $\tilde{\bU}_k \in \mathbb{R}^{m \times k}$ such that
\begin{equation}
\label{eqn:thm_relerrLowRank}
\FNorm{\bA-\tilde{\bU}_k\tilde{\bU}_k^T\bA} \le (1+\epsilon) \FNorm{\bA-\bU_k\bU_k^T\bA} = (1+\epsilon) \FNorm{\bA-\bA_k}.
\end{equation}
(Here, $\bU_k \in \mathbb{R}^{m \times k}$ contains the top $k$ left singular vectors of $\bA$).
The running time of the \textsc{RandLowRank} algorithm is
$O(mnc)$.
\end{theorem}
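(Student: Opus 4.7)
The plan is to reduce correctness to two structural properties of the random sketch $\bOmega = \bD\bH\bS \in \mathbb{R}^{n \times c}$, mirroring the two-condition analysis used for Theorem~\ref{thm:alg_sample_fast}. Write $\bA = \bA_k + \bA_{k,\perp}$ with $\bA_k = \bU_k\bSigma_k\bV_k^T$, and note that $\bC = \bA\bD\bH\bS = \bA\bOmega$. The first observation is that $\tilde{\bU}_k\tilde{\bU}_k^T\bA = \bU_C\bU_{W,k}\bU_{W,k}^T\bU_C^T\bA$ is the best rank-$k$ Frobenius approximation of $\bA$ among matrices whose columns lie in range($\bC$): for any such $Y = \bU_C Q$ with rank at most $k$, writing $\bA - Y = (\bI - \bU_C\bU_C^T)\bA + \bU_C(\bW - Q)$ and noting that the two summands have orthogonal column ranges gives, by Lemma~\ref{l_pyth},
$$\FNormS{\bA - Y} = \FNormS{(\bI - \bU_C\bU_C^T)\bA} + \FNormS{\bW - Q};$$
the minimum over rank-$k$ $Q$ is attained by $Q = \bU_{W,k}\bU_{W,k}^T \bW$, as chosen by the algorithm. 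Hence $\FNormS{\bA - \tilde{\bU}_k\tilde{\bU}_k^T\bA} \leq \FNormS{\bA - Y}$ for every rank-$\le k$ competitor $Y$ with columns in range($\bC$).

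I would then choose the competitor $Y = \bA\bOmega(\bV_k^T\bOmega)^{\dagger}\bV_k^T$, which has rank at most $k$ and columns in range($\bC$). Provided $\bV_k^T\bOmega$ has full row rank $k$, a direct check gives $\bA_k\bOmega(\bV_k^T\bOmega)^{\dagger}\bV_k^T = \bA_k$, and so
$$\bA - Y = \bA_{k,\perp} - \bA_{k,\perp}\bOmega(\bV_k^T\bOmega)^{\dagger}\bV_k^T.$$
A short trace computation---using $\bA_{k,\perp}\bV_k = \bzero$ (a consequence of the SVD) to annihilate the cross terms---yields
$$\FNormS{\bA - Y} = \FNormS{\bA_{k,\perp}} + \FNormS{\bA_{k,\perp}\bOmega(\bV_k^T\bOmega)^{\dagger}}.$$
It therefore suffices to show $\FNormS{\bA_{k,\perp}\bOmega(\bV_k^T\bOmega)^{\dagger}} \leq O(\epsilon)\FNormS{\bA_{k,\perp}}$ with good probability; taking square roots and absorbing the $\sqrt{1+\epsilon}\leq 1+\epsilon$ step then delivers the theorem's $(1+\epsilon)$ guarantee.

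To bound the residual term, first invoke Lemma~\ref{lem:HU} applied to the orthonormal $\bV_k$: with probability at least $.95$, $\TNormS{(\bH\bD\bV_k)_{i*}} \leq 2k\ln(40nk)/n$ for every $i$, so uniform sampling of rows by $\bS$ becomes $\beta$-nearly optimal in the sense of Eqn.~(\ref{eqn:appopt3}) with $\beta = \Theta(1/\log n)$ for the rows of $\bH\bD\bV_k$. Condition on this event. Two further ingredients then apply. (i) Subspace embedding: Theorem~\ref{thm:theorem7correct} with $\bU := \bH\bD\bV_k$ and accuracy $1-1/\sqrt{2}$ guarantees, for $c$ as in Eqn.~(\ref{eqn:cval5}), that $\sigma_{min}(\bV_k^T\bOmega) \geq 1/\sqrt{2}$ and hence $\TNorm{(\bV_k^T\bOmega\bOmega^T\bV_k)^{-1}} \leq 2$. (ii) Approximate multiplication: rewrite
$$\bA_{k,\perp}\bOmega(\bV_k^T\bOmega)^{\dagger} = \bA_{k,\perp}\bOmega\bOmega^T\bV_k\,(\bV_k^T\bOmega\bOmega^T\bV_k)^{-1},$$
so $\FNormS{\bA_{k,\perp}\bOmega(\bV_k^T\bOmega)^{\dagger}} \le 4\,\FNormS{\bA_{k,\perp}\bOmega\bOmega^T\bV_k}$. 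Since the exact product is $\bA_{k,\perp}\bV_k = \bzero$, Theorem~\ref{lem:basicmult} with the nearly-optimal probabilities from Eqn.~(\ref{eqn:appopt3}) gives $\Expect{\FNormS{\bA_{k,\perp}\bOmega\bOmega^T\bV_k}} \leq \frac{k}{\beta c}\FNormS{\bA_{k,\perp}}$; converting this to a high-probability $\epsilon$-accurate bound via a matrix-Bernstein-type concentration result (in the spirit of Lemma~\ref{lem:oliveira}) introduces the $1/\epsilon^2$ scaling visible in Eqn.~(\ref{eqn:cval5}) and completes the residual bound.

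A union bound over the three events---the leverage-score flattening of Lemma~\ref{lem:HU}, the subspace-embedding event of Theorem~\ref{thm:theorem7correct}, and the matrix-multiplication deviation---keeps the total failure probability below $0.15$, matching the claimed $.85$ success. For the running time, applying $\bD\bH$ followed by sub-sampling by $\bS$ to each of the $m$ rows of $\bA$ costs $O(mn\log c)$ via the recursion developed in Section~\ref{sxn:lsruntime}; computing an orthonormal basis $\bU_C$ by QR costs $O(mc^2)$; forming $\bW = \bU_C^T\bA$ costs $O(mnc)$; and its top-$k$ SVD costs $O(mc^2)$; so the aggregate is $O(mnc)$ as claimed. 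The main technical obstacle I anticipate is step (ii)---sharpening the in-expectation bound of Theorem~\ref{lem:basicmult} into a high-probability bound with the right $1/\epsilon^2$ rate using a matrix-Bernstein argument, and carefully juggling the $k$, $\log n$, $\log(k/\epsilon^2)$, and $\log\log n$ factors so that a single sample size as in Eqn.~(\ref{eqn:cval5}) simultaneously certifies both the subspace-embedding property (i) and the approximate-multiplication bound (ii).
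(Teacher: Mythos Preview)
Your high-level strategy matches the paper's: establish that $\tilde\bU_k\tilde\bU_k^T\bA$ is the best rank-$k$ approximation of $\bA$ in $\mathrm{range}(\bC)$ (the paper's Lemma~\ref{lem:restate}), reduce via a structural inequality to bounding $\FNormS{\bA_{k,\perp}\bOmega(\bV_k^T\bOmega)^\dagger}$ (the paper's Lemma~\ref{lem:mainlemma_general}), condition on the leverage-flattening event of Lemma~\ref{lem:HU} applied to $\bV_k$, and then combine a subspace-embedding bound with a randomized-matrix-multiplication bound. Your trace computation giving $\FNormS{\bA-Y}=\FNormS{\bA_{k,\perp}}+\FNormS{\bA_{k,\perp}\bOmega(\bV_k^T\bOmega)^\dagger}$ is correct (and is essentially how the paper combines Lemma~\ref{l_aux2} with Eqn.~(\ref{eqn:struct-cond-low-rank-gen:phi2})).

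Where you diverge is in decomposing the residual $\bA_{k,\perp}\bOmega(\bV_k^T\bOmega)^\dagger$. The paper adds and subtracts $(\bV_k^T\bOmega)^T$, producing the two pieces in Expressions~(\ref{eqn:ch4ppp5}) and~(\ref{eqn:ch4ppp6}); controlling~(\ref{eqn:ch4ppp5}) requires the subspace embedding at accuracy $\epsilon$ (not merely $1-1/\sqrt 2$) so that $\TNormS{(\bV_k^T\bOmega)^\dagger-(\bV_k^T\bOmega)^T}\le 2\epsilon^2$ (Lemma~\ref{lem:sample_lem40pfn1}), and \emph{this} is the source of the $1/\epsilon^2$ in Eqn.~(\ref{eqn:cval5}). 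Your route---writing $(\bV_k^T\bOmega)^\dagger=\bOmega^T\bV_k(\bV_k^T\bOmega\bOmega^T\bV_k)^{-1}$ and bounding the inverse Gram matrix by a constant---is actually cleaner and tighter. Consequently, the obstacle you anticipate in step~(ii) is not real: you do not need a matrix-Bernstein argument there. Plain Markov applied to the expectation bound of Theorem~\ref{lem:basicmult} (exactly as the paper does for its analogous term in Lemma~\ref{lem:sample_lem40pfn3}) already gives $\FNormS{\bA_{k,\perp}\bOmega\bOmega^T\bV_k}\le O(k/(\beta c))\FNormS{\bA_{k,\perp}}$ with constant probability, which is $\le\epsilon\FNormS{\bA_{k,\perp}}$ once $c\gtrsim k\ln(n)/\epsilon$. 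Your decomposition therefore certifies the theorem for a \emph{smaller} $c$ than Eqn.~(\ref{eqn:cval5}) demands, so there is nothing to juggle; the $1/\epsilon^2$ you are trying to reproduce is an artifact of the paper's particular splitting, not of the problem itself.
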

We discuss the dimensions of the matrices in steps 6-9 of the \textsc{RandLowRank} algorithm. One can think of the matrix $\bC \in \mathbb{R}^{m \times c}$ as a ``sketch'' of the input matrix $\bA$. Notice that $c$ is (up to $\ln\ln$ factors and ignoring constant terms like $\epsilon$ and $\delta$) $O(k \ln k)$; the rank of $\bC$ (denoted by $\rho_C$) is at least $k$, i.e., $\rho_C \geq k$. The matrix $\bU_C$ has dimensions $m \times \rho_C$ and the matrix $\bW$ has dimensions $\rho_C \times n$. Finally, the matrix $\bU_{W,k}$ has dimensions $\rho_C \times k$ (by our assumption on the rank of $\bW$).

Recall that the \textit{best} rank-$k$ approximation to $\bA$ is equal to $\bA_k=\bU_k\bU_k^T\bA$. In words, Theorem~\ref{thm:relerrLowRank} argues that the \textsc{RandLowRank} algorithm returns a set of $k$ orthonormal vectors that are excellent approximations to the top $k$ left singular vectors of $\bA$, in the sense that projecting $\bA$ on the subspace spanned by $\tilde{\bU}_k$ returns a matrix that has residual error that is close to that of $\bA_k$.

\begin{Remark}
We emphasize that the $O(mnc)$ running time of the \textsc{RandLowRank} algorithm is due to the Ritz-Rayleigh type procedure in steps (7)-(9). These steps guarantee that the proposed algorithm returns a matrix $\tilde{\bU}_k$ with \textit{exactly} $k$ columns that approximates the top $k$ left singular vectors of $\bA$. The results of~\cite{pcmi-chapter-martinsson} focus (in our parlance) on the matrix $\bC$, which can be constructed much faster (see Section~\ref{sxn:ch4:runningtime}), in $O(mn\log_2 c)$ time, but has \textit{more than} $k$ columns. One can bound the error term $\FNorm{\bA-\bC\bC^{\dagger}\bA}=\FNorm{\bA-\bU_C\bU_C^T\bA}$ to prove that the column span of $\bC$ contains good approximations to the top $k$ left singular vectors of $\bA$.
\end{Remark}

\begin{Remark}
Repeating the \textsc{RandLowRank} algorithm $\lceil\ln(1/\delta)/\ln5\rceil$ times and keeping the matrix $\tilde{\bU}_k$ that minimizes the error
$\FNorm{\bA-\tilde{\bU}_k\tilde{\bU}_k^T\bA}$ reduces the failure probability of the algorithm to at most $1-\delta$, for any $\delta\in(0,1)$.
\end{Remark}

\begin{Remark}
As with the sampling process in the \textsc{RandLeastSquares} algorithm, the operation represented by $\bD \bH \bS$ in the \textsc{RandLowRank} algorithm can be viewed in one of two equivalent ways: either as a random preconditioning followed by a uniform sampling operation; or as a Johnson-Lindenstrauss style random projection.
(In particular, informally, the \textsc{RandLowRank} algorithm ``works'' for the following reason.
If a matrix is well-approximated by a low-rank matrix, then there is redundancy in the columns (and/or rows), and thus random sampling ``should'' be successful at selecting a good set of columns.
That said, just as with the \textsc{RandLeastSquares} algorithm, there may be some columns that are more important to select, e.g., that have high leverage.  Thus, using a random projection, which transforms the input to a new basis where the leverage scores of different columns are uniformized, amounts to preconditioning the input such that uniform sampling is appropriate.)
\end{Remark}

\begin{Remark}
The value $c$ is essentially\footnote{We omit the $\ln \ln n$ term from this qualitative remark. Recall that $\ln \ln n$ goes to infinity with dignity and therefore, quoting Stan Eisenstat, $\ln \ln n$ is for all practical purposes essentially a constant; see~\url{https://rjlipton.wordpress.com/2011/01/19/we-believe-a-lot-but-can-prove-little/}.} equal to $O((k/\epsilon^2)\ln(k/\epsilon)\ln n)$. For constant $\epsilon$, this grows as a function of $k\ln k$ and $\ln n$.
\end{Remark}

\begin{Remark}
Similar bounds can be proven for many other random projection algorithms (using different values for $c$) and not just the Randomized Hadamard Transform.
Well-known alternatives include random Gaussian matrices, the Randomized Discrete Cosine Transform, sparsity-preserving random projections, etc.
Which variant is most appropriate in a given situation depends on the sparsity structure of the matrix, the noise properties of the data, the model of data access, etc.
See~\cite{Mah-mat-rev_BOOK,Woodruff2014} for an overview of similar results.
\end{Remark}

\begin{Remark}
One can generalize the \textsc{RandLowRank} algorithm to work with the matrix $(\bA\bA^T)^t\bA\bD \bH \bS$ for integer $t\geq 0$. This would result in subspace iteration. If all intermediate iterates (for $t=0,1,\ldots$) are kept, the Krylov subspace would be formed. See~\cite{MM2015,DrineasIKM16} and references therein for a detailed treatment and analysis of such methods.
(See also the chapter by Martinsson in this volume~\cite{pcmi-chapter-martinsson} for related results.)
\end{Remark}

The remainder of this section will focus on the proof of Theorem~\ref{thm:relerrLowRank}.
Our proof strategy will consist of three steps.
First (Section~\ref{s_aux1}), we we will prove that:
\[\FNormS{\bA-\tilde\bU_k\tilde\bU_k^T\bA} \leq
 \FNormS{\bA_k-\bU_C \bU_C^T\bA_k} + \FNormS{\bA_{k,\perp}}.\]
The above inequality allows us to manipulate the easier-to-bound term $\FNormS{\bA_k-\bU_C \bU_C^T\bA_k}$ instead of the term $\FNormS{\bA-\tilde\bU_k\tilde\bU_k^T\bA}$.
Second (Section~\ref{s_aux2}), to bound this term, we will use a structural inequality that is central (in this form or mild variations) in many RandNLA low-rank approximation algorithms and their analyses. Indeed, we will argue that
\begin{align*}
\FNormS{\bA_k-\bU_C \bU_C^T\bA_k}
&\leq2\FNormS{(\bA-\bA_k)\bD\bH\bS((\bV_k^T\bD\bH\bS)^{\dagger}-(\bV_k^T\bD\bH\bS)^T)}\\
&\hspace{15mm}+2\FNormS{(\bA-\bA_k)\bD\bH\bS(\bV_k^T\bD\bH\bS)^T}.
\end{align*}
Third (Section~\ref{s_aux3}), we will use results from Section~\ref{chapter:MM} to bound the two terms at the right hand side of the above inequality.

\subsection{An alternative expression for the error.}\label{s_aux1}
The \textsc{RandLowRank} algorithm approximates the top $k$ left singular vectors of $\bA$, i.e., the matrix $\bU_k \in \mathbb{R}^{m \times k}$,
by the orthonormal matrix $\tilde\bU_k \in \mathbb{R}^{m \times k}$.
Bounding $\|\bA-\tilde\bU_k\tilde\bU_k^T\bA\|_F$ directly seems hard, so we present an alternative expression that is easier to analyze and that also reveals an interesting insight for $\tilde{\bU}_k$. We will prove that the matrix $\tilde\bU_k \tilde\bU_k \bA$ is the best rank-$k$ approximation to $\bA$ (with respect to the Frobenius norm\footnote{This is not true for other unitarily invariant norms, e.g., the two-norm; see~\cite{BDM2014} for a detailed discussion.}) that lies within the column space of the matrix $\bC$. This optimality property is guaranteed by the Ritz-Rayleigh type procedure implemented in Steps 7-9 of the~\textsc{RandLowRank} algorithm.

\begin{lemma}\label{lem:restate}
Let $\bU_C$ be a basis for the column span of $\bC$ and let $\tilde{\bU}_k$ be the output of the~\textsc{RandLowRank} algorithm. Then
\begin{equation}
\bA-\tilde\bU_k\tilde\bU_k^T\bA = \bA-\bU_C \left(\bU_C^T\bA\right)_k.
\end{equation}
In addition, $\bU_C (\bU_C^T\bA)_k$ is the best rank-$k$ approximation to $\bA$, with respect to the Frobenius norm, that lies within the column span of the matrix $\bC$, namely
\begin{equation}\label{eqn:opt1}
\FNormS{\bA-\bU_C (\bU_C^T\bA)_k} =
\min_{\rank{\bY}\leq k}\FNormS{\bA - \bU_C \bY}.
\end{equation}
\end{lemma}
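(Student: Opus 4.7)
The plan is to dispatch the first equality by substituting the definitions of $\tilde\bU_k$, $\bU_{W,k}$, and $\bW$ that appear in Steps 7--9 of the \textsc{RandLowRank} algorithm, and then to obtain the optimality statement via a Pythagorean decomposition that reduces the problem to a standard Eckart--Young minimization (the SVD optimality theorem from Section~\ref{sxn:review:SVD}).

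For the identity $\bA - \tilde\bU_k\tilde\bU_k^T\bA = \bA - \bU_C(\bU_C^T\bA)_k$, I would use $\tilde\bU_k = \bU_C \bU_{W,k}$ (Step 9) together with $\bW = \bU_C^T\bA$ (Step 8) to obtain
$$\tilde\bU_k \tilde\bU_k^T \bA \;=\; \bU_C\,(\bU_{W,k}\bU_{W,k}^T)\,\bW.$$
Since $\bU_{W,k}$ contains the top $k$ left singular vectors of $\bW$, the factor $\bU_{W,k}\bU_{W,k}^T \bW$ is exactly $\bW_k$, the best rank-$k$ approximation of $\bW$ in Frobenius norm. Hence $\tilde\bU_k \tilde\bU_k^T \bA = \bU_C \bW_k = \bU_C(\bU_C^T\bA)_k$, which is the first claim.

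For the optimality statement, the plan is to decompose, for any $\bY$ with $\rank(\bY)\leq k$,
$$\bA - \bU_C\bY \;=\; (\bI - \bU_C\bU_C^T)\bA \;+\; \bU_C(\bU_C^T\bA - \bY).$$
A one-line check using $\bU_C^T\bU_C = \bI$ gives $\bigl((\bI - \bU_C\bU_C^T)\bA\bigr)^T \bU_C(\bU_C^T\bA - \bY) = 0$, so the two summands satisfy the orthogonality hypothesis of the matrix Pythagorean theorem (Lemma~\ref{l_pyth}). Applying that lemma and using the fact that $\bU_C$ has orthonormal columns (so $\FNorm{\bU_C\bX}=\FNorm{\bX}$) yields
$$\FNormS{\bA - \bU_C\bY} \;=\; \FNormS{(\bI - \bU_C\bU_C^T)\bA} \;+\; \FNormS{\bU_C^T\bA - \bY}.$$
Only the second summand depends on $\bY$, so minimizing the left-hand side over rank-$k$ matrices $\bY$ reduces to the Eckart--Young problem $\min_{\rank(\bY)\leq k}\FNormS{\bW - \bY}$, whose minimizer is $\bY^\star = \bW_k = (\bU_C^T\bA)_k$. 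Substituting $\bY^\star$ back establishes Eqn.~(\ref{eqn:opt1}), and since $\bU_C\bY^\star = \bU_C(\bU_C^T\bA)_k = \tilde\bU_k\tilde\bU_k^T\bA$, this simultaneously reconfirms the first identity as a consequence of optimality.

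There is no serious conceptual obstacle: the argument is one Pythagorean split sandwiched between two applications of Eckart--Young. The only point requiring a moment's care is that $\bW = \bU_C^T\bA$ must have rank at least $k$ for $\bW_k$ (and hence $\bU_{W,k}$) to be a bona fide rank-$k$ object; this is precisely the assumption built into Step 8 of the algorithm.
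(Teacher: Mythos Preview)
Your proof is correct and follows essentially the same approach as the paper: the first identity is obtained by unpacking $\tilde\bU_k = \bU_C\bU_{W,k}$ and using that $\bU_{W,k}\bU_{W,k}^T\bW = \bW_k$, and the optimality statement is obtained via the matrix Pythagorean split $\bA - \bU_C\bY = (\bI - \bU_C\bU_C^T)\bA + \bU_C(\bU_C^T\bA - \bY)$ followed by Eckart--Young. The only cosmetic difference is that the paper writes the Pythagorean decomposition for the specific choice $\bY = (\bU_C^T\bA)_k$ and then remarks that any other rank-$k$ $\bY$ increases the second term, whereas you carry a general $\bY$ throughout---your version is arguably slightly cleaner.
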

\begin{proof}
Recall that $\tilde{\bU}_k=\bU_C\bU_{W,k}$, where
$\bU_{W,k}$ is the matrix of the top $k$ left singular vectors of
$\bW = \bU_C^T\bA$. Thus, $\bU_{W,k}$ spans the same range as $\bW_k$,
the best rank-$k$ approximation to $\bW$, i.e., $\bU_{W,k}\bU_{W,k}^T=\bW_k \bW_k^{\dagger}$. Therefore
\begin{align*}
\bA-\tilde\bU_k\tilde\bU_k^T\bA &= \bA-\bU_C \bU_{W,k}\bU_{W,k}^T\bU_C^T\bA\\
 &= \bA-\bU_C \bW_k \bW_k^{\dagger} \bW = \bA-\bU_C \bW_k.
\end{align*}
The last equality follows from $\bW_k\bW_k^{\dagger}$ being the orthogonal projector onto the range of $\bW_k$.
In order to prove the optimality property of the lemma, we simply observe that
\begin{align*}
\FNormS{\bA-\bU_C (\bU_C^T\bA)_k} &= \FNormS{\bA-\bU_C\bU_C^T\bA + \bU_C\bU_C^T\bA - \bU_C (\bU_C^T\bA)_k}\\
&=\FNormS{(\bI-\bU_C\bU_C^T)\bA + \bU_C(\bU_C^T\bA - (\bU_C^T\bA)_k)}\\
&=\FNormS{(\bI-\bU_C\bU_C^T)\bA} + \FNormS{\bU_C(\bU_C^T\bA - (\bU_C^T\bA)_k)}\\
&=\FNormS{(\bI-\bU_C\bU_C^T)\bA} + \FNormS{\bU_C^T\bA - (\bU_C^T\bA)_k}.
\end{align*}
The second to last equality follows from Matrix Pythagoras (Lemma~\ref{l_pyth}) and the last equality follows from the orthonormality of the columns of $\bU_C$. The second statement of the lemma is now immediate since $(\bU_C^T\bA)_k$ is the best rank-$k$ approximation to $\bU_C^T\bA$ and thus any other matrix $\bY$ of rank at most $k$ would result in a larger Frobenius norm error.
\end{proof}

Lemma~\ref{lem:restate} shows that Eqn.~(\ref{eqn:thm_relerrLowRank}) in Theorem~\ref{thm:relerrLowRank}
can be proven by bounding $\|\bA-\bU_C \left(\bU_C^T\bA\right)_k\|_F$.
Next, we transition from the best rank-$k$ approximation of the projected matrix $(\bU_C^T\bA)_k$
to the best rank-$k$ approximation $\bA_k$ of the original matrix. First (recall the notation introduced in Section~\ref{sxn:review:SVD}), we split
\begin{align}\label{e_Ai}
\bA=\bA_k+ \bA_{k,\perp},\ \text{where}\ \bA_k=\bU_k\bSigma_k\bV_k^T \  \text{and}\
\bA_{k,\perp}=\bU_{k,\perp}\bSigma_{k,\perp}\bV_{k,\perp}^T.
\end{align}

\begin{lemma}\label{l_aux2}
Let $\bU_C$ be an orthonormal basis for the column span of the matrix $\bC$ and let $\tilde{\bU}_k$ be the output of the \textsc{RandLowRank} algorithm. Then,
\[\FNormS{\bA-\tilde\bU_k\tilde\bU_k^T\bA} \leq
 \FNormS{\bA_k-\bU_C \bU_C^T\bA_k} + \FNormS{\bA_{k,\perp}}.\]
\end{lemma}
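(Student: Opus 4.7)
The plan is to combine Lemma~\ref{lem:restate} with a single application of Matrix Pythagoras (Lemma~\ref{l_pyth}). Lemma~\ref{lem:restate} both rewrites $\tilde\bU_k\tilde\bU_k^T\bA$ as $\bU_C(\bU_C^T\bA)_k$ and, crucially, identifies $\bU_C(\bU_C^T\bA)_k$ as the best rank-$k$ Frobenius approximation to $\bA$ that lies in the column span of $\bC$. So the first step is to test this optimality against a judiciously chosen rank-$k$ competitor.

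Specifically, setting $\bY=\bU_C^T\bA_k$ (which has rank at most $k$, since $\bA_k$ does), Lemma~\ref{lem:restate} gives
\[\FNormS{\bA-\tilde\bU_k\tilde\bU_k^T\bA} \;=\; \FNormS{\bA-\bU_C(\bU_C^T\bA)_k} \;\leq\; \FNormS{\bA-\bU_C\bU_C^T\bA_k}.\]
Next, using the splitting $\bA=\bA_k+\bA_{k,\perp}$ from Eqn.~(\ref{e_Ai}), I would write
\[\bA-\bU_C\bU_C^T\bA_k \;=\; \bigl(\bA_k-\bU_C\bU_C^T\bA_k\bigr) \;+\; \bA_{k,\perp},\]
and aim to apply Matrix Pythagoras to the two summands on the right-hand side.

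The key step is checking that the two summands are Frobenius-orthogonal. The clean fact that makes this work is $\bA_k\bA_{k,\perp}^T=\bzero$: directly from the SVD,
\[\bA_k\bA_{k,\perp}^T = \bU_k\bSigma_k\bigl(\bV_k^T\bV_{k,\perp}\bigr)\bSigma_{k,\perp}\bU_{k,\perp}^T = \bzero,\]
since $\bV_k^T\bV_{k,\perp}=\bzero$. Consequently
\[\bigl(\bA_k-\bU_C\bU_C^T\bA_k\bigr)\bA_{k,\perp}^T \;=\; \bA_k\bA_{k,\perp}^T - \bU_C\bU_C^T\bigl(\bA_k\bA_{k,\perp}^T\bigr) \;=\; \bzero.\]
Then Matrix Pythagoras (applied to the transposes of these two matrices, so that the hypothesis of Lemma~\ref{l_pyth} as stated is met; the Frobenius norm is invariant under transposition) yields
\[\FNormS{\bA-\bU_C\bU_C^T\bA_k} \;=\; \FNormS{\bA_k-\bU_C\bU_C^T\bA_k} + \FNormS{\bA_{k,\perp}},\]
and chaining with the first inequality produces the claim.

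I do not expect a genuine obstacle here: the entire argument is a one-line optimality bound followed by an SVD-based orthogonality check. The only small subtlety is that Lemma~\ref{l_pyth} is written with the hypothesis $\bA^T\bB=\bzero$, whereas the natural orthogonality we have is $\bA_k\bA_{k,\perp}^T=\bzero$; this is handled by transposing both summands before invoking the lemma.
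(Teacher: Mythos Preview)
Your proof is correct and follows essentially the same approach as the paper: use the optimality statement of Lemma~\ref{lem:restate} with the competitor $\bY=\bU_C^T\bA_k$, then apply Matrix Pythagoras to the splitting $\bA-\bU_C\bU_C^T\bA_k=(\bA_k-\bU_C\bU_C^T\bA_k)+\bA_{k,\perp}$. You actually spell out the orthogonality check (via $\bV_k^T\bV_{k,\perp}=\bzero$) and the transposition needed to match the hypothesis of Lemma~\ref{l_pyth}, details the paper leaves implicit.
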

\begin{proof}
The optimality property in Eqn.~(\ref{eqn:opt1}) in Lemma~\ref{lem:restate} and the fact that $\bU_C^T\bA_k$ has rank at most $k$ imply
\begin{align*}
\FNormS{\bA-\tilde\bU_k\tilde\bU_k^T\bA} &=
\FNormS{\bA-\bU_C \left(\bU_C^T\bA\right)_k}\nonumber\\
&\leq \FNormS{\bA-\bU_C \bU_C^T\bA_k} \nonumber\\
&= \FNormS{\bA_k-\bU_C\bU_C^T\bA_k} + \FNormS{\bA_{k,\perp}}.
\end{align*}
The last equality follows from Lemma~\ref{l_pyth}.
\end{proof}

\subsection{A structural inequality.}\label{s_aux2}
We now state and prove a structural inequality that will help us bound $\FNormS{\bA_k-\bU_C \bU_C^T\bA_k}$ (the first term in the error bound of Lemma~\ref{l_aux2}).
This structural inequality, or minor variants of it, underlie nearly all RandNLA algorithms for low-rank matrix approximation~\cite{MD2016}.
To understand this structural inequality, recall that, given a matrix $\bA \in \mathbb{R}^{m \times n}$, many RandNLA algorithms seek to construct a ``sketch'' of $\bA$ by post-multiplying $\bA$ by some ``sketching'' matrix $\bZ \in \mathbb{R}^{n \times c}$, where $c$ is much
smaller than $n$.
(In particular, this is precisely what the \textsc{RandLowRank} algorithm does.)
Thus, the resulting matrix $\bA\bZ \in \mathbb{R}^{m \times c}$ is much smaller than the original matrix $\bA$, and the interesting question
is the approximation guarantees that it offers.

A common approach is to explore how well $\bA\bZ$ spans the principal subspace
of $\bA$, and one metric of accuracy is some norm of the error matrix $\bA_k - (\bA\bZ)(\bA\bZ)^{\dagger}\bA_k$,
where $(\bA\bZ)(\bA\bZ)^{\dagger} \bA_k$ is the projection of $\bA_k$ onto the subspace spanned by the
columns of $\bA\bZ$. (See Section~\ref{sxn:review:MP} for the definition of the Moore-Penrose
pseudoinverse of a matrix.)
The following structural result offers a means to bound the Frobenius norm of the error matrix $\bA_k - (\bA\bZ)(\bA\bZ)^{\dagger}\bA_k$.
\begin{lemma}
\label{lem:mainlemma_general}
Given $\bA \in \mathbb{R}^{m \times n}$, let $\bZ \in \mathbb{R}^{n \times c}$ ($c \geq k$) be any matrix such that
$\bV_k^T\bZ \in \mathbb{R}^{k \times c}$ has rank $k$. Then,
\begin{equation}
\FNormS{\bA_k - (\bA\bZ)(\bA\bZ)^{\dagger}\bA_k}
   \leq \FNormS{\left(\bA-\bA_k\right)\bZ(\bV_k^T\bZ)^{\dagger}}.
\label{eqn:struct-cond-low-rank-gen}
\end{equation}
\end{lemma}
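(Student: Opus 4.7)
The plan is to upper-bound the left-hand side by replacing the projection $(\bA\bZ)(\bA\bZ)^{\dagger}$ with a carefully chosen surrogate, and then to simplify using the orthonormality of $\bV_k$.

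First, I would observe that $(\bA\bZ)(\bA\bZ)^{\dagger}$ is the orthogonal projector onto $\range{\bA\bZ}$ (this follows from the defining properties of the Moore--Penrose pseudoinverse listed in Section~\ref{sxn:review:MP}). Consequently $(\bA\bZ)(\bA\bZ)^{\dagger}\bA_k$ is the best Frobenius-norm approximation of $\bA_k$ inside $\range{\bA\bZ}$, so for \emph{any} matrix $\bX \in \mathbb{R}^{c \times n}$,
\[
\FNormS{\bA_k - (\bA\bZ)(\bA\bZ)^{\dagger}\bA_k} \;\le\; \FNormS{\bA_k - \bA\bZ\bX}.
\]
The rest of the argument is then just choosing $\bX$ wisely.

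Second, I would take $\bX = (\bV_k^T\bZ)^{\dagger}\bV_k^T$, which is well-defined and in fact a right inverse since $\bV_k^T\bZ$ has rank $k$ by assumption, giving $(\bV_k^T\bZ)(\bV_k^T\bZ)^{\dagger} = \bI_k$. Writing $\bA = \bA_k + (\bA-\bA_k)$ and expanding,
\[
\bA\bZ\bX = \bA_k\bZ(\bV_k^T\bZ)^{\dagger}\bV_k^T + (\bA-\bA_k)\bZ(\bV_k^T\bZ)^{\dagger}\bV_k^T.
\]
Using $\bA_k = \bU_k\bSigma_k\bV_k^T$, the first term collapses to $\bU_k\bSigma_k(\bV_k^T\bZ)(\bV_k^T\bZ)^{\dagger}\bV_k^T = \bU_k\bSigma_k\bV_k^T = \bA_k$. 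Therefore
\[
\bA_k - \bA\bZ\bX = -(\bA-\bA_k)\bZ(\bV_k^T\bZ)^{\dagger}\bV_k^T.
\]

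Third, since $\bV_k$ has orthonormal columns ($\bV_k^T\bV_k = \bI_k$), for any matrix $\bM$ of appropriate dimensions we have $\FNormS{\bM\bV_k^T} = \Trace{\bM\bV_k^T\bV_k\bM^T} = \FNormS{\bM}$. Applying this to $\bM = (\bA-\bA_k)\bZ(\bV_k^T\bZ)^{\dagger}$ gives
\[
\FNormS{\bA_k - \bA\bZ\bX} = \FNormS{(\bA-\bA_k)\bZ(\bV_k^T\bZ)^{\dagger}},
\]
and combining with the inequality from step one yields the claim.

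There is no serious obstacle beyond picking the surrogate $\bX$; the only subtle point is verifying that $(\bV_k^T\bZ)(\bV_k^T\bZ)^{\dagger} = \bI_k$, which requires the full-rank hypothesis on $\bV_k^T\bZ$ (and is exactly the place this hypothesis is used). Everything else is projection minimization plus the orthonormal-columns identity for $\bV_k$.
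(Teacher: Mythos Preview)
Your proof is correct and follows essentially the same approach as the paper: both replace the optimal projection by a suboptimal surrogate $\bX$ and show that the resulting expression collapses to the desired bound. The only cosmetic difference is that the paper writes the surrogate as $\bX=(\bA_k\bZ)^{\dagger}\bA_k$ and then invokes the pseudoinverse product identity $(\bU_k\bSigma_k\bV_k^T\bZ)^{\dagger}=(\bV_k^T\bZ)^{\dagger}\bSigma_k^{-1}\bU_k^T$ (Eqn.~(\ref{eqn:pinv})), whereas you pick $\bX=(\bV_k^T\bZ)^{\dagger}\bV_k^T$ directly; these two choices are the same matrix, and your route avoids appealing to the product-pseudoinverse formula.
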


\begin{Remark}
Lemma~\ref{lem:mainlemma_general} holds for \emph{any} matrix $\bZ$,
regardless of whether $\bZ$ is constructed deterministically or randomly.
In the context of RandNLA, typical constructions of $\bZ$ would represent a
random sampling or random projection operation, like the the matrix $\bD\bH\bS$ used in the \textsc{RandLowRank} algorithm.
\end{Remark}

\begin{Remark}
The lemma actually holds for any unitarily invariant norm, including the two and the nuclear norm of a matrix~\cite{MD2016}.
\end{Remark}

\begin{Remark}
See~\cite{MD2016} for a detailed discussion of such structural inequalities and their history. Lemma~\ref{lem:mainlemma_general} immediately suggests a proof strategy for bounding the error of RandNLA algorithms for low-rank matrix approximation: identify a sketching matrix $\bZ$ such that $\bV_k^T\bZ$ has full rank; and, at the same time, bound the relevant norms of $\left(\bV_k^T\bZ\right)^\dagger$ and $(\bA-\bA_k) \bZ$.
\end{Remark}

\begin{proof}(of Lemma~\ref{lem:mainlemma_general})
First, note that
\[(\bA\bZ)^{\dagger}\bA_k = \mbox{argmin}_{\bX \in \mathbb{R}^{c \times n}} \FNormS{\bA_k - \left(\bA\bZ\right)\bX }.\]
The above equation follows by viewing the above optimization problem as least-squares regression with multiple right-hand sides. Interestingly, this property holds for any unitarily invariant norm, but the proof is involved; see Lemma 4.2 of~\cite{DrineasIKM16} for a detailed discussion.
This implies that instead of bounding $\FNormS{\bA_k - (\bA\bZ)(\bA\bZ)^{\dagger}\bA_k}$, we can replace $\left(\bA\bZ\right)^+\bA_k$ with any other $c \times n$ matrix and the equality with an inequality. In particular, we replace $\left(\bA\bZ\right)^{\dagger} \bA_k$ with $\left(\bA_k\bZ\right)^{\dagger} \bA_k$:
\begin{align}
\nonumber \FNormS{\bA_k - (\bA\bZ)(\bA\bZ)^{\dagger}\bA_k}
\nonumber   &\leq  \FNormS{ \bA_k - \bA\bZ \left(\bA_k \bZ\right)^{\dagger} \bA_k}.
\end{align}
This suboptimal choice for $\bX$ is essentially the ``heart'' of our proof: it allows us to manipulate and further decompose the error term, thus making the remainder of the analysis feasible.
Use $\bA = \bA-\bA_k+\bA_k$ to get
\begin{align}
\nonumber
\FNormS{\bA_k - (\bA\bZ)(\bA\bZ)^{\dagger}\bA_k}
   &\leq \FNormS{ \bA_k - (\bA- \bA_k+ \bA_k) \bZ \left(\bA_k\bZ\right)^{\dagger} \bA_k} \\
\nonumber
   &= \FNormS{\bA_k-\bA_k\bZ (\bA_k\bZ)^{\dagger} \bA_k-(\bA-\bA_k)\bZ(\bA_k\bZ)^{\dagger} \bA_k}\\
\nonumber
   &= \FNormS{(\bA-\bA_k)\bZ(\bA_k\bZ)^{\dagger} \bA_k}.
\end{align}
To derive the last inequality, we used
\begin{align}
\nonumber \bA_k - \bA_k\bZ (\bA_k\bZ)^{\dagger} \bA_k
   &= \bA_k - \bU_k\bSigma_k\bV_k^T\bZ (\bU_k\bSigma_k\bV_k^T\bZ)^{\dagger} \bA_k \\
\label{eq31}
   &= \bA_k - \bU_k\bSigma_k(\bV_k^T\bZ) (\bV_k^T\bZ)^{\dagger}\bSigma_k^{-1}\bU_k^T \bA_k \\
\label{eq32}
   &= \bA_k - \bU_k\bU_k^T \bA_k=\bzero.
\end{align}
In Eqn.~(\ref{eq31}), we used Eqn.~(\ref{eqn:pinv}) and the fact that both matrices $\bV_k^T\bZ$ and $\bU_k\bSigma_k$ have rank $k$. The latter fact also implies that $(\bV_k^T\bZ) (\bV_k^T\bZ)^{\dagger}=\bI_k$, which derives Eqn.~(\ref{eq32}). Finally, the fact that $\bU_k\bU_k^T \bA_k=\bA_k$ concludes the derivation and the proof of the lemma.
\end{proof}

\subsection{Completing the proof of Theorem~\ref{thm:relerrLowRank}.}
\label{s_aux3}
In order to complete the proof of the relative error guarantee of Theorem~\ref{thm:relerrLowRank}, we will complete the strategy outlined at the end of Section~\ref{sxn:sampling:result2}.
First, recall that from Lemma~\ref{l_aux2} it suffices to bound
\begin{equation}\label{eqn:drin1}
\FNormS{\bA-\tilde\bU_k\tilde\bU_k^T\bA} \leq \FNormS{\bA_k-\bU_C \bU_C^T\bA_k} + \FNormS{\bA-\bA_{k}}.
\end{equation}
Then, to bound the first term in the right-hand side of the above inequality, we will apply the structural result of Lemma~\ref{lem:mainlemma_general} on the matrix $$ \bPhi = \bA\bD\bH , $$ with $\bZ=\bS$, where the matrices $\bD$, $\bH$, and $\bS$ are constructed as
described in the \textsc{RandLowRank} algorithm.
Lemma~\ref{lem:mainlemma_general} states that if $\bV_{\Phi,k}^T\bS$ has rank $k$, then
\begin{equation}
\FNormS{\bPhi_k - (\bPhi\bS)(\bPhi\bS)^{\dagger}\bPhi_k}\leq
 \FNormS{(\bPhi-\bPhi_k)\bS(\bV_{\Phi,k}^T\bS)^{\dagger}}.
\label{eqn:struct-cond-low-rank-gen:phi}
\end{equation}
Here, we used $\bV_{\Phi,k} \in \mathbb{R}^{n \times k}$ to denote the matrix of the top $k$ right singular vectors of $\bPhi$. Recall from Section~\ref{sxn:RHT} that $\bD\bH$ is an orthogonal matrix and thus the left singular vectors and the singular values of the matrices $\bA$ and $\bPhi=\bA\bD\bH$ are identical. The right singular vectors of the matrix $\bPhi$ are simply the right singular vectors of $\bA$, rotated by $\bD\bH$, namely
\[\bV_{\bPhi}^T=\bV^T\bD\bH,\]
where $\bV$ (respectively, $\bV_{\bPhi}$) denotes the matrix of the right singular vectors of $\bA$ (respectively, $\bPhi$). Thus, $\bPhi_k=\bA_k\bD\bH$, $\bPhi-\bPhi_k=(\bA-\bA_k)\bD\bH$, and $\bV_{\Phi,k}=\bV_{k}\bD\bH$. Using all the above, we can rewrite Eqn.~(\ref{eqn:struct-cond-low-rank-gen:phi}) as follows:
\begin{equation}
\FNormS{\bA_k - (\bA\bD\bH\bS)(\bA\bD\bH\bS)^{\dagger}\bA_k}\leq
 \FNormS{(\bA-\bA_k)\bD\bH\bS(\bV_{k}^T\bD\bH\bS)^{\dagger}}.
\label{eqn:struct-cond-low-rank-gen:phi2}
\end{equation}
In the above derivation, we used unitary invariance to drop a $\bD\bH$ term from the Frobenius norm. Recall that $\bA_{k,\perp}=\bA-\bA_k$;
we now proceed to manipulate the right-hand side of the above inequality as follows\footnote{We use the following easy-to-prove version of the triangle inequality for the Frobenius norm: for any two matrices $\bX$ and $\bY$ that have the same dimensions, $\FNormS{\bX+\bY}\leq 2\FNormS{\bX}+2\FNormS{\bY}$.}:
\begin{align}
\nonumber\FNormS{\bA_{k,\perp}\bD\bH\bS(\bV_k^T\bD\bH\bS)^{\dagger}}
 \\
\nonumber
 &\hspace{-20mm}=\FNormS{\bA_{k,\perp}\bD\bH\bS((\bV_k^T\bD\bH\bS)^{\dagger}-(\bV_k^T\bD\bH\bS)^T+(\bV_k^T\bD\bH\bS)^T)}\\
\label{eqn:ch4ppp5}&\hspace{-20mm}\leq 2\FNormS{\bA_{k,\perp}\bD\bH\bS((\bV_k^T\bD\bH\bS)^{\dagger}-(\bV_k^T\bD\bH\bS)^T)}\\
\label{eqn:ch4ppp6}&\hspace{-10mm}+ 2\FNormS{\bA_{k,\perp}\bD\bH\bS(\bV_k^T\bD\bH\bS)^T}.
\end{align}
We now proceed to bound the terms in~(\ref{eqn:ch4ppp5}) and~(\ref{eqn:ch4ppp6}) separately.
Our first order of business, however, will be to quantify the manner in which the Randomized Hadamard Transform approximately uniformizes information in the top $k$ right singular vectors of $\bA$.

\subsubsection*{The effect of the Randomized Hadamard Transform.}
Here, we state a lemma that quantifies the manner in which $\bH \bD$ (premultiplying $\bV_k$, or $\bD \bH$ postmultiplying $\bV_k^T$) approximately ``uniformizes'' information in the right singular subspace of the matrix $\bA$, thus allowing us to apply our matrix multiplication results from Section~\ref{chapter:MM} in order to bound~(\ref{eqn:ch4ppp5}) and~(\ref{eqn:ch4ppp6}).
This is completely analogous to our discussion in Section~\ref{sxn:ls:thmproof} regarding the \textsc{RandLeastSquares} algorithm.
\begin{lemma} \label{lem:HUnew}
Let $\bV_k$ be an $n \times k$ matrix with orthonormal columns and let the product $\bH \bD$ be the $n \times n$ Randomized Hadamard Transform of Section~\ref{sxn:RHT}. Then, with probability at least $.95$,
\begin{align}
\label{eqn:lem:HU_eqn2new} \TNormS{\left(\bH \bD \bV_k \right)_{i*}}
   &\leq \frac{2k\ln(40nk)}{n},\qquad \text{ for all } i =1,\ldots, n   .
\end{align}
\end{lemma}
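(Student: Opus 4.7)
The plan is to mimic the proof of Lemma~\ref{lem:HU} essentially verbatim, substituting the $n \times k$ matrix $\bV_k$ for the generic $n \times d$ orthogonal matrix $\bU$ considered there. The statement of Lemma~\ref{lem:HU} was phrased for an arbitrary $n \times d$ matrix with orthonormal columns, and $\bV_k \in \mathbb{R}^{n \times k}$ fits this template exactly with $d$ replaced by $k$. So the result is an immediate corollary of (the proof of) Lemma~\ref{lem:HU}.

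More concretely, first I would fix an index pair $(i, j)$ with $i \in \{1, \ldots, n\}$ and $j \in \{1, \ldots, k\}$ and expand
\[
\left(\bH \bD \bV_k \right)_{ij} = \sum_{\ell=1}^n \bD_{\ell \ell} \bigl( \bH_{i\ell} (\bV_k)_{\ell j} \bigr) = \sum_{\ell=1}^n X_\ell,
\]
viewing the $X_\ell$ as independent random variables. Since $\bD_{\ell\ell}$ is $\pm 1$ with equal probability, $\Expect{X_\ell} = 0$, and since $|\bH_{i\ell}| = 1/\sqrt{n}$, we have $|X_\ell| \leq |(\bV_k)_{\ell j}|/\sqrt{n}$. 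Applying Hoeffding's inequality (Lemma~\ref{lem:hoef}) and using the fact that $\sum_\ell (\bV_k)_{\ell j}^2 = 1$ (the columns of $\bV_k$ are unit vectors), we get
\[
\Probab{\abs{(\bH \bD \bV_k)_{ij}} \geq \sqrt{\tfrac{2\ln(2/\delta)}{n}}} \leq \delta.
\]

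Second, I would set $\delta = 1/(20 n k)$ and take a union bound over all $n k$ pairs $(i,j)$, so that with probability at least $1 - 1/20 = 0.95$, the bound $|(\bH\bD\bV_k)_{ij}| \leq \sqrt{2\ln(40nk)/n}$ holds simultaneously for every such pair. Squaring and summing over the $k$ values of $j$ yields
\[
\TNormS{(\bH\bD\bV_k)_{i*}} = \sum_{j=1}^k (\bH\bD\bV_k)_{ij}^2 \leq \frac{2k\ln(40nk)}{n}
\]
for all $i = 1, \ldots, n$, as desired.

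There is essentially no hard step here: the argument is word-for-word the one already carried out for Lemma~\ref{lem:HU} (with $k$ in place of $d$), and the only reason to restate it is to have the bound expressed directly in terms of the rank parameter $k$, so that it can be fed into the subsequent analysis of the~\textsc{RandLowRank} algorithm (in particular, into bounding the two terms in~(\ref{eqn:ch4ppp5}) and~(\ref{eqn:ch4ppp6}) via the \textsc{RandMatrixMultiply} machinery of Section~\ref{chapter:MM}).
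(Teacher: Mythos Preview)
Your proposal is correct and takes exactly the same approach as the paper: the paper simply notes that the proof is identical to that of Lemma~\ref{lem:HU}, with $\bV_k$ in place of $\bU$ and $k$ in place of $d$. Your write-up just spells out those details explicitly, which is fine.
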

The proof of the above lemma is identical to the proof of Lemma~\ref{lem:HU}, with $\bV_k$ instead of $\bU$ and $k$ instead of $d$.

\subsubsection{Bounding Expression~(\ref{eqn:ch4ppp5}).}
To bound the term in Expression~(\ref{eqn:ch4ppp5}), we first use the strong submultiplicativity of the Frobenius norm (see Section~\ref{sxn:review:FNorm}) to~get
\begin{align}
\nonumber
&
\hspace{-20mm}
\FNormS{\bA_{k,\perp}\bD\bH\bS((\bV_k^T\bD\bH\bS)^{\dagger}-(\bV_k^T\bD\bH\bS)^T)}  \\
\leq
\label{eqn:n1ppp5}&\FNormS{\bA_{k,\perp}\bD\bH\bS}\TNormS{(\bV_k^T\bD\bH\bS)^{\dagger}-(\bV_k^T\bD\bH\bS)^T}.
\end{align}
Our first lemma bounds the term $\FNormS{(\bA-\bA_k)\bD\bH\bS} = \FNormS{\bA_{k,\perp}\bD\bH\bS}$. We actually prove the result for any matrix $\bX$ and for our choice for the matrix $\bS$ in the \textsc{RandLowRank} algorithm.
\begin{lemma}\label{lem:FNormExp}
Let the sampling matrix $\bS \in \mathbb{R}^{n \times c}$ be constructed as in the \textsc{RandLowRank} algorithm.
Then, for any matrix $\bX\in \mathbb{R}^{m\times n}$,
\[
\Expect{\FNormS{\bX\bS}}=\FNormS{\bX}  ,
\] and, from Markov's inequality (see Section~\ref{sxn:review:Markov}),
with probability at least 0.95,
\[\FNormS{\bX\bS} \leq 20\FNormS{\bX}.\]
\end{lemma}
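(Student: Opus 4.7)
The structure of $\bS$ makes this essentially a direct calculation. Each column of $\bS$ is of the form $\sqrt{n/c}\,\be_{i_t}$, where $i_t$ is drawn uniformly from $\{1,\ldots,n\}$, so that column $t$ of $\bX\bS$ equals $\sqrt{n/c}\,\bX_{*i_t}$. Therefore the Frobenius norm decomposes column-by-column as
\[
\FNormS{\bX\bS} \;=\; \sum_{t=1}^{c}\TNormS{\sqrt{n/c}\,\bX_{*i_t}} \;=\; \frac{n}{c}\sum_{t=1}^{c}\TNormS{\bX_{*i_t}}.
\]

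The plan for the expectation bound is to use linearity and the fact that each $i_t$ is uniform on $\{1,\ldots,n\}$. Indeed,
\[
\Expect{\TNormS{\bX_{*i_t}}} \;=\; \sum_{k=1}^{n}\frac{1}{n}\TNormS{\bX_{*k}} \;=\; \frac{1}{n}\FNormS{\bX},
\]
so summing over the $c$ i.i.d.\ trials gives $\Expect{\FNormS{\bX\bS}} = (n/c)\cdot c \cdot (1/n)\FNormS{\bX} = \FNormS{\bX}$, as claimed. (One could alternatively recognize this as the special case of Lemma~\ref{lem:multexpvar} applied to the product $\bX\bI_n = \bX$, viewed as approximating the product of $\bX$ with the identity using uniform sampling probabilities $p_k = 1/n$, and then summing the element-wise unbiasedness over all entries.)

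The plan for the high-probability bound is a one-line application of Markov's inequality (Section~\ref{sxn:review:Markov}) to the non-negative random variable $\FNormS{\bX\bS}$: with $\alpha = 20$,
\[
\Probab{\FNormS{\bX\bS} \;\geq\; 20\FNormS{\bX}} \;\leq\; \frac{\Expect{\FNormS{\bX\bS}}}{20\FNormS{\bX}} \;=\; \frac{1}{20},
\]
which gives the stated probability at least $0.95$. There is no real obstacle here; the lemma is a routine consequence of unbiasedness plus Markov, recorded because it will be combined later with the bound on $\TNormS{(\bV_k^T\bD\bH\bS)^{\dagger}-(\bV_k^T\bD\bH\bS)^T}$ via the submultiplicativity step~(\ref{eqn:n1ppp5}).
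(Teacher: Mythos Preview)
Your proof is correct and follows essentially the same approach as the paper: a direct first-principles computation of the expectation by decomposing $\FNormS{\bX\bS}$ column-by-column and averaging over the uniform choice of $i_t$, followed by a one-line application of Markov's inequality. The paper's argument is identical in substance, just slightly terser.
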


\begin{Remark} The above lemma holds even if the sampling of the canonical vectors $\be_i$ to be included in $\bS$ is not done uniformly at random, but with respect to any set of probabilities $\{p_1,\ldots,p_n\}$ summing up to one, as long as the selected canonical vector at the $t$-th trial (say the $i_t$-th canonical vector $\be_{i_t}$) is rescaled by $\sqrt{1/cp_{i_t}}$.
Thus, even for nonuniform sampling, $\bX\bS$ is an unbiased estimator for the Frobenius norm of the matrix $\bX$.
\end{Remark}

\begin{proof}(of Lemma~\ref{lem:FNormExp})
We compute the expectation of $\FNormS{\bX\bS}$ from first principles as follows:
\begin{align*}
\Expect{\FNormS{\bX\bS}}&=&\sum_{t=1}^c\sum_{j=1}^n \frac{1}{n}\cdot\FNormS{\sqrt{\frac{n}{c}}\bX_{*j}}=
\frac{1}{c}\sum_{t=1}^c\sum_{j=1}^n \FNormS{\bX_{*j}}=\FNormS{\bX}.
\end{align*}
The lemma now follows by applying Markov's inequality.
\end{proof}

We can now prove the following lemma, which will be conditioned on Eqn.~(\ref{eqn:lem:HU_eqn2new}) holding.
\begin{lemma}
\label{lem:sample_lem40pfn1}
Assume that Eqn.~(\ref{eqn:lem:HU_eqn2new}) holds.  If $c$ satisfies
\begin{equation}\label{eqn:ppdd2}
c \geq \frac{192k\ln(40nk)}{\epsilon^2} \ln\left(\frac{192\sqrt{20}k\ln(40nk)}{\epsilon^2}\right),
\end{equation}
then with probability at least .95,
\[\TNormS{\left(\bV_k^T\bD\bH\bS\right)^{\dagger}-\left(\bV_k^T\bD\bH\bS\right)^T}\leq2\epsilon^2.\]
\end{lemma}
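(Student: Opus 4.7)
The plan is to recast the quantity of interest so that Theorem~\ref{thm:theorem7correct} applies directly, and then to convert a spectral approximation of $\bI_k$ into a bound on $\TNormS{\bOmega^\dagger - \bOmega^T}$, where I write $\bOmega = \bV_k^T\bD\bH\bS \in \mathbb{R}^{k\times c}$. The key observation is that $\bU := \bH\bD\bV_k \in \mathbb{R}^{n\times k}$ has orthonormal columns, since $\bD\bH$ is orthogonal and $\bV_k^T\bV_k = \bI_k$. Moreover, if $\bR := \bS^T\bU \in \mathbb{R}^{c\times k}$ denotes the sampled-and-rescaled rows of $\bU$ produced by~$\bS$, then $\bR^T\bR = \bOmega\bOmega^T$.

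Next I would apply Theorem~\ref{thm:theorem7correct} to $\bU$ with the uniform probabilities $p_i = 1/n$. Conditioned on Eqn.~(\ref{eqn:lem:HU_eqn2new}), the row norms of $\bU$ satisfy $\TNormS{\bU_{i*}} \le 2k\ln(40nk)/n$, so the required nearly-optimal condition (Eqn.~(\ref{eqn:appopt4})) holds with $\beta = (2\ln(40nk))^{-1}$. Choosing accuracy $\epsilon$ and failure probability $\delta = 1/20$ in Theorem~\ref{thm:theorem7correct} turns its sample-size requirement into precisely Eqn.~(\ref{eqn:ppdd2}), and yields, with probability at least $.95$,
\[
\TNorm{\bI_k - \bOmega\bOmega^T} = \TNorm{\bI_k - \bR^T\bR} \le \epsilon.
\]

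In particular, since $\epsilon \le 1/2 < 1$, the eigenvalues of $\bM := \bOmega\bOmega^T$ lie in $[1-\epsilon, 1+\epsilon]$, so $\bOmega$ has full row rank and therefore $\bOmega^\dagger = \bOmega^T \bM^{-1}$. Then
\[
(\bOmega^\dagger - \bOmega^T)^T(\bOmega^\dagger - \bOmega^T)
 = (\bM^{-1}-\bI_k)\,\bM\,(\bM^{-1}-\bI_k)
 = \bM^{-1} + \bM - 2\bI_k,
\]
whose eigenvalues are $\lambda + \lambda^{-1} - 2 = (\lambda-1)^2/\lambda$ for $\lambda$ an eigenvalue of $\bM$. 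Since each such $\lambda \in [1-\epsilon, 1+\epsilon]$, these contributions are bounded by $\epsilon^2/(1-\epsilon)$, which is at most $2\epsilon^2$ for $\epsilon \le 1/2$. Taking the spectral norm of this symmetric matrix gives $\TNormS{\bOmega^\dagger - \bOmega^T} \le 2\epsilon^2$, as claimed.

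The only non-mechanical step is the algebraic identity $(\bM^{-1}-\bI_k)\bM(\bM^{-1}-\bI_k) = \bM + \bM^{-1} - 2\bI_k$; everything else is bookkeeping to verify that uniform sampling combined with the row-norm bound of Eqn.~(\ref{eqn:lem:HU_eqn2new}) lets Theorem~\ref{thm:theorem7correct} deliver the required approximate isometry with exactly the stated value of $c$.
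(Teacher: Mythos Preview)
Your proof is correct and is essentially the same as the paper's. Both invoke Theorem~\ref{thm:theorem7correct} with $\bU=\bH\bD\bV_k$, uniform probabilities, $\beta=(2\ln(40nk))^{-1}$, and $\delta=1/20$ to obtain $\TNorm{\bI_k-\bOmega\bOmega^T}\le\epsilon$, and then convert this into the desired bound via $(\sigma_i^{-1}-\sigma_i)^2=(1-\sigma_i^2)^2/\sigma_i^2\le\epsilon^2/(1-\epsilon)\le 2\epsilon^2$; the paper states this directly in terms of the shared singular vectors of $\bOmega^\dagger$ and $\bOmega^T$, while you reach the identical quantity through the eigenvalues of $\bM=\bOmega\bOmega^T$ and the identity $(\bM^{-1}-\bI_k)\bM(\bM^{-1}-\bI_k)=\bM+\bM^{-1}-2\bI_k$.
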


\begin{proof}
Let $\sigma_i$ denote the $i$-th singular value of the matrix $\bV_k^T\bD\bH\bS$. Conditioned on Eqn.~(\ref{eqn:lem:HU_eqn2new}) holding, we can replicate the proof of Lemma~\ref{lem:sample_lem20pf} to argue that if $c$ satisfies Eqn.~(\ref{eqn:ppdd2}), then, with probability at least .95,
\begin{equation}\label{eqn:ppdd1}
\abs{1 - \sigma_i^2} \leq \epsilon
\end{equation}
holds for all $i$. (Indeed, we can replicate the proof of Lemma~\ref{lem:sample_lem20pf} using $\bV_k$ instead of $\bU_A$ and $k$ instead of $d$; we also evaluate the bound for arbitrary $\epsilon$ instead of fixing it.)
We now observe that the matrices
$$
\left(\bV_k^T\bD\bH\bS\right)^{\dagger} \quad\mbox{and}\quad \left(\bV_k^T\bD\bH\bS\right)^T
$$
have the same left and right singular vectors\footnote{Given any matrix $\bX$ with thin SVD $\bX=\bU_X\bSigma_X\bV_X^T$ its transpose is $\bX^T=\bV_X\bSigma_X\bU_X^T$ and its pseudoinverse is $\bX=\bV_X\bSigma_X^{-1}\bU_X^T$.}. Recall that in this lemma we used $\sigma_i$ to denote the singular values of the matrix $\bV_k^T\bD\bH\bS$.
Then, the singular values of the matrix $\left(\bV_k^T\bD\bH\bS\right)^T$ are equal to the $\sigma_i$'s, while the singular values of the matrix $\left(\bV_k^T\bD\bH\bS\right)^{\dagger}$ are equal to $\sigma_i^{-1}$. Thus,
\[\TNormS{\left(\bV_k^T\bD\bH\bS\right)^{\dagger}-\left(\bV_k^T\bD\bH\bS\right)^T} = \max_i\abs{\sigma_i^{-1}-\sigma_i}^2=
\max_i\abs{(1-\sigma_i^2)^2\sigma_i^{-2}}.\]
Combining with Eqn.~(\ref{eqn:ppdd1}) and using the fact that $\epsilon \leq 1/2$,
$$\TNormS{\left(\bV_k^T\bD\bH\bS\right)^{\dagger}-\left(\bV_k^T\bD\bH\bS\right)^T}=
\max_i\abs{(1-\sigma_i^2)\sigma_i^{-2}}\leq (1-\epsilon)^{-1}\epsilon^2\leq2\epsilon^2.$$
\end{proof}

\begin{lemma}\label{lem:sample_lem40pfn2}
Assume that Eqn.~(\ref{eqn:lem:HU_eqn2new}) holds.  If $c$ satisfies Eqn.~(\ref{eqn:ppdd2}), then, with probability at least .9,
\[2\FNormS{\bA_{k,\perp}\bD\bH\bS((\bV_k^T\bD\bH\bS)^{\dagger}-(\bV_k^T\bD\bH\bS)^T)}\leq 80\epsilon^2\FNormS{\bA_{k,\perp}}.\]
\end{lemma}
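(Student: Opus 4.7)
The plan is to combine the two previous lemmas through the strong submultiplicativity of the Frobenius norm. Recall from Section~\ref{sxn:review:FNorm} that for any matrices $\bX$ and $\bY$ of conformable dimensions, $\FNormS{\bX\bY} \leq \FNormS{\bX}\TNormS{\bY}$. Applying this with $\bX = \bA_{k,\perp}\bD\bH\bS$ and $\bY = (\bV_k^T\bD\bH\bS)^{\dagger}-(\bV_k^T\bD\bH\bS)^T$ reduces the quantity of interest to the product of a Frobenius-norm factor and a spectral-norm factor that are controlled by Lemma~\ref{lem:FNormExp} and Lemma~\ref{lem:sample_lem40pfn1}, respectively.

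First, I would bound the Frobenius-norm factor. Since $\bD\bH$ is orthogonal, $\FNormS{\bA_{k,\perp}\bD\bH} = \FNormS{\bA_{k,\perp}}$, and so applying Lemma~\ref{lem:FNormExp} with the matrix $\bA_{k,\perp}\bD\bH$ in place of $\bX$ yields $\FNormS{\bA_{k,\perp}\bD\bH\bS} \leq 20\FNormS{\bA_{k,\perp}}$ with probability at least $.95$. Note that this bound does not depend on the assumption that Eqn.~(\ref{eqn:lem:HU_eqn2new}) holds; it is purely a statement about uniform sampling of columns.

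Second, conditioning on Eqn.~(\ref{eqn:lem:HU_eqn2new}), Lemma~\ref{lem:sample_lem40pfn1} gives $\TNormS{(\bV_k^T\bD\bH\bS)^{\dagger}-(\bV_k^T\bD\bH\bS)^T} \leq 2\epsilon^2$ with conditional probability at least $.95$, provided $c$ satisfies Eqn.~(\ref{eqn:ppdd2}). Combining the two factors via strong submultiplicativity and multiplying by the leading $2$ yields
\[
2\FNormS{\bA_{k,\perp}\bD\bH\bS((\bV_k^T\bD\bH\bS)^{\dagger}-(\bV_k^T\bD\bH\bS)^T)} \leq 2\cdot 20\FNormS{\bA_{k,\perp}}\cdot 2\epsilon^2 = 80\epsilon^2 \FNormS{\bA_{k,\perp}},
\]
which is the claimed bound.

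There is no real obstacle here beyond bookkeeping; the only subtlety is tracking the failure probabilities. Applying a union bound to the two bad events (each of probability at most $.05$) shows that both bounds hold simultaneously with probability at least $1-.05-.05 = .9$, matching the probability claimed in the statement of the lemma.
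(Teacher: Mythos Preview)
Your proof is correct and follows essentially the same approach as the paper: both apply the strong submultiplicativity inequality (Eqn.~(\ref{eqn:n1ppp5})), then invoke Lemma~\ref{lem:FNormExp} for the Frobenius-norm factor and Lemma~\ref{lem:sample_lem40pfn1} for the spectral-norm factor, combining the failure probabilities by a union bound. Your version is slightly more careful in noting that Lemma~\ref{lem:FNormExp} should be applied with $\bX = \bA_{k,\perp}\bD\bH$ (the paper writes $\bX = \bA_{k,\perp}$, which is harmless since $\bD\bH$ is orthogonal) and in observing that this bound does not require the conditioning on Eqn.~(\ref{eqn:lem:HU_eqn2new}).
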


\begin{proof}
We combine Eqn.~(\ref{eqn:n1ppp5}) with Lemmas~\ref{lem:FNormExp} (applied to $\bX=\bA_{k,\perp}$) and Lemma~\ref{lem:sample_lem40pfn1} to get that, conditioned on Eqn.~(\ref{eqn:lem:HU_eqn2new}) holding, with probability at least 1-0.05-0.05=0.9,
\[\FNormS{\bA_{k,\perp}\bD\bH\bS((\bV_k^T\bD\bH\bS)^{\dagger}-(\bV_k^T\bD\bH\bS)^T)}\leq 40\epsilon^2\FNormS{\bA_{k,\perp}}.\]
The aforementioned failure probability follows from a simple union bound on the failure probabilities of
Lemmas~\ref{lem:FNormExp} and~\ref{lem:sample_lem40pfn1}.
\end{proof}

\subsubsection*{Bounding Expression~(\ref{eqn:ch4ppp6}).}
Our bound for Expression~(\ref{eqn:ch4ppp6}) will be conditioned on Eqn.~(\ref{eqn:lem:HU_eqn2new}) holding; then, we will use our matrix multiplication results from Section~\ref{chapter:MM} to derive our bounds. Our discussion is completely analogous to the proof of Lemma~\ref{lem:sample_lem40pf}. We will prove the following lemma.

\begin{lemma}
\label{lem:sample_lem40pfn3}
Assume that Eqn.~(\ref{eqn:lem:HU_eqn2new}) holds.  If $c \geq 40k\ln(40nk)/\epsilon$, then, with probability at least .95,
\[\FNormS{\bA_{k,\perp}\bD\bH\bS(\bV_k^T\bD\bH\bS)^T} \leq
\epsilon\FNormS{\bA_{k,\perp}}.\]
\end{lemma}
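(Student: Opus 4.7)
The plan is to interpret the quantity $\bA_{k,\perp}\bD\bH\bS(\bV_k^T\bD\bH\bS)^T$ as an estimator, produced by the \textsc{RandMatrixMultiply} algorithm with uniform sampling probabilities, for a product of two matrices whose true value is $\bzero$, and then to apply the nearly-optimal-probability expectation bound from Eqn.~(\ref{eqn:appopt3result}), followed by Markov's inequality. This is entirely parallel to the proof of Lemma~\ref{lem:sample_lem40pf}.

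First, since $\bD^T=\bD$ and $\bH^T=\bH$, I would rewrite
\[
\bA_{k,\perp}\bD\bH\bS(\bV_k^T\bD\bH\bS)^T \;=\; (\bA_{k,\perp}\bD\bH)\,\bS\bS^T\,(\bH\bD\bV_k),
\]
which is exactly the \textsc{RandMatrixMultiply} estimator for the product $(\bA_{k,\perp}\bD\bH)(\bH\bD\bV_k)$. Using $\bH^T\bH=\bI_n$ and $\bD^2=\bI_n$, this product collapses to $\bA_{k,\perp}\bV_k$, which equals $\bzero$ because $\bA_{k,\perp}=\bU_{k,\perp}\bSigma_{k,\perp}\bV_{k,\perp}^T$ and $\bV_{k,\perp}^T\bV_k=\bzero$. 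So the whole task reduces to bounding how far the sampled estimator deviates from zero.

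Next I would verify that the uniform probabilities $p_i=1/n$ are nearly optimal in the sense of Eqn.~(\ref{eqn:appopt3}) with respect to the matrix $\bH\bD\bV_k$ (the one whose rows are being sampled). By orthogonal invariance $\FNormS{\bH\bD\bV_k}=\FNormS{\bV_k}=k$, and the conditioning on Eqn.~(\ref{eqn:lem:HU_eqn2new}) gives $\TNormS{(\bH\bD\bV_k)_{i*}}\le 2k\ln(40nk)/n$ for every $i$. Hence
\[
\frac{1}{n} \;\ge\; \frac{1}{2\ln(40nk)}\cdot\frac{\TNormS{(\bH\bD\bV_k)_{i*}}}{\FNormS{\bH\bD\bV_k}},
\]
so Eqn.~(\ref{eqn:appopt3}) holds with $\beta=(2\ln(40nk))^{-1}$.

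Then Eqn.~(\ref{eqn:appopt3result}), together with $\FNorm{\bA_{k,\perp}\bD\bH}=\FNorm{\bA_{k,\perp}}$ (orthogonal invariance again), yields
\[
\Expect{\FNormS{\bA_{k,\perp}\bD\bH\bS(\bV_k^T\bD\bH\bS)^T}} \;\le\; \frac{1}{\beta c}\,\FNormS{\bA_{k,\perp}}\cdot k \;=\; \frac{2k\ln(40nk)}{c}\,\FNormS{\bA_{k,\perp}}.
\]
Applying Markov's inequality (Section~\ref{sxn:review:Markov}) boosts this expectation bound to a $.95$-probability tail bound with an extra factor of $20$, giving $\tfrac{40k\ln(40nk)}{c}\FNormS{\bA_{k,\perp}}$; plugging in the hypothesis $c\ge 40k\ln(40nk)/\epsilon$ closes the argument. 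There is no genuine obstacle here—the only point requiring a little care is the orthogonal-invariance collapse $\bA_{k,\perp}\bV_k=\bzero$, which is what makes the estimator's target zero and thereby reduces a two-matrix approximation question to a pure concentration statement.
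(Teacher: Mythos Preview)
Your proposal is correct and follows essentially the same approach as the paper's proof: both recognize that $\bA_{k,\perp}\bD\bH\bS(\bV_k^T\bD\bH\bS)^T$ is the \textsc{RandMatrixMultiply} estimator for the product $(\bA_{k,\perp}\bD\bH)(\bH^T\bD\bV_k)=\bA_{k,\perp}\bV_k=\bzero$, verify that the uniform probabilities satisfy Eqn.~(\ref{eqn:appopt3}) with $\beta=(2\ln(40nk))^{-1}$ under the conditioning on Eqn.~(\ref{eqn:lem:HU_eqn2new}), apply Eqn.~(\ref{eqn:appopt3result}), and finish with Markov's inequality. The only cosmetic difference is that you exploit $\bH^T=\bH$ to write $\bH\bD\bV_k$ where the paper keeps $\bH^T\bD\bV_k$.
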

\begin{proof}
To prove the lemma, we first observe that
\[\FNormS{\bA_{k,\perp}\bD\bH\bS(\bV_k^T\bD\bH\bS)^T}=\FNormS{\bA_{k,\perp}\bD\bH\bS\bS^T\bH^T\bD\bV_k-\bA_{k,\perp}\bD\bH\bH^T\bD\bV_k},\]
since $\bD\bH\bH^T\bD=\bI_n$ and $\bA_{k,\perp}\bV_k=\bzero$.
Thus, we can view $\bA_{k,\perp}\bD\bH\bS\bS^T\bH^T\bD\bV_k$ as approximating the product of two matrices, $\bA_{k,\perp}\bD\bH$ and $\bH^T\bD\bV_k$, by randomly sampling columns from the first matrix and the corresponding rows from the second matrix. Note that the sampling probabilities are uniform and do not depend on the two matrices involved in the product. We will apply the bounds of Eqn.~(\ref{eqn:appopt3result}), after arguing that the assumptions of Eqn.~(\ref{eqn:appopt3}) are satisfied. Indeed, since we condition on Eqn.~(\ref{eqn:lem:HU_eqn2new}) holding, the rows of $\bH^T\bD\bV_k=\bH\bD\bV_k$ satisfy
\begin{equation}
\frac{1}{n}   \ge  \beta \frac{\TNormS{\left(\bH\bD\bV_k\right)_{i*}}}{k}, \qquad       \text{ for all } i =1,\ldots, n,
\end{equation}
for $\beta = \left(2\ln(40nk)\right)^{-1}$. Thus, Eqn.~(\ref{eqn:appopt3result}) implies
\begin{align*}
\Expect{\FNormS{\bA_{k,\perp}\bD\bH\bS\bS^T\bH^T\bD\bV_k-\bA_{k,\perp}\bD\bH\bH^T\bD\bV_k}}
   &\leq \frac{1}{\beta c}\FNormS{\bA_{k,\perp}\bD\bH}\FNormS{\bH \bD \bV_k}\\
   &=     \frac{k}{\beta c}\FNormS{\bA_{k,\perp}}.
\end{align*}
In the above we used $\FNormS{\bH \bD \bV_k} = k$. Markov's inequality now implies that with probability at least .95,
\begin{equation*}
\FNormS{\bA_{k,\perp}\bD\bH\bS\bS^T\bH^T\bD\bV_k-\bA_{k,\perp}\bD\bH\bH^T\bD\bV_k} \leq
\frac{20k}{\beta c}\FNormS{\bA_{k,\perp}}.
\end{equation*}
Setting $r \geq 20k/(\beta\epsilon)$ and using the value of $\beta$ specified above concludes the proof of the lemma.
\end{proof}

\subsubsection*{Concluding the proof of Theorem~\ref{thm:relerrLowRank}.}
We are now ready to conclude the proof, and therefore we revert back to using $\bA_{k,\perp}=\bA-\bA_k$. We first state the following lemma.
\begin{lemma}
\label{lem:sample_lem40pfn4}
Assume that Eqn.~(\ref{eqn:lem:HU_eqn2new}) holds. There exists a constant $c_0$ such that, if
\begin{equation}\label{eqn:cvalfinal}
c \geq c_0\frac{k\ln n}{\epsilon^2} \ln\left(\frac{k\ln n}{\epsilon^2}\right),
\end{equation}
then with probability at least .85,
\[\FNorm{\bA-\tilde\bU_k\tilde\bU_k^T\bA} \leq (1+\epsilon)\FNorm{\bA-\bA_k}.\]
\end{lemma}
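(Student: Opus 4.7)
The plan is to thread together Lemma~\ref{l_aux2}, the structural inequality~(\ref{eqn:struct-cond-low-rank-gen:phi2}), and the two norm bounds from Lemmas~\ref{lem:sample_lem40pfn2} and~\ref{lem:sample_lem40pfn3}, with a single constant rescaling of the accuracy parameter. First, I would invoke Lemma~\ref{l_aux2} to reduce the task to bounding $\FNormS{\bA_k-\bU_C\bU_C^T\bA_k}$ relative to $\FNormS{\bA-\bA_k}$; any bound of the form $\FNormS{\bA_k-\bU_C\bU_C^T\bA_k}\leq \epsilon'\,\FNormS{\bA-\bA_k}$ yields $\FNormS{\bA-\tilde\bU_k\tilde\bU_k^T\bA}\leq (1+\epsilon')\FNormS{\bA-\bA_k}$, which on taking square roots and using $\sqrt{1+\epsilon'}\leq 1+\epsilon'$ gives the desired Frobenius-norm guarantee, provided $\epsilon'$ is at most a suitable constant multiple of $\epsilon$.

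Next, I would apply the structural inequality~(\ref{eqn:struct-cond-low-rank-gen:phi2}) to $\bPhi=\bA\bD\bH$ with $\bZ=\bS$, after noting that the rank hypothesis $\mathrm{rank}(\bV_k^T\bD\bH\bS)=k$ is already implicit in the proof of Lemma~\ref{lem:sample_lem40pfn1} (the singular values of $\bV_k^T\bD\bH\bS$ are within $\epsilon$ of one, hence nonzero, for our chosen $c$). Adding and subtracting $(\bV_k^T\bD\bH\bS)^T$ inside the Frobenius norm on the right-hand side and applying $\FNormS{\bX+\bY}\leq 2\FNormS{\bX}+2\FNormS{\bY}$ produces the two-term decomposition already displayed in~(\ref{eqn:ch4ppp5})--(\ref{eqn:ch4ppp6}). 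I would then apply Lemma~\ref{lem:sample_lem40pfn2} with an internal parameter $\epsilon''$ to bound Expression~(\ref{eqn:ch4ppp5}) by $80{\epsilon''}^2\,\FNormS{\bA_{k,\perp}}$, and Lemma~\ref{lem:sample_lem40pfn3} with the same $\epsilon''$ to bound Expression~(\ref{eqn:ch4ppp6}) by $2\epsilon''\,\FNormS{\bA_{k,\perp}}$. Choosing $\epsilon''$ as a small constant multiple of $\epsilon$ makes $80{\epsilon''}^2+2\epsilon''\leq \epsilon$, delivering the $\epsilon'=\epsilon$ needed in the first step.

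To pin down $c$, observe that the requirement $c=\Omega\bigl(k\ln(nk)/{\epsilon''}^2\cdot\ln(k\ln(nk)/{\epsilon''}^2)\bigr)$ from Lemma~\ref{lem:sample_lem40pfn2} is strictly stronger than the $c=\Omega(k\ln(nk)/\epsilon'')$ of Lemma~\ref{lem:sample_lem40pfn3}, so the former alone determines the sample complexity. Using $k\leq n$ gives $\ln(40nk)=O(\ln n)$, and with $\epsilon''=\Theta(\epsilon)$ the leading factor becomes $k\ln n/\epsilon^2$ while the inner logarithm collapses to $O(\ln(k\ln n/\epsilon^2))$, matching the target form~(\ref{eqn:cvalfinal}) for a suitable absolute constant $c_0$. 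The probability bookkeeping is a union bound: Lemma~\ref{lem:HUnew} supplies Eqn.~(\ref{eqn:lem:HU_eqn2new}) with probability at least $0.95$; conditioned on it, Lemma~\ref{lem:sample_lem40pfn2} and Lemma~\ref{lem:sample_lem40pfn3} hold with probability at least $0.9$ and $0.95$ respectively, so the overall success probability is at least $0.95\cdot(1-0.1-0.05)\geq 0.85$, by the same conditional-probability argument used at the end of Section~\ref{sxn:ls:thmproof}.

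The main obstacle, such as it is, is not any single deep step but rather the chained rescaling of accuracy parameters: one must carry a definite constant multiple of $\epsilon$ all the way from $\sqrt{1+\epsilon'}\leq 1+\epsilon$ at the top, through the quadratic-plus-linear sum $80{\epsilon''}^2+2\epsilon''$, and verify that the resulting $c$ still fits~(\ref{eqn:cvalfinal}) after absorbing $\ln(40nk)$ into $\ln n$ and simplifying the double logarithm. Every other ingredient is already in place.
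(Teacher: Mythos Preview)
Your approach is essentially identical to the paper's: invoke Lemma~\ref{l_aux2}, apply the structural inequality, split via~(\ref{eqn:ch4ppp5})--(\ref{eqn:ch4ppp6}), bound the pieces with Lemmas~\ref{lem:sample_lem40pfn2} and~\ref{lem:sample_lem40pfn3}, and rescale $\epsilon$. The paper applies the two lemmas with the same $\epsilon$ and rescales at the end (obtaining $1+41\epsilon$, then $\epsilon\mapsto\epsilon/21$), whereas you introduce $\epsilon''$ up front; both are fine.

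There is one genuine slip in your probability bookkeeping. The lemma \emph{assumes} Eqn.~(\ref{eqn:lem:HU_eqn2new}) holds, so the $.85$ success probability is conditional and you should not multiply by the $.95$ from Lemma~\ref{lem:HUnew}; that step belongs only to the final unconditioning at the end of the theorem's proof. Moreover, your arithmetic is off: $0.95\cdot(1-0.1-0.05)=0.8075$, which is not $\geq 0.85$. The correct accounting is simply the union bound on the conditional failure probabilities of Lemmas~\ref{lem:sample_lem40pfn2} and~\ref{lem:sample_lem40pfn3}, giving $1-0.1-0.05=0.85$ as claimed.
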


\begin{proof}
Combining Lemma~\ref{l_aux2} with Expressions~(\ref{eqn:ch4ppp5}) and~(\ref{eqn:ch4ppp6}), and Lemmas~\ref{lem:sample_lem40pfn2} and~\ref{lem:sample_lem40pfn3},
we get
\[\FNormS{\bA-\tilde\bU_k\tilde\bU_k^T\bA} \leq
 (1+\epsilon+80\epsilon^2)\FNormS{\bA-\bA_k}\leq (1+41\epsilon)\FNormS{\bA-\bA_k}.\]
The last inequality follows by using $\epsilon\leq 1/2$. Taking square roots of both sides and using $\sqrt{1+41\epsilon}\leq 1+21\epsilon$, we get
\[\FNorm{\bA-\tilde\bU_k\tilde\bU_k^T\bA} \leq
 (1+\epsilon+80\epsilon^2)\FNormS{\bA-\bA_k}\leq (1+21\epsilon)\FNorm{\bA-\bA_k}.\]
Observe that $c$ has to be set to the maximum of the values used in Lemmas~\ref{lem:sample_lem40pfn2} and~\ref{lem:sample_lem40pfn3}, which is the value of Eqn.~(\ref{eqn:ppdd2}). Adjusting $\epsilon$ to $\epsilon/21$ and appropriately adjusting the constants in the expression of $c$ concludes the lemma.
(We made no particular effort to compute or optimize the constant $c_0$ in the expression of $c$.)

The failure probability follows by a union bound on the failure probabilities of Lemmas~\ref{lem:sample_lem40pfn2} and~\ref{lem:sample_lem40pfn3} conditioned on Eqn.~(\ref{eqn:lem:HU_eqn2new}).
\end{proof}

To conclude the proof of Theorem~\ref{thm:relerrLowRank}, we simply need to remove the conditional probability from Lemma~\ref{lem:sample_lem40pfn4}. Towards that end, we follow the same strategy as in Section~\ref{sxn:ls:thmproof}, to conclude that the success probability of the overall approach is at least $0.85\cdot 0.95 \geq 0.8$.

\subsection{Running time.}\label{sxn:ch4:runningtime}
The~\textsc{RandLowRank} algorithm computes the product $\bC=\bA\bH\bD\bS$ using the ideas of Section~\ref{sxn:lsruntime}, thus taking $2n(m+1)\log_2(c+1)$ time. Step 7 takes $O(mc^2)$; step 8 takes $O(mnc+nc^2)$ time; step 9 takes $O(mck)$ time. Overall, the running time is, asymptotically, dominated by the $O(mnc)$ term is step 8, with $c$ as in Eqn.~(\ref{eqn:cvalfinal}).

\subsection{References.} Our presentation in this chapter follows the derivations in~\cite{DrineasIKM16}. We also refer the interested reader to~\cite{MM2015,Wang2015} for related work.


\subsection*{Acknowledgements.} The authors would like to thank Ilse Ipsen for allowing them to use her slides for the introductory linear algebra lecture delivered at the PCMI Summer School. The first section of this chapter is heavily based on those slides. The authors would also like to thank Aritra Bose, Eugenia-Maria Kontopoulou, and Fred Roosta for their help in proofreading early drafts of this manuscript.

%
%
%
%
%
%
%


\begin{bibdiv}
\begin{biblist}


\bib{pcmi-chapter-randnla}{article}{
   author = {Petros Drineas and Michael W.  Mahoney},
   title={Lectures on Randomized Numerical Linear Algebra},
   conference={
      title={The Mathematics of Data},
   },
   book={
      series={IAS/Park City Math. Ser.},
      volume={25},
      publisher={Amer. Math. Soc., Providence, RI},
   },
   date={2018},
}

\bib{pcmi-chapter-martinsson}{article}{
   author={Martinsson, Per-Gunnar},
   title={Randomized methods for matrix computations},
   conference={
      title={The Mathematics of Data},
   },
   book={
      series={IAS/Park City Math. Ser.},
      volume={25},
      publisher={Amer. Math. Soc., Providence, RI},
   },
   date={2018},
}

\bib{pcmi-chapter-vershynin}{article}{
   author={Vershynin, Roman},
   title={Four lectures on probabilistic methods for data science},
   conference={
      title={The Mathematics of Data},
   },
   book={
      series={IAS/Park City Math. Ser.},
      volume={25},
      publisher={Amer. Math. Soc., Providence, RI},
   },
   date={2018},
}

\bib{pcmi-chapter-duchi}{article}{
   author={Duchi, John C.},
   title={Introductory lectures on stochastic optimization},
   conference={
      title={The Mathematics of Data},
   },
   book={
      series={IAS/Park City Math. Ser.},
      volume={25},
      publisher={Amer. Math. Soc., Providence, RI},
   },
   date={2018},
}

\bib{pcmi-chapter-maggioni}{article}{
   author={Maggioni, Mauro},
   title={Geometric and graph methods for high-dimensional data},
   conference={
      title={The Mathematics of Data},
   },
   book={
      series={IAS/Park City Math. Ser.},
      volume={25},
      publisher={Amer. Math. Soc., Providence, RI},
   },
   date={2018},
}

\bib{Drineas2016}{article}{
author = {P.~Drineas and M.~W.~Mahoney},
issn = {00010782},
journal = {Communications of the ACM},
month = {may},
number = {6},
pages = {80--90},
publisher = {ACM},
title = {{RandNLA: Randomized Numerical Linear Algebra}},
url = {http://dl.acm.org/ft{\_}gateway.cfm?id=2842602{\&}type=html},
volume = {59},
year = {2016}
}

\bib{Mah-mat-rev_BOOK}{book}{
  author =       {M.~W.~Mahoney},
  title =        {Randomized algorithms for matrices and data},
  publisher =    {NOW Publishers},
  year =         {2011},
  address =      {Boston},
  series =       {Foundations and Trends in Machine Learning},
}

\bib{GVL96}{book}{
  author =       {G.~H.~Golub and C.~F.~Van~Loan},
  title =        {Matrix Computations},
  publisher =    {Johns Hopkins University Press},
  year =         {1996},
  address =      {Baltimore},
}

\bib{Bjo15}{book}{
author= {{\AA.~Bj\"{o}rck}},
title = {Numerical Methods in Matrix Computations},
address = {Heidelberg},
publisher = {Springer},
year = {2015}
}

\bib{Strang88}{book}{
  author =       {G.~Strang},
  title =        {Linear Algebra and Its Applications},
  publisher =    {Harcourth Brace Jovanovich},
  year =         {1988},
}

\bib{TrefethenBau97}{book}{
  author =       {L.N.~Trefethen and D.~Bau~III},
  title =        {Numerical Linear Algebra},
  publisher =    {SIAM},
  year =         {1997},
  address =      {Philadelphia},
}

\bib{Stewart90}{book}{
  author =       {G. W. Stewart and J. G. Sun},
  title =        {Matrix Perturbation Theory},
  publisher =    {Academic Press},
  year =         {1990},
  address =      {New York},
}

\bib{Bhatia97}{book}{
  author =       {R. Bhatia},
  title =        {Matrix Analysis},
  publisher =    {Springer-Verlag},
  year =         {1997},
  address =      {New York},
}

\bib{MotwaniRaghavan95}{book}{
  author =       {R. Motwani and P. Raghavan},
  title =        {Randomized Algorithms},
  publisher =    {Cambridge University Press},
  year =         {1995},
  address =      {New York},
}

\bib{dkm_matrix1}{article}{
  author =       {P.~Drineas and R.~Kannan and M.~W.~Mahoney},
  title =        {{Fast {Monte Carlo} Algorithms for Matrices {I}: Approximating Matrix Multiplication}},
  journal =      {SIAM Journal on Computing},
  year =         {2006},
  volume =       {36},
  number =       {},
  pages =        {132--157},
}

\bib{Oli10}{article}{
  author =       {R.~I.~Oliveira},
  title =        {{Sums of random Hermitian matrices and an inequality by Rudelson}},
  journal =      {arXiv:1004.3821v1},
  year =         {2010},
}

\bib{Srivastava2010}{book}{
author = {N.~Srivastava},
title = {{Spectral sparsification and restricted invertibility}},
publisher = {Yale University},
address = {New Haven, CT}
url = {https://math.berkeley.edu/{~}nikhil/dissertation.pdf},
year = {2010}
}

\bib{Holodnak2015}{article}{
author = {J.~T.~Holodnak and I.~Ipsen},
journal = {SIAM J. Matrix Anal. Appl.},
number = {1},
pages = {110--137},
title = {{Randomized Approximation of the Gram Matrix: Exact Computation and Probabilistic Bounds}},
volume = {36},
year = {2015}
}

\bib{Ailon2009}{article}{
author = {N.~Ailon and B.~Chazelle},
issn = {0097-5397},
journal = {SIAM Journal on Computing},
number = {1},
pages = {302--322},
title = {{The Fast Johnson-Lindenstrauss Transform and Approximate Nearest Neighbors}},
volume = {39},
year = {2009}
}

\bib{AMT10}{article}{
  author =       {H.~Avron and P.~Maymounkov and S.~Toledo},
  title =        {Blendenpik: Supercharging {LAPACK}'s least-squares solver},
  journal =      {SIAM Journal on Scientific Computing},
  year =         {2010},
  volume =       {32},
  number =       {},
  pages =        {1217--1236},
}

\bib{DMMS11}{article}{
  author =       {P.~Drineas and M.~W.~Mahoney and S.~Muthukrishnan and T.~Sarl\'{o}s},
  title =        {Faster Least Squares Approximation},
  journal =      {Numerische Mathematik},
  year =         {2010},
  volume =       {117},
  number =       {2},
  pages =        {219--249},
}

\bib{Hoeffding1963}{article}{
author = {W.~Hoeffding},
journal = {Journal of the American Statistical Association},
number = {301},
pages = {13--30},
title = {{Probability inequalities for sums of bounded random variables}},
volume = {58},
year = {1963}
}

\bib{AL09}{article}{
 author = {N.~Ailon and E.~Liberty},
 title = {Fast Dimension Reduction Using Rademacher Series on Dual BCH Codes},
 journal = {Discrete Comput. Geom.},
 volume = {42},
 number = {4},
 year = {2009},
 issn = {0179-5376},
 pages = {615--630},
}
  
\bib{Woodruff2014}{article}{
url = {http://dx.doi.org/10.1561/0400000060},
year = {2014},
volume = {10},
journal = {Foundations and Trends in Theoretical Computer Science},
title = {Sketching as a Tool for Numerical Linear Algebra},
issn = {1551-305X},
number = {1-2},
pages = {1-157},
author = {D.~P.~Woodruff}
}

\bib{Saad2011}{book}{
author = {Y.~Saad},
publisher = {SIAM},
title = {{Numerical Methods for Large Eigenvalue Problems}},
year = {2011}
}

\bib{dkm_matrix2}{article}{
  author =       {P.~Drineas and R.~Kannan and M.~W.~Mahoney},
  title =        {Fast {Monte Carlo} Algorithms for Matrices {II}:Computing a Low-Rank Approximation to a Matrix},
  journal =      {SIAM Journal on Computing},
  year =         {2006},
  volume =       {36},
  number =       {},
  pages =        {158--183},
}

\bib{HMT09_SIREV}{article}{
  author =       {N.~Halko and P.-G.~Martinsson and J.~A.~Tropp},
  title =        {Finding structure with randomness: Probabilistic algorithms for constructing approximate matrix decompositions},
  journal =      {SIAM Review},
  year =         {2011},
  volume =       {53},
  number =       {2},
  pages =        {217--288},
}

\bib{MD2016}{article}{
author = {M.~W.~Mahoney and P.~Drineas},
title = {{Structural Properties Underlying High-quality Randomized Numerical Linear Algebra Algorithms}},
journal = {CRC Handbook on Big Data},
pages = {137-154},
year = {2016},
}

\bib{MM2015}{inproceedings}{
author = {C.~Musco and C.~Musco},
booktitle = {Neural Information Processing Systems (NIPS)},
eprint = {arXiv:1504.05477},
title = {{Stronger and Faster Approximate Singular Value Decomposition via the Block Lanczos Method}},
year = {2015}
}

\bib{DrineasIKM16}{article}{
  author    = {P.~Drineas and I.~Ipsen and E.~Kontopoulou and M.~Magdon{-}Ismail},
  title     = {{Structural Convergence Results for Low-Rank Approximations from Block Krylov Spaces}},
  journal   = {arXiv:1609.00671},
  year      = {2016},
}

\bib{BDM2014}{article}{
author = {C.~Boutsidis and P.~Drineas and M.~Magdon-Ismail},
title = {Near-Optimal Column-Based Matrix Reconstruction},
journal = {SIAM Journal on Computing},
volume = {43},
number = {2},
pages = {687-717},
year = {2014},
}

\bib{Wang2015}{article}{
archivePrefix = {arXiv},
arxivId = {1508.06429},
author = {S.~Wang and Z.~Zhang and T.~Zhang},
eprint = {1508.06429},
journal = {arXiv:1508.06429v2 [cs.NA]},
pages = {1--22},
title = {{Improved Analyses of the Randomized Power Method and Block Lanczos Method}},
url = {http://arxiv.org/abs/1508.06429},
year = {2015}
}

\end{biblist}
\end{bibdiv}

\end{document}